\documentclass[11pt]{article}
\usepackage[utf8]{inputenc}
\usepackage[margin=1in]{geometry}

\usepackage{amsmath, amssymb, amsthm, dsfont, mathtools, physics}
\usepackage{aliascnt}

\usepackage[british,UKenglish]{babel}
\addto\captionsbritish{}
\addto\captionsUKenglish{}

\usepackage{tocloft}

\usepackage{algorithm}
\usepackage{algorithmic}
\usepackage{bm}
\usepackage{xcolor}
\usepackage{hyperref}
\usepackage[noabbrev, nameinlink, capitalise]{cleveref}
\usepackage{graphicx}
\usepackage{authblk}  
\usepackage{pgfplots}
\pgfplotsset{compat=1.18}

\usepackage[numbers,sort&compress]{natbib}

\newtheoremstyle{myplain}{5pt}{5pt}{\itshape}{0pt}{\bfseries}{}{5pt plus 1pt minus 1pt}{}
\newtheoremstyle{mydefinition}{5pt}{5pt}{\normalfont}{0pt}{\bfseries}{}{5pt plus 1pt minus 1pt}{}

\theoremstyle{remark}

\theoremstyle{myplain}

\newtheorem*{thm*}{Theorem}
\newaliascnt{theorem}{thm}
\newtheorem{theorem}[theorem]{Theorem}
\aliascntresetthe{theorem}
\newaliascnt{lemma}{thm}
\newtheorem{lemma}[lemma]{Lemma}
\aliascntresetthe{lemma}
\newaliascnt{corollary}{thm}
\newtheorem{corollary}[corollary]{Corollary}
\aliascntresetthe{corollary}
\newaliascnt{proposition}{thm}
\newtheorem{proposition}[proposition]{Proposition}
\aliascntresetthe{proposition}
\newaliascnt{conjecture}{thm}

\aliascntresetthe{conjecture}
\newtheorem*{prop*}{Proposition}

\crefname{theorem}{theorem}{theorems}
\Crefname{theorem}{Theorem}{Theorems}
\crefname{lemma}{lemma}{lemmas}
\Crefname{lemma}{Lemma}{Lemmas}
\crefname{corollary}{corollary}{corollaries}
\Crefname{corollary}{Corollary}{Corollaries}
\crefname{proposition}{proposition}{propositions}
\Crefname{proposition}{Proposition}{Propositions}
\crefname{conjecture}{conjecture}{conjectures}
\Crefname{conjecture}{Conjecture}{Conjectures}

\theoremstyle{mydefinition}
\newaliascnt{definition}{thm}

\aliascntresetthe{definition}
\crefname{definition}{definition}{definitions}
\Crefname{definition}{Definition}{Definitions}

\theoremstyle{remark}
\newtheorem{ex}{Example}[section]
\newtheorem{rmk}[ex]{Remark}
\newtheorem*{rmk*}{Remark}

\DeclareMathOperator{\N}{\mathbb{N}}

\DeclareMathOperator{\cA}{\mathcal{A}}

\DeclareMathOperator{\cH}{\mathcal{H}}
\DeclareMathOperator{\cI}{\mathcal{I}}

\DeclareMathOperator{\cL}{\mathcal{L}}
\DeclareMathOperator{\cK}{\mathcal{K}}

\DeclareMathOperator{\cO}{\mathcal{O}}
\DeclareMathOperator{\cP}{\mathcal{P}}

\newtheorem{assumption}[theorem]{Assumption}
\DeclareMathOperator{\dist}{dist}
\DeclareMathOperator{\supp}{supp}
\DeclareMathOperator{\diam}{diam}

\DeclareMathOperator{\wt}{wt}

\providecommand{\Tr}[1][]{\operatorname{Tr}%
    \ifx#1\empty \else \left[ #1 \right] \fi}
\providecommand{\tr}[1][]{\operatorname{tr}%
    \ifx#1\empty \else \left[ #1 \right] \fi}

\providecommand{\inorm}[1]{{\left\vert\kern-0.25ex\left\vert\kern-0.25ex\left\vert #1
        \right\vert\kern-0.25ex\right\vert\kern-0.25ex\right\vert}}

\definecolor{TealDark}{HTML}{008080} 
\definecolor{TealMid}{HTML}{66B2B2} 
\definecolor{TealLight}{HTML}{CCE5E5} 

\definecolor{CoralDark}{HTML}{FF6F61} 
\definecolor{CoralMid}{HTML}{FFB8AE} 
\definecolor{CoralLight}{HTML}{FFE4E1} 

\definecolor{GoldenrodDark}{HTML}{DAA520} 
\definecolor{GoldenrodMid}{HTML}{F4D35E} 
\definecolor{GoldenrodLight}{HTML}{FFF3D4} 

\definecolor{SlateBlueDark}{HTML}{6A5ACD} 
\definecolor{SlateBlueMid}{HTML}{A29BDF} 
\definecolor{SlateBlueLight}{HTML}{DCD6F7} 

\definecolor{darkred}{HTML}{8B0000}
\definecolor{ChatGPTOrange}{HTML}{D97706}

\hypersetup{
        colorlinks,
        linkcolor={TealDark},
        citecolor={SlateBlueDark},
        urlcolor={SlateBlueDark}
}

\usepackage{capt-of}
\usepackage{csquotes}
\usepackage{url}

\let\epsilon\varepsilon

\title{\texorpdfstring{Robust Structure Learning of $k$-local Lindbladians}{Robust Structure Learning of $k$-local Lindbladians}}
\date{}
\author[1,2]{Tim M\"{o}bus\thanks{moebustim@gmail.com}}
\author[3]{Thiago Bergamaschi \thanks{thiagob@berkeley.edu}}
\author[4]{Daniel Stilck Fran\c{c}a \thanks{dsfranca@math.ku.dk}}
\author[5]{Cambyse Rouzé\thanks{cambyse.rouze@inria.fr}}

\affil[1]{\small Department of Applied Mathematics and Theoretical Physics, University of Cambridge, United Kingdom}
\affil[2]{\small Department of Mathematics, University of T\"ubingen, Germany}
\affil[3]{\small Department of EECS, UC Berkeley, USA}
\affil[4]{\small Department of Mathematical Sciences, University of Copenhagen, Denmark}
\affil[5]{\small Inria, Télécom Paris - LTCI, Institut Polytechnique de Paris, France}

\begin{document}

\maketitle

\vspace*{-6ex}

\begin{abstract}
    We present an efficient protocol for learning an unknown $k$-local Lindblad generator on $n$ qubits using only product-state preparations, short-time evolution, and single-qubit Pauli measurements, without prior knowledge of the interaction structure. For fixed $k$ and bounded weighted interaction strength, the protocol estimates all Hamiltonian and dissipative Pauli--GKSL coefficients to entrywise accuracy $\varepsilon$ with probability at least $1-\delta$ using $\widetilde{\mathcal O}_k(\varepsilon^{-2}n^{2k}\log(1/\delta))$ samples and polylogarithmically many evolution times. A semidefinite projection converts these estimates into a valid $k$-local Lindblad generator with diamond-norm error at most $\varepsilon$ using $\widetilde{\mathcal O}_k(\varepsilon^{-2}n^{4k}\log(1/\delta))$ samples and polynomial-time classical postprocessing. If a suitable set of influential coefficients is supplied and satisfies a stable sparsity condition, the dependence on $n$ can improve from polynomial to logarithmic; in particular, exact supports of bounded intersection degree require only $\widetilde{\mathcal O}_k(\varepsilon^{-2}\log(n/\delta))$ samples, with analogous reductions in system-size dependence for sufficiently decaying long-range interactions. We also provide a robust structure-learning procedure, extend the guarantees to model misspecification, and prove complementary sample-complexity lower bounds. To our knowledge, these are the first efficient learning guarantees for general $k$-local dissipative quantum dynamics under such limited experimental control.
\end{abstract}

\tableofcontents

\vspace*{\fill}

\section{Introduction}
    Characterizing the dynamics of open quantum systems is a central task in quantum information processing. Realizing advanced applications on modern quantum platforms requires robust validation and benchmarking protocols that scale efficiently alongside increasing qubit numbers. Furthermore, because physical hardware is inevitably coupled to its environment, precisely characterizing these non-unitary dynamics is critical for tailoring quantum error correction, mitigation, and fault-tolerant strategies (all while minimizing experimental and computational overhead). In the Markovian regime, the evolution is described by a quantum dynamical semigroup $e^{t\cL}$, whose generator $\cL$ admits the Gorini--Kossakowski--Sudarshan--Lindblad (GKSL) form \cite{GKS1976,Lindblad1976}. Learning such generators is crucial for applications in noise diagnosis~\cite{eisert2020quantum}, verification of analog quantum simulators~\cite{BaireyAradLindner2019}, error mitigation~\cite{endo2021hybrid}, and the design of dissipative state-preparation protocols~\cite{verstraete2009quantum}.
    
    A general $n$-qubit generator has exponentially many parameters, and full quantum process tomography is therefore infeasible except for very small systems \cite{NielsenChuang2000,ChuangNielsen1997,Poyatos1997}; this is the curse of dimensionality. However, many physically relevant open-system mechanisms, such as dephasing, amplitude damping, or local hopping, are highly localized across the physical lattice~\cite{breuer2002theory}. This motivates the assumption that both the Hamiltonian and the dissipative generators act nontrivially only on subsystems of size at most $k$ qubits, where $k$ is fixed and does not scale with the total number of qubits $n$. In this case the number of relevant coefficients is only polynomial in $n$. While recent breakthroughs have yielded highly efficient routines for learning generators of closed, unitary dynamics~\cite{BakshiLiuMoitraTang2024,StilckFranca.2024,HuangTongFangSu2023,ZubidaYitzhakiLindnerBairey2021,BaireyAradLindner2019,Caro2024PTM,Flynn2022,Gu2024,Holzpfel2015,Moebus.2023,Li2024,Moebus2025Heisenberg,daSilva2011,Wiebe2014}, analogous guarantees for dissipative dynamics remain more restricted; known results apply, for example, to on-site dissipation \cite{StilckFranca.2025}, have sample complexity depending on a design-matrix conditioning factor \cite{IvashkovEtAl2026}, or assume access to the jump operators \cite{Heightman2026}.
    This raises the following question:
    \begin{center}
        \textit{ Can one efficiently and stably reconstruct a physically valid $k$-local Lindbladian from logarithmically short-time experimental data, without prior knowledge of its interaction structure?}
    \end{center}

    As mentioned, previous approaches to this problem frequently relied on prespecified interaction graphs, restrictive ansatz choices, or conditioning assumptions for induced linear systems, which significantly limits their applicability to unknown noise channels in realistic hardware~\cite{BaireyAradLindner2019}. In this work we address the question above by learning a physically valid Kossakowski matrix that approximates, entrywise, the unique Kossakowski matrix appearing in the GKSL representation with respect to the Pauli basis. Moreover, we construct a set of Lindblad operators that define a physical generator of a QMS and approximate the original generator in the diamond norm. Neither solution of the inversion problem requires a prescribed ansatz, prior knowledge of the interaction graph, or a global conditioning assumption on an induced linear system. Furthermore, we achieve this by using a highly restricted experimental suite: the protocol relies solely on initializing random product states in the Pauli basis, allowing the system to undergo brief periods of unmitigated evolution, and performing localized Pauli measurements.

    To establish rigorous sample complexity guarantees for this scalable algorithm, our proof synthesizes two structurally distinct frameworks: a novel variant of the Lieb-Robinson bound (LRB) \cite{lieb1972finite} and the Fierz identity \cite{fierz1937relativistische}. LRBs traditionally bound information velocity in quantum spin systems \cite{lieb1972finite} and are essential for proving correlation clustering \cite{nachtergaele2006lieb} and entanglement area laws \cite{hastings2007area}. In contrast, the Fierz identity originates from high-energy field theory for transforming spinor bilinears \cite{fierz1937relativistische}. To our knowledge, the Fierz identity has never been utilized within Hamiltonian or Lindbladian learning theory. By mapping its algebraic utility onto quantum superoperators and enforcing spatial locality via our tailored LRB, we bridge these distinct tools to provide an analytically rigorous foundation for ansatz-free Lindbladian reconstruction.

    \subsection{Main result}
        Our main result is an efficient algorithm for learning Markovian quantum dynamics generated by a local Lindbladian $\cL$ in the Schr\"{o}dinger picture of the following form: given an interaction hypergraph $G=([n],E)$ encoding the interactions between $n$ qubits, the local generator is
        \begin{align*}
            \cL=\sum_{e\in E}\cL_e\,,
        \end{align*}
        where each local generator, i.e.~hamiltonian and dissipation, $\cL_e:\mathbb{M}_{2^n}\rightarrow \mathbb{M}_{2^n}$ acts nontrivially on qubits within a region $e\subseteq [n]$ of cardinality at most $k=\mathcal{O}(1)$. We represent the same superoperator in the Pauli-superoperator basis as $\cL=\sum_{\mathbf P,\mathbf Q}\chi_{\mathbf P,\mathbf Q}\,\mathbf P\bullet\mathbf Q$, where $(\mathbf P\bullet\mathbf Q)(X)=\mathbf P X\mathbf Q$ for $n$-qubit Pauli string pairs $(\mathbf P, \mathbf Q)\in \cI_k:=\left\{(\mathbf P,\mathbf Q)\in\mathcal P_n\times\mathcal P_n:|\operatorname{supp}(\mathbf P)\cup\operatorname{supp}(\mathbf Q)|\le k\right\}$ and 
        $\cP_n:= \{I,X,Y,Z\}^{\otimes n}$; the coefficient array $\chi$ is the Pauli-basis $\chi$-matrix. Moreover, we assume a constant bound on a weighted version of the maximum number of local terms touching a given site: denoting by $\|\cK\|_{2\to 2}$ the $2\to 2$ Schatten norm of a superoperator $\cK:\mathbb{M}_{2^n}\rightarrow \mathbb{M}_{2^n}$, we assume
        \begin{equation}\label{eq:weightedintersectiondegree}
            \alpha:=\max_{u\in[n]}\sum_{\substack{e\ni u}}\|\cL_e^\dagger\|_{2\to2}=\mathcal{O}(1)\,.
        \end{equation}
        Note that the above normalization can always be enforced by rescaling the time unit. Note that $\alpha=\mathfrak{d}J$ for an interaction hypergraph $G$ with constant degree $\mathfrak{d}$, and $J:=\max_{e\in E}\|\cL^\dagger_e\|_{2\to 2}$. We refer to the parameter $\alpha$ as the weighted intersection strength of $\cL$. The learner is given experimental access to, in precision, logarithmically many timesteps of the dynamics. Our protocols only use simple input states and measurements: each experiment prepares a product of single-qubit Pauli eigenstates, evolves it for a sampled time $t\in[0,(4\alpha k)^{-1}]$, and then measures every qubit independently in a single-qubit Pauli basis.

        \paragraph{Learning $k$-local Lindbladians in diamond norm}
        Our first protocol approximates entrywise the GKSL representation of a $k$-local Lindbladian $\widehat{\cL}$.

        \begin{theorem}[Efficient learning of local Lindbladians (informal; see Theorems \ref{thm:overall-coefficient-learning}, \ref{thm:full-lindblad-learning})]\label{thm:informal-main}
            Fix $k$, and denote by $ \mathcal{L} = \sum_{(\mathbf{P},\mathbf{Q})\in\cI_k} \chi_{\mathbf{P},\mathbf{Q}} \mathbf{P}\bullet \mathbf{Q}$ the expansion of $\cL$ into the Pauli basis. Then there is an algorithm that learns all Pauli--GKSL coefficients $\chi_{\mathbf{P},\mathbf{Q}}$ to entrywise accuracy $\varepsilon_\chi$ with probability at least $1-\delta$ using
            \begin{align*}
                &\widetilde{\mathcal O}_k\!\left({\varepsilon_\chi^{-2}n^{2k}\log(1/\delta)}\right)& \text{ samples, and }&\\
                &\mathcal O_k\!\left(\operatorname{polylog}(1/\varepsilon_\chi)\right)& \text{ timesteps in }[0,(4\alpha k)^{-1}].&
                \intertext{Furthermore, our protocol constructs a valid $k$-local Lindblad generator $\widehat\cL$ with $\|\widehat{\cL}-\cL\|_\diamond\leq\varepsilon_\diamond$  from}
                &\widetilde{\mathcal O}_k\!\left(\varepsilon_\diamond^{-2}{n^{4k}\log(1/\delta)}\right)&\text{ samples.}&
            \end{align*}
            The algorithm uses only product input states and single-qubit Pauli measurements, with all subsequent classical postprocessing running in polynomial time.
        \end{theorem}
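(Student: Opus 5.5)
The plan has three stages: estimate time derivatives at $t=0$ of local Pauli expectation values, invert these derivatives for the $\chi$-coefficients using the Fierz identity, and finally project onto valid generators with a semidefinite program.

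\emph{Stage 1: short-time derivative estimation.} For a product Pauli eigenstate $\rho_{\mathbf s}$ and a Pauli observable $\mathbf O$ with $|\operatorname{supp}\mathbf O|\le k$, consider $f(t)=\Tr[\mathbf O\, e^{t\cL}(\rho_{\mathbf s})]$. Since $f'(0)=\Tr[\mathbf O\,\cL(\rho_{\mathbf s})]$, this derivative is linear in the $\chi_{\mathbf P,\mathbf Q}$. I would estimate $f'(0)$ by polynomial (Chebyshev) interpolation of $f$ at $\mathcal O(\operatorname{polylog}(1/\varepsilon_\chi))$ nodes in $[0,(4\alpha k)^{-1}]$. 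This requires $f$ to be analytic with Taylor coefficients $|f^{(m)}(0)|/m!\le C_k (2\alpha k)^m$ independent of $n$. To get this, I would expand $\cL^{\dagger m}(\mathbf O)$ in nested commutator-like paths of local terms. Each step touches the current support, so by the weighted degree $\alpha$ there are at most $(\alpha(k+mk))^m$ weighted paths. The $m!$ is absorbed by the path-counting bound, in the style of a cluster or Lieb--Robinson estimate; this is where the paper's tailored LRB would enter. The result gives interpolation error $2^{-\Omega(\#\text{nodes})}$. Each $f(t_j)$ is estimated from single-qubit Pauli measurements. I would use random product Pauli inputs and bases in the classical-shadow style, so that one dataset simultaneously serves all $\mathcal O(n^k)$ local observables and inputs. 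Hoeffding plus a union bound, with the $\mathcal O_k(1)$-amplified variance coming from interpolation weights, gives $\widetilde{\mathcal O}_k(\varepsilon^{-2}\log(n/\delta))$ per-quantity cost.

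\emph{Stage 2: inversion via Fierz.} The single-qubit Fierz identity $\sum_{\sigma}\sigma A\sigma = 2\Tr[A]I$, together with its weighted variants, lets one isolate $\chi_{\mathbf P,\mathbf Q}$. I would average $\Tr[\mathbf O\,\cL(\rho_{\mathbf s})]$ with signs over input eigenstates and observables restricted to $S=\operatorname{supp}\mathbf P\cup\operatorname{supp}\mathbf Q$, with identity traced out elsewhere. In this average, all terms not matching $(\mathbf P,\mathbf Q)$ on $S$ cancel, up to trace-preservation constraints linking $\chi$ entries. The gauge is fixed by the unique traceless Pauli--GKSL normalization. The result is an explicit linear combination of $\mathcal O_k(1)$ derivative quantities per coefficient, so entrywise error $\varepsilon_\chi$ needs each derivative accurate to $\varepsilon_\chi/C_k$. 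The claimed $n^{2k}$ arises because contributions from terms overlapping $S$ partially must be subtracted. I would handle this by bounding these through the already-estimated lower-support coefficients, which gives a recursion over support size. Alternatively, a cruder accounting charges the per-sample variance with the $n^{k}$ coefficients that touch the measured region. Either way the count gives $n^{2k}$.

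\emph{Stage 3: projection to a valid generator.} From $\hat\chi$, I would solve an SDP over Hermitian $\chi'$ supported on $\cI_k$. The constraints are that the dissipative block (Kossakowski matrix) be PSD, and that $\chi'$ minimize $\max|\chi'-\hat\chi|$. The true $\chi$ is feasible, so $\|\chi'-\chi\|_\infty\le 2\varepsilon_\chi$. To convert to diamond norm, I would use $\|\mathbf P\bullet\mathbf Q\|_\diamond\le 1$ and sum over the $\mathcal O_k(n^{2k})$ entries in $\cI_k$ (counting pairs), giving $\|\widehat\cL-\cL\|_\diamond\le C_k n^{2k}\varepsilon_\chi$. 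Setting $\varepsilon_\chi=\varepsilon_\diamond n^{-2k}/C_k$ produces $n^{4k}$ in place of $n^{2k}$ only if the first bound is really $n^{0}$ up to logs, so I need to recheck this step. Most likely the $n^{2k}$ factor in Stage 1 is tight while the diamond conversion costs only $n^{k}$ via a blockwise (per-$S$) conversion. Jump operators are read off by diagonalizing the PSD block.

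The main obstacle is Stage 1's $n$-independent derivative bound with the polylogarithmic node count: controlling $\cL^{\dagger m}(\mathbf O)$ growth uniformly over the interval $[0,(4\alpha k)^{-1}]$, i.e.\ the Lieb--Robinson/cluster bound. I would attempt it first by a direct combinatorial path count using $\alpha$.
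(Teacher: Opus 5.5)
Your architecture is the paper's: a Taylor/path-count bound on $e^{t\cL^\dagger}$, process shadows with robust interpolation, Fierz inversion, then an SDP and a coefficient-to-diamond conversion. Your Stage~1 is essentially Lemma~\ref{lem:poly-approx}: the support grows by at most $k-1$ per step, so the path weight is $\prod_{i<m}\alpha(q+i(k-1))\le(\alpha k)^m m!$ and the $m!$ cancels.

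The genuine gap is Stage~2, and it is why your exponents do not reconcile. The local Fierz average on $S=\supp(\mathbf P)\cup\supp(\mathbf Q)$ does not give $\chi_{\mathbf P,\mathbf Q}$ as an $\mathcal O_k(1)$ combination of derivatives. It gives the marginal
\[
\chi^{(S)}_{\mathbf P_S,\mathbf Q_S}=\sum_{\mathbf A_{\overline S}}\chi_{\mathbf P_S\otimes\mathbf A_{\overline S},\,\mathbf Q_S\otimes\mathbf A_{\overline S}} .
\]
Its contaminating terms are \emph{diagonal} extensions with strictly \emph{larger} support union, and there are up to $\sum_{t\ge1}\binom{n-|S|}{t}3^t=\Theta(n^{k-|S|})$ of them. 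There is no gauge or trace-preservation issue, since the expansion in $\{\mathbf P\bullet\mathbf Q\}$ is unique and the Fierz inversion is exact.

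So the recursion must run downward from $|S|=k$ to $|S|=1$, subtracting already-estimated \emph{higher}-support coefficients, not lower-support ones as you wrote. An upward recursion cannot start, because the one-site marginals are contaminated by everything. The error recursion $e_s\le\varepsilon_L+\sum_{t=1}^{k-s}\binom{n-s}{t}3^t e_{s+t}$ gives $e_s=\mathcal O(n^{k-s})\varepsilon_L$, so one takes $\varepsilon_L=\Theta(\varepsilon_\chi/n^k)$. This deterministic error amplification produces the $n^{2k}$ through $\varepsilon_L^{-2}$; the per-shot variance is $n$-independent and plays no role.

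Stage~3 then closes without any blockwise trick once the count is fixed. A pair in $\cI_k$ is determined by its union support and two Pauli strings on it, so $|\cI_k|\le\sum_{s\le k}\binom ns 16^s=\mathcal O_k(n^k)$, not $n^{2k}$. Since $\|[\mathbf P,\bullet]\|_\diamond\le2$ and $\|\mathbf P\bullet\mathbf Q-\frac12\{\mathbf Q\mathbf P,\bullet\}\|_\diamond\le2$, Lemma~\ref{lem:coeff-to-diamond} gives $\|\widehat\cL-\cL\|_\diamond=\mathcal O(n^k\varepsilon_\chi)$. Choosing $\varepsilon_\chi=\Theta(\varepsilon_\diamond/n^k)$ turns $n^{2k}\varepsilon_\chi^{-2}$ into $n^{4k}\varepsilon_\diamond^{-2}$.

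Two further points on the SDP. First, if ``the dissipative block'' means a global PSD Kossakowski matrix with the $\cI_k$ sparsity pattern, the output need not have local jump operators. For $k=2$, the rank-one matrix of the jump operator $X_1+X_2+X_3$ has that pattern. But writing it as $\sum_e\iota_e(X^e)$ with $X^e\succeq0$ and $|e|\le2$ forces each block on $\{X_i,X_j\}$ to have off-diagonal entry $1$, hence diagonal sum at least $2$ by AM--GM, for a total of at least $6$. Meanwhile $G_{X_i,X_i}=1$ caps those six diagonal entries at a total of $3$. You must therefore project onto the cone of local PSD blocks, as the paper does, and diagonalize each block separately. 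Second, for a polynomial-time interior-point guarantee you need a compact feasible set with explicit Slater points. The paper obtains this from a trace budget $\sum_e\operatorname{Tr}X^e\le(4^k-1)\max\{\alpha,1\}\,|\mathcal R_{n,\le k}|$, justified by $|X^e_{\mathbf P,\mathbf Q}|\le\|\cL_e^\dagger\|_{2\to2}$, together with box constraints.
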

        \noindent Moreover, our $k$-local protocol requires no prescribed ansatz, no prior knowledge of the interaction hypergraph, and no global conditioning assumption on a design matrix. Moreover, the weighted-degree condition is flexible enough to cover both geometrically local models (such as those defined on regular lattices with short-range interactions) and dense, long-range models with mean-field-type scaling, since the protocol depends only on the weighted interaction strength localized at each site, ensuring a well-defined and constant energy density in both regimes.

        \paragraph{Improved parameter learning under sparsity assumption}
        The $n^{2k}$ factor in the general coefficient-learning bound comes from estimating and inverting over all $k$-local Pauli pairs.  If additional structure is known, the inversion only has to visit the corresponding local regions.  For example, suppose the interaction hypergraph has known edge set $E\subseteq\mathcal R_{n,\le k}=\{A\subset [n]\,||\, |A|\leq k\}$, and define its downward closure by
        \begin{equation*}
            \downarrow\!\! E := \{S\subseteq[n]:\ S\subseteq e\text{ for some }e\in E\}.
        \end{equation*}
        Only coefficients whose support union lies in $\downarrow\!\!E$ can be nonzero, so the local inversion may be restricted to these regions.  In particular, for bounded-degree known hypergraphs, this gives a coefficient-learning sample complexity of $\widetilde{\mathcal O}_{k} (\varepsilon_\chi^{-2}\log(n/\delta))$.

        We formulate the improvement more generally in terms of a supplied influential support. Let
        \begin{equation*}
            \mathcal I_k:=\left\{(\mathbf P,\mathbf Q)\in\mathcal P_n\times\mathcal P_n:|\operatorname{supp}(\mathbf P)\cup\operatorname{supp}(\mathbf Q)|\le k\right\},\qquad \cI_k^\circ:=\cI_k\backslash (I,I)
        \end{equation*}
        and let $\Omega\subseteq\mathcal I_k^\circ$ be a set of relevant coefficient indices, for instance the threshold support denoted by $\Omega_\tau=\{(\mathbf P,\mathbf Q)\in\mathcal I_k^\circ:|\chi_{\mathbf P,\mathbf Q}|>\tau\}$ for some $\tau>0$.  For any $S\subseteq[n]$, $\overline{S}=[n]\backslash S$ and the restricted Pauli strings to the subset $S$ denoted by $\mathbf P_S,\mathbf Q_S\in\mathcal P_S$, define the set of influential diagonal extensions
        \begin{equation*}
            \operatorname{Ext}_{\Omega}(S,\mathbf P_S,\mathbf Q_S):=\left\{\mathbf A_{\overline S}\in\mathcal P_{\overline S}\setminus\{I_{\overline S}\}:(\mathbf P_S\otimes\mathbf A_{\overline S},\mathbf Q_S\otimes\mathbf A_{\overline S})\in\Omega\right\}
        \end{equation*}
        as well as the $\Omega$-diagonal-extension degree
        \begin{equation*}
            \mathfrak d_\Omega
            :=
            \max_{S,\mathbf P_S,\mathbf Q_S}
            \left|
            \operatorname{Ext}_{\Omega}(S,\mathbf P_S,\mathbf Q_S)
            \right|\qquad \text{ and }\qquad D_\Omega:=\sum_{\ell=0}^{k}\mathfrak d_\Omega^\ell\,.
        \end{equation*}
        We also define the unresolved diagonal tail
        \begin{equation*}
            \rho_\Omega
            :=
            \max_{S,\mathbf P_S,\mathbf Q_S}
            \sum_{\substack{\mathbf A_{\overline S}\in\mathcal P_{\overline S}\setminus\{I_{\overline S}\}:\\
            (\mathbf P_S\otimes\mathbf A_{\overline S},\mathbf Q_S\otimes\mathbf A_{\overline S})\notin\Omega}}
            \left|
            \chi_{\mathbf P_S\otimes\mathbf A_{\overline S},
            \mathbf Q_S\otimes\mathbf A_{\overline S}}
            \right|.
        \end{equation*}
        In the threshold–support case $\Omega \equiv \Omega_\tau$, we write  $\mathfrak{d}_\tau \equiv \mathfrak{d}_{\Omega_\tau}$, $\rho_\tau \equiv \rho_{\Omega_\tau}$ and $D_\tau \equiv D_{\Omega_\tau}$.

        \begin{theorem}[Structure-aware coefficient learning (informal; see Theorem~\ref{thm:overall-coefficient-learning})]
            \label{thm:informal-structure-aware}
            Fix $k$. Suppose that, for a threshold $\tau>0$, a candidate set $\Omega$ of influential $\chi$-coefficients is given, and assume that it contains all coefficients above a threshold $\tau$:
            \begin{equation*}
                |\chi_{\mathbf P,\mathbf Q}|>\tau\quad\Longrightarrow\quad(\mathbf P,\mathbf Q)\in\Omega .
            \end{equation*}
            Let $D_\Omega$ denote the corresponding diagonal-extension amplification factor and let $\rho_\Omega$ be the unresolved diagonal tail left outside $\Omega$. If
            \begin{align}\label{eq:cond1}
                \tau+D_\Omega\,\rho_\Omega\le \varepsilon_\chi ,
            \end{align}
            then the coefficients in $\Omega$, and hence all coefficients up to entrywise error $\varepsilon_\chi$, can be learned from
            \begin{align*}
                &\widetilde{\mathcal O}_{k}\!\left(D_\Omega^2\varepsilon_\chi^{-2}\log(|\Omega|/\delta)\right)&\text{ samples, and}\\
                &\mathcal O_k(|\Omega|)&\text{ classical postprocessing time}.
            \end{align*}
        \end{theorem}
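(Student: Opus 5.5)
We need to plan a proof of the structure-aware learning theorem. The plan has two halves, statistical and algebraic, followed by a short error accounting.

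\emph{Statistical half.} For each $k$-local region $S$ and each pair of local Pauli labels, I would extract the short-time derivative
\[
\partial_t \tr[\mathbf O_S e^{t\cL}(\rho_S\otimes\rho_{\overline S})]\big|_{t=0}.
\]
Here $\rho_S$ is a product of Pauli eigenstates on $S$, $\rho_{\overline S}$ is a random product state on $\overline S$, and $\mathbf O_S$ is a product Pauli observable on $S$. The derivative is obtained by polynomial (Chebyshev-type) interpolation of the time series on $\mathcal O_k(\operatorname{polylog}(1/\varepsilon_\chi))$ times in $[0,(4\alpha k)^{-1}]$. The weighted-degree assumption, via the Lieb--Robinson-type bound developed earlier, controls the derivatives $\partial_t^m$ of local expectations by $m!(c\alpha k)^m$. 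This gives analyticity in a disc of radius of order $(\alpha k)^{-1}$, so interpolation from single-qubit-measurement estimates converges exponentially.

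Randomized measurements on product Pauli states yield all $k$-local marginals simultaneously; this is the classical-shadow style argument. Each local quantity is therefore estimated to accuracy $\eta$ from $\widetilde{\mathcal O}_k(\eta^{-2}\log(N/\delta))$ samples, where $N$ is the number of quantities needed. We only need those indexed by regions touched by $\Omega$, so $N=\mathcal O_k(|\Omega|)$. A union bound then gives the $\log(|\Omega|/\delta)$ factor.

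\emph{Algebraic half.} I would use the Fierz identity to write the measured local derivative as a linear combination of $\chi$-coefficients. Reorganising $\tr[\mathbf O_S\,\mathbf P X\mathbf Q]$ over Pauli product inputs and averaging over the random environment isolates the following quantity:
\[
\chi_{\mathbf P_S,\mathbf Q_S}+\sum_{\mathbf A_{\overline S}\neq I}\chi_{\mathbf P_S\otimes\mathbf A_{\overline S},\,\mathbf Q_S\otimes\mathbf A_{\overline S}}.
\]
Non-diagonal environment terms average to zero, since $\tr[\mathbf A\rho\mathbf B]$ vanishes in expectation for $\mathbf A\neq\mathbf B$ over random Pauli eigenstates. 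Only diagonal extensions $\mathbf A\bullet\mathbf A$ survive, and these act trivially up to sign.

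The inversion proceeds by recursion from largest supports ($|S|=k$, which have no $k$-local extensions) down to smaller ones. At each level we subtract the already-estimated extension coefficients lying in $\Omega$. Extensions outside $\Omega$ are dropped, which incurs bias at most $\rho_\Omega$ per level.

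\emph{Error propagation.} Let $e_\ell$ be the error at support size $\ell$. Then
\[
e_\ell\le \eta+\rho_\Omega+\mathfrak d_\Omega \max_{\ell'>\ell} e_{\ell'}.
\]
Unrolling over the at most $k$ levels gives total error $\le D_\Omega(\eta+\rho_\Omega)$.

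Coefficients not in $\Omega$ are set to $0$, with error $\le\tau$ by the hypothesis on $\Omega$. Choosing $\eta=\varepsilon_\chi/(2D_\Omega)$ and using \eqref{eq:cond1}, after adjusting constants, gives entrywise error $\le\varepsilon_\chi$. The sample count is then $\widetilde{\mathcal O}_k(D_\Omega^2\varepsilon_\chi^{-2}\log(|\Omega|/\delta))$. The recursion touches each element of $\Omega$ a constant ($k$-dependent) number of times, giving $\mathcal O_k(|\Omega|)$ postprocessing.

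\emph{Main obstacle.} The delicate step is the Fierz-based isolation. One must show that the environment average kills all non-diagonal extensions exactly, and that the recursion closes. The latter means each diagonal extension is itself an index whose own equation lives at a larger support, so the triangular structure holds. I would first verify this on $|S|=k-1$ by direct Pauli-algebra computation, then induct.
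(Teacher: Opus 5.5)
Your proposal is correct and is essentially the paper's proof. You estimate the local PTM entries $L_{\mathbf A_S\otimes I_{\overline S},\mathbf B_S\otimes I_{\overline S}}$ by product-Pauli process shadows plus short-time derivative extraction. You then apply the local Fierz inversion to get the diagonal-extension marginal $\chi^{(S)}_{\mathbf P_S,\mathbf Q_S}=\sum_{\mathbf A_{\overline S}}\chi_{\mathbf P_S\otimes\mathbf A_{\overline S},\mathbf Q_S\otimes\mathbf A_{\overline S}}$; your environment average is exactly the paper's reduced generator with maximally mixed complement. Finally you run the descending recursion restricted to $\Omega$ with $e_s\le\varepsilon_L+\rho_\Omega+\mathfrak d_\Omega\max_{t>s}e_t$, $\varepsilon_L=\Theta(\varepsilon_\chi/D_\Omega)$, and an $\mathcal O_k(|\Omega|)$ query list.

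The differences are cosmetic. The paper extracts the derivative via robust polynomial regression and Markov brothers' inequality, which makes explicit the $\mathcal O(d^2/T)$, i.e. polylogarithmic, noise amplification that you leave implicit. Also, the marginal identity holds for all $S$ at once by a single partial-trace computation, with coefficient $+1$ since Paulis square to the identity (so there is no sign), and no separate induction is needed there.
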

        Note that although $\chi_{I,I}$ is excluded from the above result, the model remains complete: trace preservation determines it as a linear combination of the estimated $\chi$-coefficients. The assumption in Equation \eqref{eq:cond1} holds in two representative regimes:
        \begin{itemize}
            \item First, exact threshold-sparsity makes the unresolved tail vanish; if the support is contained in a known $k$-local hypergraph with \textbf{bounded intersection degree}, then the diagonal-extension amplification is bounded only in terms of $k$ and the intersection degree, giving Corollary~\ref{cor:bounded-intersection-exact-support}.
            \item Second, approximate sparsity is allowed when each diagonal-extension fiber has \textbf{algebraically decaying coefficients} with exponent $p>k+1$: choosing the threshold balances truncation bias against recursive amplification, as quantified in Corollary~\ref{cor:algebraic-diagonal-extension-tails}.
        \end{itemize}
        \paragraph{Structure learning.}
        Learning the support $\Omega$ of influential parameters is a separate task, often called structure learning. The following statement can be viewed as a Lindbladian analogue of the sparse Hamiltonian structure-learning result of \cite{BakshiLiuMoitraTang2024}:

        For structure learning we use a guard band around the desired decision threshold.  Given a threshold $\lambda>0$ and margin $0<\gamma<\lambda$, define
        \begin{equation*}
            \Omega_-:=\{(\mathbf P,\mathbf Q)\in\mathcal I_k^\circ:\ |\chi_{\mathbf P,\mathbf Q}|>\lambda-\gamma\},
            \qquad
            \Omega_+:=\{(\mathbf P,\mathbf Q)\in\mathcal I_k^\circ:\ |\chi_{\mathbf P,\mathbf Q}|>\lambda+\gamma\}.
        \end{equation*}
        Let
        \begin{equation*}
            \mathfrak d_-:=\mathfrak d_{\Omega_-},\qquad
            D_-:=D_{\Omega_-}=\sum_{\ell=0}^{k}\mathfrak d_-^\ell,
            \qquad
            \rho_+:=\rho_{\Omega_+}.
        \end{equation*}
        Thus $D_-$ is the worst-case amplification factor for false positives through the lower-threshold extension graph, while $\rho_+$ is the total weight of diagonal extensions not captured by the upper-threshold support. In the exact threshold case, where every coefficient outside $\Omega_+$ is exactly zero, one has $\rho_+=0$, so the robustness condition $D_-\rho_+\le\gamma/2$ below is automatically satisfied.  The same is true in a local or sparse model when the exact allowed support is known and the recursion is restricted to that support; then only $D_-$ remains, measuring the local branching of the inversion.

        \begin{theorem}[Guarded threshold structure learning under stable diagonal sparsity (informal; see Corollary~\ref{cor:threshold-structure-learning} and Theorem~\ref{thm:overall-coefficient-learning})]\label{thm:informal-structure-learning}
            Fix $k$, a decision threshold $\lambda>0$, and a margin $0<\gamma<\lambda$. Assume that the guard-band quantities just defined satisfy
            \begin{equation*}
                D_-\rho_+\le \frac{\gamma}{2}.
            \end{equation*}
            Then there is an exhaustive structure-learning algorithm which, with probability at least $1-\delta$, outputs a candidate support $\widehat\Omega_\lambda\subseteq\mathcal I_k^\circ$ such that
            \begin{equation*}
                |\chi_{\mathbf P,\mathbf Q}|> \lambda+\gamma
                \quad\Longrightarrow\quad
                (\mathbf P,\mathbf Q)\in\widehat\Omega_\lambda,
            \end{equation*}
            and
            \begin{equation*}
                |\chi_{\mathbf P,\mathbf Q}|< \lambda-\gamma
                \quad\Longrightarrow\quad
                (\mathbf P,\mathbf Q)\notin\widehat\Omega_\lambda .
            \end{equation*}
            It uses
            \begin{align*}
                &\widetilde{\mathcal O}_k\left(
                \frac{D_-^2}{\gamma^2}\log\frac{n}{\delta}
                \right)&\text{ samples and}\\
                &\mathcal O_k(n^k)& \text{classical postprocessing time.}
            \end{align*}
            In particular, in the exact threshold case $\rho_+=0$.  Thus, for a bounded-degree local or sparse support where $D_-=\mathcal O_k(1)$, the support above threshold can be learned with $\widetilde{\mathcal O}_k(\gamma^{-2}\log(n/\delta))$ samples.
        \end{theorem}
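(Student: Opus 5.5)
The proof reduces structure learning to a single run of the coefficient-learning routine behind Theorem~\ref{thm:informal-structure-aware}, made robust to an unknown support, followed by hard thresholding at $\lambda$. Concretely, the algorithm performs the recursive local inversion over the $k$-local regions. It starts from the short-time, Lieb--Robinson-localised estimates of the local derivative quantities obtained from product-state preparations and single-qubit Pauli measurements. It then peels off the diagonal-extension contributions $\chi_{\mathbf P_S\otimes\mathbf A_{\overline S},\mathbf Q_S\otimes\mathbf A_{\overline S}}$ (via the Fierz identity) level by level in $|S|$. The key design choice is how the recursion is restricted at each level. When we correct a local quantity on $(S,\mathbf P_S,\mathbf Q_S)$, we subtract only those extensions whose current estimate exceeds $\lambda-\gamma/2$. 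This restriction is data-driven and requires no prior knowledge of $\Omega$. Finally we output $\widehat\Omega_\lambda=\{(\mathbf P,\mathbf Q):|\widehat\chi_{\mathbf P,\mathbf Q}|>\lambda\}$.

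The analysis proceeds in the following order.

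First, by the concentration statement underlying Theorem~\ref{thm:overall-coefficient-learning}, all $\mathcal O_k(n^{2k})$ raw local estimates are accurate to within $\eta$ simultaneously with probability $1-\delta$. The sample cost is $\widetilde{\mathcal O}_k(\eta^{-2}\log(n/\delta))$, and the logarithm comes from a union bound together with the fact that randomised product-state and Pauli-basis experiments estimate all $k$-local observables in parallel. The postprocessing cost is $\mathcal O_k(n^k)$, since each region is visited once.

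Second, we prove by induction on the level $\ell=k,k-1,\dots$ (from larger supports to smaller ones) the invariant that the subtracted set at each node is sandwiched between $\Omega_+$ and $\Omega_-$. The invariant is needed only on the fibres actually reached by the recursion. Its content is that every coefficient above $\lambda+\gamma$ is subtracted, and nothing below $\lambda-\gamma$ is subtracted. Given the invariant, the error at a node of level $\ell$ is at most $\eta$ plus the errors of at most $\mathfrak d_-$ subtracted children, plus the unsubtracted true mass. The unsubtracted mass consists of coefficients outside $\Omega_+$ and is therefore bounded by $\rho_+$. Unrolling the recursion over $k$ levels gives an error of at most $D_-(\eta+\rho_+)$. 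Choosing $\eta=\gamma/(4D_-)$ and using the hypothesis $D_-\rho_+\le\gamma/2$ then yields $\|\widehat\chi-\chi\|_\infty<\gamma$ on all reached entries.

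Third, this accuracy closes the induction. An entry with $|\chi|>\lambda+\gamma$ has an estimate exceeding $\lambda+\gamma/2$, so it is subtracted. An entry with $|\chi|<\lambda-\gamma$ has an estimate below $\lambda-\gamma/2$, so it is not subtracted. The same two inequalities applied with the final threshold $\lambda$ give the two implications of the statement. Substituting the choice of $\eta$ gives the sample bound $\widetilde{\mathcal O}_k(D_-^2\gamma^{-2}\log(n/\delta))$.

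The main obstacle is the second step. The sandwich invariant and the error bound must be established jointly, because the branching bound $\mathfrak d_-$ is valid only if no coefficient outside $\Omega_-$ has been admitted. The accuracy needed to guarantee this is in turn the conclusion being proved at deeper levels. I would resolve the circularity by ordering the induction so that a node's estimate depends only on strictly deeper nodes. The constants must also be tracked carefully: the half-margins $\gamma/2$ have to absorb both the statistical error $D_-\eta$ and the tail $D_-\rho_+$. If the recursion amplifies more than linearly in $D_-$, the constant in $\eta$ would need to be adjusted accordingly.
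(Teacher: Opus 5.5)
\textbf{Verdict.} Your architecture is the paper's: descending peeling over support size, data-dependent acceptance of diagonal extensions, a joint induction with the sandwich invariant $\Omega_+\subseteq\text{(subtracted)}\subseteq\Omega_-$, and the recursion $e_s\le\eta+\rho_++\mathfrak d_-\max_{s'>s}e_{s'}$ giving $D_-(\eta+\rho_+)$. This is exactly Algorithm~\ref{alg:sparse-diagonal-chi-recovery} and Lemma~\ref{lem:sparse-diagonal-inversion}. The proof nevertheless breaks at one step, caused by placing the subtraction threshold at $\lambda-\gamma/2$ while the output threshold is $\lambda$.

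\textbf{Where it breaks.} For $\mathfrak d_-$ to bound the number of subtracted children, every subtracted extension must lie in $\Omega_-=\{|\chi|>\lambda-\gamma\}$. With subtraction threshold $\lambda-\gamma/2$ and worst-case error $e$, a subtracted entry is only guaranteed $|\chi|>\lambda-\gamma/2-e$, so you need $e\le\gamma/2$. Your bound gives only $e\le D_-\eta+D_-\rho_+\le\gamma/4+\gamma/2=3\gamma/4$. Moreover, in the worst case allowed by the hypothesis ($D_-\rho_+=\gamma/2$), the bound $D_-(\eta+\rho_+)$ exceeds $\gamma/2$ for every $\eta>0$. So the sentence ``an entry with $|\chi|<\lambda-\gamma$ has an estimate below $\lambda-\gamma/2$'' does not follow: an error of $3\gamma/4$ only places the estimate below $\lambda-\gamma/4$. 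The upper implication survives, since an error of $3\gamma/4$ still puts the estimate above $\lambda+\gamma/4>\lambda-\gamma/2$, although ``exceeding $\lambda+\gamma/2$'' is also not what your numbers give. Once a coefficient outside $\Omega_-$ can be subtracted, it is not counted by $\mathfrak d_-$. The number of propagated estimates entering a marginal is then uncontrolled, so the error recursion, and with it the induction, is unjustified.

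\textbf{The fix.} Subtract exactly the coefficients you output, i.e.\ use the single threshold $\lambda$, which sits at the centre of the guard band $[\lambda-\gamma,\lambda+\gamma]$. This is what the paper does. Then acceptance gives $|\chi|>\lambda-e\ge\lambda-\gamma$, and $|\chi|>\lambda+\gamma$ gives $|\widetilde\chi|>\lambda+\gamma-e\ge\lambda$. So $e\le\gamma$ suffices, and $D_-(\eta+\rho_+)\le\gamma$ holds with $\eta=\gamma/(2D_-)$ (your $\gamma/(4D_-)$ works too). The sample bound then follows as you describe. Alternatively, you could keep the threshold $\lambda-\gamma/2$, but only under a stronger hypothesis with slack, e.g.\ $D_-\rho_+\le\gamma/4$.

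\textbf{Minor points.}
\begin{itemize}
\item The paper's PTM-derivative estimates come from the Taylor/operator-growth bound (Lemma~\ref{lem:poly-approx}) together with robust regression and Markov's inequality, not from a Lieb--Robinson localisation.
\item The number of local PTM pairs is $\mathcal O_k(n^k)$ rather than $n^{2k}$; this is immaterial inside the logarithm.
\item The amplification is linear in $D_-$ by the definition of $D_-$, so no further adjustment of $\eta$ is needed.
\end{itemize}
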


        \paragraph{Sample-complexity lower bounds}
        The sample complexity for learning in diamond norm derived in Theorem \ref{thm:informal-main} is close to the recently derived lower bound $n^{\Omega(k)}/\varepsilon_1$ for single-parameter learning of $k$-local Hamiltonians \cite{chen2025lower} to error $\varepsilon_1$, while our upper bound for $\chi$-entry learning seems to beat the latter in the bounded intersection degree case. In that regime, our bound is closer to the one $\Omega(\varepsilon_1^{-1})$ derived in \cite{HuangTongFangSu2023}. In Section \ref{sec:lower-bound-product-measurements}, we tighten these results under our restricted access model:

        \begin{theorem}[Lower bounds for Lindbladian learning (informal; see Theorems \ref{thm:single-chi-lower-bound}, \ref{thm:diamond-lower-bound})]
            \label{thm:diamond-lower-bound-product-measurementsintro}
            Consider any adaptive learning algorithm using product input states, evolution times $t\le t_{\max}$, and tensor products of single-qubit measurements. Then for any $k$-local unknown Lindbladian the protocol must require
            \begin{align*}
                &\Omega_k\left(
                \frac{1}{t_{\max}^2\varepsilon_1^2}\right)&\text{ samples for learning a single entry $\chi_{\mathbf{P},\mathbf{Q}}$ to $\varepsilon_1$-precision, and }& \\
                &\Omega_k\left(
                \frac{n^k}{t_{\max}^2\varepsilon_\diamond^2}\right)& \text{ samples for $\varepsilon_\diamond$-diamond norm recovery.}&
            \end{align*}

        \end{theorem}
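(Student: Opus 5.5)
The plan is to prove both bounds by Le Cam / Fano-type arguments, bounding the information gained per experiment through the small-time structure of the channel $e^{t\cL}$. The two ingredients are:
\begin{enumerate}
\item a family of hard Lindbladians whose $\chi$-coefficients (respectively diamond-norm distances) are well separated;
\item a per-sample information bound showing that any single experiment distinguishes two members of the family with $\mathcal O(t_{\max}^2\varepsilon^2)$ divergence.
\end{enumerate}
Adaptivity is handled by the chain rule for KL divergence. Each round the learner chooses a product input $\rho=\bigotimes_i\rho_i$, a time $t\le t_{\max}$, and a product POVM $M=\bigotimes_i M_i$; this choice is a measurable function of past outcomes. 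Summing the per-round conditional divergences therefore gives total divergence at most $N\cdot\sup_{\rho,t,M}D(p_0\|p_1)$.

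For the single-entry bound we take $\cL_0$ and $\cL_1=\cL_0+\varepsilon_1\mathcal D$, where $\mathcal D$ is a $k$-local perturbation changing $\chi_{\mathbf P,\mathbf Q}$ by $\Theta(\varepsilon_1)$. For a diagonal entry, a natural choice is a depolarizing-type term. We choose $\cL_0$ to be full-rank mixing on the relevant $k$ qubits, for example strong local depolarizing with rate $\Theta(\alpha)$, so that outcome probabilities stay bounded away from zero. Then $\|e^{t\cL_1}-e^{t\cL_0}\|_\diamond\le t\varepsilon_1\|\mathcal D\|_\diamond$ by Duhamel. The $\chi^2$-divergence of the measured distributions is bounded by $\sum_x (p_1(x)-p_0(x))^2/p_0(x)$. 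The numerator is $\mathcal O(t^2\varepsilon_1^2)$, and the denominator is lower bounded using the mixing of $\cL_0$ (alternatively, we use the Bures/fidelity bound: the squared Hellinger distance between outputs is $\mathcal O(t^2\varepsilon_1^2)$ when the output of $\cL_0$ has full support, with $\lambda_{\min}\gtrsim$ const). Le Cam then gives $N=\Omega_k(1/(t_{\max}^2\varepsilon_1^2))$.

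For the diamond bound we use a Fano/Assouad construction over the $\binom{n}{k}\approx n^k$ disjoint-ish $k$-subsets $e$. We take $\cL_\sigma=\cL_0+\frac{c\,\varepsilon_\diamond}{\sqrt{\,\cdot\,}}\sum_e\sigma_e\mathcal D_e$ with signs $\sigma\in\{\pm1\}^{E}$, scaled so that the weighted degree $\alpha$ is kept constant. By Gilbert–Varshamov we pick $2^{\Omega(n^k)}$ sign patterns with pairwise Hamming distance $\Omega(n^k)$. The key claim is that differing in $m$ local terms yields diamond distance $\gtrsim\varepsilon_\diamond$, witnessed by a product/entangled test input across those terms. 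With amplitudes $\eta$ per term, the diamond separation is roughly $\eta\, m$ for commuting, disjoint-support perturbations, so we set $\eta\asymp\varepsilon_\diamond/n^k$ up to the per-site normalization.

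The main obstacle is the information side: a product measurement at time $t$ sees all terms simultaneously. We must show that the KL divergence per sample between $\sigma,\sigma'$ is $\mathcal O(t^2\eta^2\cdot m)$, i.e. additive rather than quadratic in $m$. The plan here is to exploit that first-order contributions of different $\mathcal D_e$ to a product measurement distribution are nearly orthogonal in $\chi^2$ metric around the fully mixed reference (locality plus a Lieb–Robinson truncation at times $t\le(4\alpha k)^{-1}$). Fano then yields $N\gtrsim n^k/(t_{\max}^2 n^{2k}\eta^2)$, which equals $\Omega_k(n^k/(t_{\max}^2\varepsilon_\diamond^2))$ after the scaling. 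If the orthogonality fails, we fall back to Assouad with single-coordinate flips, each costing $t^2\eta^2$, which gives the same rate.
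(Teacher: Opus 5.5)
\textbf{Single entry.} Your per-sample bound assumes that a local depolarizing background of rate $\gamma=\Theta(\alpha)$ keeps the outcome probabilities, or $\lambda_{\min}$ of the output, bounded below by a constant. That fails at the short times the learner uses. For a pure product input, $e^{t\mathcal L_0}(\rho)$ is within $O(\gamma t)$ of $\rho$, so some outcomes have probability only $\Theta(\gamma t)$. A dissipative perturbation of strength $\varepsilon_1$ shifts exactly those probabilities by $\Theta(\varepsilon_1 t)$. The resulting $\chi^2$ (and KL) is therefore $\Theta(\varepsilon_1^2 t/\gamma)$, not $O(\varepsilon_1^2t^2)$. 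So when $\alpha t_{\max}\ll1$ your family is genuinely easier: counting flip events distinguishes it with roughly $\gamma/(t_{\max}\varepsilon_1^2)$ shots. With no background at all, the squared Hellinger distance is even linear in $\varepsilon_1 t$. You could repair this by taking $\gamma\gtrsim1/t_{\max}$ and redoing the estimate with $\lambda_{\min}\gtrsim\gamma t$. The paper avoids the issue by perturbing a Hamiltonian coefficient instead. For $\mathcal L_\theta=-i[\theta\mathbf R,\cdot\,]$ with $\theta=\pm2\varepsilon_1$, Uhlmann's theorem gives $F\ge|\operatorname{Tr}(\rho\,e^{-it\Delta\mathbf R})|^2\ge\cos^2(t\Delta)$ for \emph{every} input, with no full-rank condition. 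Multiplicativity of the Bhattacharyya coefficient along the adaptive transcript then yields $\operatorname{TV}\le4\sqrt N\,t_{\max}\varepsilon_1$.

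\textbf{Diamond norm.} This part does not hold together, for four reasons.

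First, the separation is wrong. The $\binom nk$ subsets overlap, so the disjoint-support heuristic ``separation $\approx\eta m$'' does not apply. For commuting Hamiltonian terms $\pm\eta Z_S$ you only get $\|\mathcal L_v-\mathcal L_w\|_\diamond\ge\|\Delta H\|\ge2\eta\sqrt{d_H(v,w)}$. For a typical pair of random (Gilbert--Varshamov) codewords, $\|\Delta H\|=O(\eta\sqrt{mn})\ll\eta m$ once $k\ge2$. So the amplitude must be $\eta\asymp\varepsilon_\diamond/\sqrt D$, not $\varepsilon_\diamond/n^k$. With dissipative $\mathcal D_e$ a linear separation is attainable, but then, as in the single-entry case, a flip costs $\Theta(\eta^2t/\eta')$ rather than $O(\eta^2t^2)$ unless every subset carries background rate $\eta'\gtrsim1/t_{\max}$.

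Second, additivity of the per-sample KL is only asserted. ``Near-orthogonality around the maximally mixed reference'' says nothing about the pure product inputs the learner chooses. The natural bound $1-F\lesssim t^2\mathrm{Var}_\rho(\Delta H)$ is not additive over overlapping supports: with aligned signs and tilted product inputs it is of order $m^2/n$.

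Third, even granting KL $=O(t^2\eta^2m)$ with $m\le D$, Fano gives $N\gtrsim1/(t_{\max}^2\eta^2)$. Your expression $n^k/(t_{\max}^2n^{2k}\eta^2)$ silently uses a KL quadratic in $m$, so the correct rate appears only because two errors cancel.

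Fourth, the fallback is right but its content is missing. Assouad with single-coordinate flips is the correct route and is exactly the paper's, but it needs three things you have not supplied:
\begin{itemize}
\item commutativity of the $Z_S$, so the shared terms cancel by unitary invariance and one flip gives $\operatorname{BC}\ge\cos(2t\delta)$ for any product input;
\item the square-root separation combined with a nearest-neighbour decoder, which at $\delta=4\varepsilon_\diamond/\sqrt D$ forces $\mathbb E\,d_H(\widehat V,V)\le3D/8$;
\item the mixture and joint-convexity step reducing $\operatorname{TV}(Q_j^+,Q_j^-)$ to a fixed $v_{-j}$.
\end{itemize}
Together these give $N=\Omega(D/(t_{\max}^2\varepsilon_\diamond^2))$. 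Note also that this packing has $\alpha=\Theta_k(\varepsilon_\diamond n^{k/2-1})$, so it does not keep $\alpha$ constant for $k>2$, contrary to your scaling claim.
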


    \subsection{Technical overview}
        The proof of Theorem \ref{thm:informal-main} has three components.

        \smallskip

        \noindent
        \textbf{Estimating PTM entries from logarithmically scaled short-time data.} For Pauli strings $\mathbf P,\mathbf Q\in\mathcal{P}_n$, the relevant Pauli-transfer-matrix (PTM) entry is the derivative
        \begin{equation*}
            L_{\mathbf P,\mathbf Q}=\frac{d}{dt}\biggr|_{t=0}2^{-n}\Tr(\mathbf P e^{t\cL}(\mathbf Q))\,.
        \end{equation*}
        For a target precision $\varepsilon_{L}$, we estimate all the time-dependent overlaps $2^{-n}\Tr\!\left(\mathbf P e^{t\cL}(\mathbf Q)\right)$ for constant-weight Pauli strings $\mathbf{P},\mathbf{Q}$ at $m=\mathcal O(\operatorname{polylog}(\varepsilon_{L}^{-1}))$ times in the interval $[0,(4\alpha k)^{-1}]$, with spacing on the order of $\cO(\operatorname{polylog}(\varepsilon_{L}^{-1})^{-1})$. The overlap estimates are obtained using process-shadow measurements, and the derivative at zero is recovered by robust polynomial interpolation. The analysis relies on a Taylor approximation theorem for Heisenberg evolution under a $k$-local Lindbladian with bounded weighted interaction strength $\alpha$, which we believe of independent interest:
        \begin{lemma}[Polynomial Approximations to Heisenberg Evolution (informal; see Lemma \ref{lem:poly-approx})]\label{lem:poly-approx-intro}
            Given $q\ge 1$, let $Q$ be an arbitrary $q$-local operator. Then, for any error $\varepsilon \in (0,1)$ and time $t \in [0, (4\alpha k)^{-1}]$, where $\alpha$ denotes the weighted intersection strength defined in \eqref{eq:weightedintersectiondegree}, there exists an operator-valued polynomial $Q^{(d)}$ of degree $d = \cO(\lceil q/k\rceil + \log\frac{1}{\varepsilon})$ in $t$ such that:
            \begin{equation}
                \|e^{t\mathcal{L}^\dagger}[Q] - Q^{(d)}(t)\|_{2} \le \varepsilon \cdot \|Q\|_{2}\,.
            \end{equation}
        \end{lemma}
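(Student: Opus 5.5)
We take $Q^{(d)}(t):=\sum_{j=0}^{d}\frac{t^j}{j!}(\mathcal{L}^\dagger)^j[Q]$, the degree-$d$ Taylor polynomial of $e^{t\mathcal{L}^\dagger}[Q]$. The task is to bound the tail $\sum_{j>d}\frac{t^j}{j!}\|(\mathcal{L}^\dagger)^j[Q]\|_2$ by $\varepsilon\|Q\|_2$ for $t\le(4\alpha k)^{-1}$ and $d=\mathcal O(\lceil q/k\rceil+\log(1/\varepsilon))$. The series converges absolutely for fixed $n$, so the remainder equals this tail. The work is a locality-aware bound on $\|(\mathcal{L}^\dagger)^j[Q]\|_2$ that grows like $j!\,(c\alpha k)^j$ times a prefactor depending on $q/k$, rather than like the naive $\|\mathcal{L}^\dagger\|^j$, which scales with $n$.

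\textbf{Combinatorial expansion.} First expand
\[
(\mathcal{L}^\dagger)^j[Q]=\sum_{e_1,\dots,e_j}\mathcal{L}^\dagger_{e_j}\cdots\mathcal{L}^\dagger_{e_1}[Q].
\]
Since $\mathcal{L}^\dagger_e$ is a Heisenberg generator, it annihilates operators acting trivially on $e$. Likewise, $\mathcal{L}^\dagger_e[I_e\otimes X]=\mathcal{L}^\dagger_e[I]\otimes X$ vanishes by unitality when $X$ is supported off $e$. So only sequences in which $e_i$ intersects $\operatorname{supp}(Q)\cup e_1\cup\dots\cup e_{i-1}$ contribute.

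\textbf{Counting and norm bound.} For each step, the support before step $i$ has size at most $q+(i-1)k$, and we use the $2\to2$ norm. The sum over admissible $e_i$ of $\|\mathcal{L}^\dagger_{e_i}\|_{2\to2}$ is at most $\alpha$ times the number of sites in the current support. This is where the weighted intersection strength \eqref{eq:weightedintersectiondegree} enters. Using the triangle inequality in $\|\cdot\|_2$ we get
\[
\|(\mathcal{L}^\dagger)^j[Q]\|_2\le\|Q\|_2\prod_{i=1}^{j}\alpha\,(q+(i-1)k)=\|Q\|_2(\alpha k)^j\frac{\Gamma(q/k+j)}{\Gamma(q/k)}.
\]
Each step is a choice among edges touching the current support, with weights summing to at most $\alpha|\text{support}|$, so the bound follows by iterating. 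Setting $r=q/k$, the $j$-th tail term is at most
\[
\|Q\|_2\,(t\alpha k)^j\binom{r+j-1}{j}\le\|Q\|_2\,4^{-j}\binom{r+j-1}{j}.
\]

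\textbf{Tail sum.} With $\binom{r+j-1}{j}\le 2^{r+j-1}$, the terms are bounded by $2^{r}2^{-j}$. Summing over $j>d$ gives at most $2^{r-d}$, which is $\le\varepsilon$ once $d\ge r+\log_2(1/\varepsilon)$. This yields $d=\mathcal O(\lceil q/k\rceil+\log(1/\varepsilon))$.

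The main obstacle is the counting step: making the bound on surviving edge sequences rigorous and in terms of $\alpha$ only, with no dependence on $n$. One also has to check that $\mathcal{L}_e^\dagger$ really kills operators outside $e$. The dissipative part $L^\dagger X L-\tfrac12\{L^\dagger L,X\}$ vanishes for $X$ commuting with and supported off $L$, so this holds. The constant $4$ in the time window is what makes the geometric tail converge with margin.
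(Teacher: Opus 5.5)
Your proposal is correct and follows essentially the same route as the paper: Taylor polynomial, path expansion over edges that must touch the current support $S$, the per-step bound $\sum_{e\cap S\neq\emptyset}\|\mathcal{L}_e^\dagger\|_{2\to2}\le\alpha|S|$ giving $\|(\mathcal{L}^\dagger)^j Q\|_2\le\|Q\|_2(\alpha k)^j\, j!\binom{r+j-1}{j}$, and a geometric tail using $\alpha k t\le 1/4$. The differences are cosmetic. The paper rounds to the integer $a=\lceil q/k\rceil$, which makes your bound $\binom{r+j-1}{j}\le 2^{r+j-1}$ immediate for non-integer $r$. It also controls the tail by a ratio test rather than term by term.
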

        For the local PTM entries used below, $q\le k$ and hence $a=\lceil q/k\rceil=1$.
        \noindent This is the first main ingredient of our work, which removes the need for a geometric interaction graph assumed in previous approaches based on Lieb--Robinson bounds \cite{StilckFranca.2024,StilckFranca.2025}. The approximation degree depends logarithmically on the target precision and only on $k$, $\alpha$, and the support size of the observable.

        \smallskip

        \noindent
        \textbf{Local inversion from PTM data to Pauli--GKSL coefficients.} In a first step, we establish an inversion algorithm via the Fierz identity (our second key ingredient for the present result), that maps the PTM coefficients $L_{\mathbf{P},\mathbf{Q}}$ to the Pauli--GKSL coefficients $\chi_{\mathbf{P},\mathbf{Q}}$ in a surprisingly clean manner. However, this would involve exponentially many entries. Therefore, we further refine the inversion to constant-size operations by exploiting locality. For each region $R\subseteq[n]$ with $|R|\le k$, the reduced generator
        \begin{equation}\label{eq:locally_reduced_generator}
            \cL^{(R)}:=\Tr_{\overline R}\left[\cL\!\left(\bullet\otimes\frac{I_{\overline R}}{2^{|\overline R|}}\right)\right]
        \end{equation}
        determines local $\chi$-coefficients. Moreover, the PTM coefficients of $\cL^{(R)}$ coincide with the coefficients $L_{\mathbf{P},\mathbf{Q}}$ of $\cL$ with $\mathbf{P},\mathbf{Q}$ acting nontrivially only on $R$, and can then be estimated to precision $\varepsilon_{L}$ with $\mathcal{O}(\varepsilon_{L}^{-2}\log(n))$ samples using the process shadow tomography protocol described above. A descending inclusion--exclusion procedure then recovers all nonzero entries $\chi_{\mathbf P,\mathbf Q}$:

        \begin{lemma}[Local inversion (informal; see Lemma \ref{lem:chi-recovery-stability})]\label{lem:chi-recovery-stabilityintro}
            Given an estimate $\widehat{L}_{\mathbf{P},\mathbf{Q}}$ of all PTM entries $L_{\mathbf{P},\mathbf{Q}}$ for all Pauli strings $\mathbf{P},\mathbf{Q}$ of weight at most $k$ with error $\varepsilon_L$, there is an $\mathcal{O}(n^k)$-time algorithm that outputs estimates $\widehat{\chi}_{\mathbf{P},\mathbf{Q}}$ of all possibly nonzero global $\chi$-coefficients with error
            \begin{equation*}
                \max_{\substack{\mathbf P,\mathbf Q\\
                |\operatorname{supp}(\mathbf P)\cup\operatorname{supp}(\mathbf Q)|\le k}}\left|\widehat\chi_{\mathbf P,\mathbf Q}-\chi_{\mathbf P,\mathbf Q}\right|= \mathcal{O}(n^k)\,\varepsilon_L .
            \end{equation*}
        \end{lemma}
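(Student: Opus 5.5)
The plan is to combine three ingredients: an exact \emph{global} inversion formula from Pauli-transfer-matrix entries to $\chi$-coefficients, obtained from the Fierz identity; a \emph{localisation} step through the reduced generators $\cL^{(R)}$ of \eqref{eq:locally_reduced_generator}; and a descending recursion that removes diagonal extensions. The error is then tracked level by level.

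\textbf{Step 1 (Fierz inversion on a region).} First I would record the identity $2^{-m}\sum_{\mathbf C\in\cP_m}\mathbf C Y\mathbf C=\Tr(Y)\,I$ on $m$ qubits. Applied to $Y=\mathbf B X$ and multiplied on the left by $\mathbf A$, it gives
\begin{equation*}
2^{-m}\mathbf A\Tr(\mathbf BX)=4^{-m}\sum_{\mathbf C}(\mathbf A\mathbf C\mathbf B)\bullet\mathbf C\,(X).
\end{equation*}
So each rank-one PTM basis element $2^{-m}|\mathbf A)(\mathbf B|$ is an explicit phase-weighted combination of $\{\mathbf P\bullet\mathbf Q\}$. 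Conversely, the PTM of $\mathbf P\bullet\mathbf Q$ has entries $2^{-m}\Tr(\mathbf A\mathbf P\mathbf B\mathbf Q)\in\{0,\pm1,\pm i\}$. Both bases are orthogonal with respect to the Hilbert--Schmidt inner product on superoperators, so the change of basis is unitary up to normalisation. Each $\chi_{\mathbf P,\mathbf Q}$ is therefore a sum of $4^{m}$ PTM entries with unimodular phases, divided by $4^m$, and the map is $\mathcal O(1)$-Lipschitz in the entrywise sup norm.

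I will apply this only on a region $R$ with $|R|\le k$, to $\cL^{(R)}$. As stated before the lemma, its PTM entries are exactly $L_{\mathbf P,\mathbf Q}$ for $\mathbf P,\mathbf Q$ supported in $R$. Hence from $\widehat L$ we obtain estimates $\widehat\chi^{(R)}_{\mathbf P_R,\mathbf Q_R}$ with error at most $c_k\varepsilon_L$, at cost $\mathcal O_k(1)$ per region and $\mathcal O(n^k)$ in total.

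\textbf{Step 2 (relating $\chi^{(R)}$ to $\chi$).} Next I would expand $\cL^{(R)}$ term by term. For $\mathbf P=\mathbf P_R\otimes\mathbf A$ and $\mathbf Q=\mathbf Q_R\otimes\mathbf A'$ with $\mathbf A,\mathbf A'\in\cP_{\overline R}$, the definition gives
\begin{equation*}
\Tr_{\overline R}\bigl[\mathbf P(X\otimes 2^{-|\overline R|}I)\mathbf Q\bigr]=2^{-|\overline R|}\Tr(\mathbf A\mathbf A')\,\mathbf P_RX\mathbf Q_R=\delta_{\mathbf A,\mathbf A'}\,\mathbf P_RX\mathbf Q_R .
\end{equation*}
Hence
\begin{equation*}
\chi^{(R)}_{\mathbf P_R,\mathbf Q_R}=\sum_{\mathbf A\in\cP_{\overline R}}\chi_{\mathbf P_R\otimes\mathbf A,\mathbf Q_R\otimes\mathbf A},
\end{equation*}
so only \emph{diagonal} extensions survive, which matches $\operatorname{Ext}_\Omega$. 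Because $\chi$ vanishes outside $\cI_k$, only $\mathbf A$ with $|S\cup\supp\mathbf A|\le k$ contribute, where $S=\supp\mathbf P_R\cup\supp\mathbf Q_R$. Taking $R=S$ gives the triangular system
\begin{equation*}
\chi_{\mathbf P,\mathbf Q}=\chi^{(S)}_{\mathbf P_S,\mathbf Q_S}-\sum_{\emptyset\ne T\subseteq\overline S,\ |S|+|T|\le k}\ \sum_{\mathbf A\in\cP_T\text{ full support on }T}\chi_{\mathbf P_S\otimes\mathbf A,\mathbf Q_S\otimes\mathbf A}.
\end{equation*}

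\textbf{Step 3 (descending recursion and error).} The algorithm processes support sizes $s=k,k-1,\dots,1$ (the pair $(I,I)$ is excluded) and substitutes already-computed estimates into the right-hand side. At $s=k$ the sum is empty, so the error is $e_k\le c_k\varepsilon_L$. In general,
\begin{equation*}
e_s\le c_k\varepsilon_L+\sum_{j=1}^{k-s}\binom{n}{j}3^j e_{s+j}.
\end{equation*}
By induction this gives $e_{k-r}=\mathcal O_k(n^r)\varepsilon_L$, and the worst case $e_1=\mathcal O_k(n^{k-1})\varepsilon_L\le\mathcal O(n^k)\varepsilon_L$.

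For the running time, level $s$ has $\mathcal O_k(n^s)$ pairs, each summing over $\mathcal O_k(n^{k-s})$ extensions. That is $\mathcal O_k(n^k)$ work per level, plus $\mathcal O(n^k)$ for Step 1, which gives the $\mathcal O(n^k)$ bound.

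\textbf{Main obstacle.} I expect the delicate point to be Step 1 together with Step 2: getting the phases and normalisations in the Fierz change of basis right, and checking that the partial trace against the maximally mixed state kills exactly the off-diagonal extensions. Once the triangular identity is established, the recursion and the counting in Step 3 are routine.
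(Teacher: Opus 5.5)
Your proposal is correct and follows the paper's proof essentially step for step. It uses local Fierz inversion on the reduced generator $\mathcal L^{(R)}$; your twirl-identity derivation is equivalent to the paper's Fierz trace identity and in fact gives $c_k=1$. It then uses the diagonal-extension marginal identity and the descending recursion $e_s\le\varepsilon_L+\sum_j\binom{n}{j}3^j e_{s+j}$, all as in the paper.

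The one coefficient you set aside, $(I,I)$, is nominally included in the stated maximum. The paper handles it by running the same recursion at $s=0$, i.e.\ $\widehat\chi_{I,I}=-\sum_{\mathbf P\neq I,\ \operatorname{wt}(\mathbf P)\le k}\widehat\chi_{\mathbf P,\mathbf P}$ by trace preservation. Its error is $\sum_{t\ge1}\binom{n}{t}3^t e_t=\mathcal O(n^k)\,\varepsilon_L$, which is exactly where the $n^k$ in the statement comes from; your levels $s\ge1$ indeed incur only $\mathcal O(n^{k-1})\,\varepsilon_L$.
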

        
        \smallskip

        \noindent
        \textbf{Sparse diagonal extensions and structure learning.} In the worst case, the error propagation of Lemma \ref{lem:chi-recovery-stabilityintro} is polynomial in $n$, yielding an error $\varepsilon_{\chi}=\mathcal{O}(n^k)\,\varepsilon_{L}$. Under the sparsity assumption, i.e.~given a candidate set $\Omega$ of influential $\chi$-coefficients, we can show that the error propagation becomes independent of $n$, yielding a sample complexity of order $\widetilde{\mathcal O}_{k}\!\left(D_\Omega^2\varepsilon_\chi^{-2}\log(|\Omega|/\delta)\right)$ for learning all influential Pauli-GKSL coefficients $\chi_{\mathbf{P},\mathbf{Q}}$ to precision $\varepsilon_{\chi}$ (cf.~Theorem \ref{thm:informal-structure-aware}, \ref{thm:overall-coefficient-learning}). This is done as follows: denoting $\chi^{(S)}$ the matrix of $\chi$-coefficients of $\cL^{(S)}$ (cf.~Eq.~\eqref{eq:locally_reduced_generator}),
        \begin{equation*}
            \chi^{(S)}_{\mathbf P_S,\mathbf Q_S}=\sum_{\mathbf A_{\overline S}}\chi_{\mathbf P_S\otimes \mathbf A_{\overline S},\mathbf Q_S\otimes \mathbf A_{\overline S}},
        \end{equation*}
        i.e.~a local coefficient is equal to the sum of all its diagonal extensions. In the worst case this produces an $n^k$-size recursive subtraction and hence the general $k$-local amplification in the inversion error. Our sparse analysis identifies the relevant obstruction more precisely: accepted diagonal extensions must be counted explicitly, while omitted extensions contribute a bias. For structure learning this requires a guard band. At decision threshold $\lambda$ and margin $\gamma$, accepted extensions are counted using the lower threshold $\lambda-\gamma$, while omitted extensions are charged to the tail below the upper threshold $\lambda+\gamma$:

        \begin{lemma}[Guarded sparse inversion and error propagation (informal; see Lemma \ref{lem:sparse-diagonal-inversion})]
            Let $D_-:=D_{\lambda-\gamma}$ and $\rho_+:=\rho_{\lambda+\gamma}$.  Assume access to $\varepsilon_L$-accurate PTM estimates for all Pauli strings of weight at most $k$, and suppose
            \begin{equation*}
                D_-(\varepsilon_L+\rho_+)\le\gamma.
            \end{equation*}
            Then thresholding the recursively corrected local coefficients at level $\lambda$ returns a support $\widehat\Omega_\lambda$ such that
            \begin{equation*}
                \{|\chi|>\lambda+\gamma\}\subseteq
                \widehat\Omega_\lambda
                \subseteq
                \{|\chi|>\lambda-\gamma\},
            \end{equation*}
            and every accepted coefficient estimate has error at most $\gamma$.
        \end{lemma}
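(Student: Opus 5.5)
The plan is a descending induction on the size of the support set $S$, from $|S|=k$ down to $|S|=0$. It uses the marginal identity
\[
\chi^{(S)}_{\mathbf P_S,\mathbf Q_S}=\chi_{\mathbf P_S\otimes I,\mathbf Q_S\otimes I}+\sum_{\mathbf A_{\overline S}\neq I}\chi_{\mathbf P_S\otimes\mathbf A_{\overline S},\mathbf Q_S\otimes\mathbf A_{\overline S}}.
\]
We first note that the local $\chi^{(S)}$ is an exact linear function of the PTM entries $L_{\mathbf P,\mathbf Q}$ with $\mathbf P,\mathbf Q$ supported in $S$. This holds because the Fierz identity inverts the PTM of $\cL^{(S)}$ to its Pauli $\chi$-matrix on $|S|\le k$ qubits. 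Hence $\widehat\chi^{(S)}$ built from $\varepsilon_L$-accurate PTM estimates has error at most $c_k\varepsilon_L$, with $c_k$ depending only on $k$. (If the informal statement's $\varepsilon_L$ already absorbs $c_k$, we keep it implicit.)

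The recursive estimator is defined as follows. For a pair $(\mathbf P,\mathbf Q)$ whose support union is exactly $S$, set
\[
\widetilde\chi_{\mathbf P,\mathbf Q}:=\widehat\chi^{(S)}_{\mathbf P_S,\mathbf Q_S}-\sum_{\mathbf A\in\widehat{\operatorname{Ext}}(S,\mathbf P_S,\mathbf Q_S)}\widetilde\chi_{\mathbf P_S\otimes\mathbf A,\mathbf Q_S\otimes\mathbf A}.
\]
Here $\widehat{\operatorname{Ext}}$ consists of the extensions already accepted at larger supports. We then accept $(\mathbf P,\mathbf Q)$ into $\widehat\Omega_\lambda$ iff $|\widetilde\chi_{\mathbf P,\mathbf Q}|>\lambda$. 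At $|S|=k$ there are no nontrivial extensions inside $\cI_k$, so the error is $\varepsilon_L$.

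The inductive claim is a joint statement at level $|S|=j$. It has three parts: (i) every accepted estimate has error at most $e_j$; (ii) $\{|\chi|>\lambda+\gamma\}\subseteq\widehat\Omega$; (iii) $\widehat\Omega\subseteq\{|\chi|>\lambda-\gamma\}$. The invariant (iii) at higher levels gives $\widehat{\operatorname{Ext}}\subseteq\operatorname{Ext}_{\Omega_-}$, so at most $\mathfrak d_-$ accepted extensions are subtracted.

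We split the error of $\widetilde\chi$ into three parts:
\begin{itemize}
\item the PTM error $\varepsilon_L$;
\item the accumulated error of the subtracted accepted estimates, at most $\mathfrak d_-\max_{j'>j}e_{j'}$;
\item the bias from unaccepted true extensions.
\end{itemize}
By (ii) at higher levels, each unaccepted extension lies outside $\Omega_+$, so the bias is bounded by $\rho_+$. The main subtlety is that an extension in $\Omega_-\setminus\Omega_+$ may or may not be accepted. If it is accepted, its estimate error is counted in the second part. If it is not, it sits outside $\Omega_+$ and hence is charged to $\rho_+$. Either way the bound holds.

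This yields the recursion $e_j\le\varepsilon_L+\rho_++\mathfrak d_- e_{j+1}$. Since supports grow by at least one qubit per level and are capped at $k$, unrolling gives
\[
e_j\le(\varepsilon_L+\rho_+)\sum_{\ell=0}^{k-j}\mathfrak d_-^\ell\le D_-(\varepsilon_L+\rho_+)\le\gamma.
\]
The same bound $\gamma$ holds for every candidate estimate at level $j$, not only the accepted ones, because the error analysis never used acceptance of the current pair. Thresholding at $\lambda$ then gives (ii) and (iii) at level $j$: if $|\chi|>\lambda+\gamma$ then $|\widetilde\chi|>\lambda$, and if $|\widetilde\chi|>\lambda$ then $|\chi|>\lambda-\gamma$. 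This closes the induction.

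The main obstacle is exactly this simultaneous induction. Correctness of the support at larger levels is what licenses the $\mathfrak d_-$ count and the $\rho_+$ charge at smaller levels, so all three parts must be proved together level by level. One must also check that diagonal extensions always strictly enlarge the support, so that the recursion depth is at most $k$.
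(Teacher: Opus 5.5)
Your proposal is correct and follows essentially the same route as the paper. The paper also uses a descending induction on support size with the same joint invariant: accepted coefficients lie in $\Omega_-$, so at most $\mathfrak d_-$ extensions are subtracted, and all of $\Omega_+$ at higher levels is already accepted, so omitted extensions cost at most $\rho_+$. This yields $e_s\le\varepsilon_L+\rho_++\mathfrak d_-\max_{s'>s}e_{s'}\le D_-(\varepsilon_L+\rho_+)\le\gamma$, and thresholding then closes the induction.

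Only cosmetic points remain:
\begin{itemize}
\item The Fierz constant is exactly $c_k=1$, since $2^{-3|S|}\cdot 4^{|S|}\cdot 2^{|S|}=1$.
\item Your recursion should carry $\max_{j'>j}e_{j'}$ rather than $e_{j+1}$, because extensions can skip levels. This is harmless, since the unrolled bounds decrease in $j$.
\item The level $|S|=0$ should be dropped, because $(I,I)\notin\mathcal I_k^\circ$.
\end{itemize}
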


        \smallskip

        \noindent
        \textbf{Support search versus parameter inversion.} We distinguish sparse parameter-learning problem from structure learning. Given a candidate influential support $\Omega$, the recursive inversion can be restricted to $\Omega$, leading to sample complexity depending on $D_\Omega$ and $|\Omega|$, and postprocessing time $\mathcal O_k(|\Omega|)$. If $\Omega$ is not supplied, we obtain it by thresholding the recursively corrected local coefficients with a guard band. This gives a structure-learning guarantee up to a margin around the threshold, with sample complexity depending on $D_{\lambda-\gamma}$ and the tail condition $D_{\lambda-\gamma}\rho_{\lambda+\gamma}\lesssim\gamma$, namely
        \begin{equation*}
            \widetilde{\mathcal O}_k\!\left(
            D_{\lambda-\gamma}^2\gamma^{-2}\log(n/\delta)
            \right),
        \end{equation*}
        but with an exhaustive $\mathcal O_k(n^k)$ classical support-search cost. The latter is precisely where our Lindbladian setting differs from sparse Hamiltonian learning: Hamiltonian coefficients can be queried directly through suitable PTM entries, enabling Goldreich--Levin-type support search, whereas general Lindbladian $\chi$-coefficients are hidden behind the marginal inversion and diagonal-extension subtraction.

        \smallskip

        \noindent
        \textbf{Projection onto valid Lindblad generators.} The coefficient array of entries $\chi_{\mathbf{P},\mathbf{Q}}$ obtained from noisy data need not define a completely positive semigroup. We therefore solve an SDP over valid local positive semidefinite Kossakowski blocks minimizing the entrywise distance between the recovered matrix $\widehat G$ and the resulting valid local matrix. We prove explicit primal-dual Slater bounds for this SDP. In particular, the SDP can be solved by standard interior-point methods in time polynomial in $n$ for fixed $k$. The final spectral decomposition of the optimal local blocks gives local jump operators, and hence a bona fide GKSL representation as stated in \eqref{thm:informal-main}, where the polynomial overhead in $n$ results from controlling the diamond norm distance in terms of the entrywise error $\varepsilon_{\chi}$.

        \noindent
        \textbf{Model-misspecified Lindbladian learning.} We also consider the model-misspecified setting, where the true generator need not itself be $k$-local or satisfy the structural assumptions imposed by our learning model. In this case the benchmark is the best feasible comparator $\mathcal L_k^*$ in the class of $k$-local Lindbladians with bounded weighted interaction strength. The same learning pipeline remains stable: its coefficient estimates recover the comparator coefficients up to the usual statistical error plus an agnostic bias proportional, up to polynomial factors in the system size and polylogarithmic factors in the target precision, to the distance from the true generator to this best comparator. Under the sparse diagonal-extension assumptions, this bias and the sample complexity improve in the same way as in the realizable sparse theorem.
        \smallskip

        \noindent
        \textbf{Lower bounds in the experimental access model.} Finally, we complement the algorithms with lower bounds tailored to the same restricted access model: product inputs, short-time evolution, and single-qubit Pauli measurements. A two-point testing argument gives an $\Omega(t_{\max}^{-2}\varepsilon^{-2})$ lower bound for learning a single coefficient, while a packing over $\binom nk$ commuting $k$-local Hamiltonian directions gives an $\Omega_k(n^k t_{\max}^{-2}\varepsilon_\diamond^{-2})$ lower bound for diamond-norm recovery. These results show that the statistical dependence on the number of local degrees of freedom and on the target precision is not an artifact of the algorithm, but is forced by the information available in short-time product-measurement experiments.

    \subsection{Comparison with prior work.}

        \paragraph{Full quantum process tomography}
        Full quantum process tomography learns an arbitrary quantum channel and is exponential in $n$ without structural assumptions \cite{NielsenChuang2000,ChuangNielsen1997,Poyatos1997}. Projected least-squares and related convex-projection methods enforce complete positivity for channel tomography, but they do not exploit the locality structure of a Lindblad generator and therefore do not yield polynomial-time many-body guarantees in the setting considered here \cite{SurawyStepney2022PLS}.

        \paragraph{Process Shadow tomography}
        Our measurement primitive is closest to classical-shadow and process-shadow tomography \cite{HuangKuengPreskill2020,Kunjummen2023ShadowProcess,StilckFranca.2024,kraft2025boundederrorquantumsimulationhamiltonian}. Those results give sample-efficient estimators for many Pauli observables or channel overlaps. However, estimating overlaps is not the same as learning a generator. We additionally solve the inverse problem from PTM derivatives to Pauli--GKSL coefficients and then enforce the positivity constraints required for a valid Lindbladian.

        \paragraph{Parameter learning from Hamiltonian dynamics}
        There is extensive literature on Hamiltonian learning from real-time dynamics~\cite{BaireyAradLindner2019,ZubidaYitzhakiLindnerBairey2021, HaahKothariTang2024,DutkiewiczOBrienSchuster2024, HuangTongFangSu2023,HuMaGongYeTongFlammiaYelin2025, BakshiLiuMoitraTang2024,Zhao2025}, including in the time-dependent regime~\cite{StilckFranca.2025}.  Recent advances have pushed two complementary directions that are most relevant to our work. Hu et al.~\cite{HuMaGongYeTongFlammiaYelin2025} develop an ansatz-free, Heisenberg-limited protocol for learning sparse Hamiltonians from dynamics, achieving estimation error scaling at the Heisenberg limit for Hamiltonian coefficients but targeting unitary generators rather than full GKSL Lindbladians.

        In contrast, to the best of our knowledge, we provide the first efficient, assumption-minimal guarantee that applies to general $k$-local dissipative dynamics (not restricted to sparsity) under the single structural hypothesis that the weighted intersection strength $\alpha$ is bounded. Crucially, the dissipative part of a Lindbladian is a positive Kossakowski matrix rather than a vector of Hamiltonian coefficients, so a coefficient-wise estimator is insufficient: one must output a generator that satisfies the full GKSL positivity constraints.

        \paragraph{Learning from Gibbs states}
        For open-system dynamics, one line of work learns Lindbladians from steady states \cite{BaireyGuoPolettiLindnerArad2020}, while another learns local Hamiltonians, or the corresponding Gibbs states, from thermal equilibrium data \cite{AnshuArunachalamKuwaharaSoleimanifar2021, HaahKothariTang2024,BakshiLiuMoitraTang2026, ChenAnshuNguyen2025,Artymowicz2024}. These works exploit the special structure of stationary or thermal states and do not address the direct learning of general local dissipative generators from short-time dynamics.

        \paragraph{Learning open quantum systems}
        Other works learn Lindbladians from real-time data but typically impose strong structural or weak-noise assumptions, or provide numerical procedures without worst-case guarantees~\cite{PastoriOlsacherKokailZoller2022,VandenBergMitchellWeiMalekakhlagh2025,OlsacherKraftKokailKrausZoller2025,birke2026demonstratingbenchmarkingclassicalshadows}. The closest comparison is the ansatz-free protocol of Ivashkov et.~al.~\cite{IvashkovEtAl2026}, which learns sparse Lindbladians in situ from product-state inputs and Pauli measurements. Their full pipeline, including structure learning, has sample complexity $\widetilde{O}(M^{4}\nu^{2}\epsilon^{-4})$, where $M$ is the sparse Pauli-term scale and $\nu$ is the conditioning factor of the induced coefficient-recovery system; after the support is supplied, their coefficient-learning stage has the optimal $\epsilon^{-2}$ dependence. Our setting is complementary: for $M_k:=|\{(\mathbf P,\mathbf Q):|\operatorname{supp}(\mathbf P)\cup\operatorname{supp}(\mathbf Q)|\le k\}|=\Theta_k(n^k)$, our generic $k$-local pipeline recovers all $k$-local Pauli-GKSL coefficients with $\widetilde{O}_k(M_k^2\epsilon^{-2})$ samples, and the GKSL-valid diamond-norm recovery bound scales as $\widetilde{O}_k(M_k^4\epsilon^{-2})$. Thus we trade their sparse structure-learning setting for explicit locality/overlap assumptions, avoid the data-dependent global $\nu^2$ factor, and output a physical GKSL generator.

        \paragraph{Structure learning.}
        Classical structure learning for graphical models has been extensively studied, including algorithms for Markov random fields and Ising models \cite{BreslerMosselSly2013,Bresler2015,VuffrayMisraLokhovChertkov2016,HamiltonKoehlerMoitra2017,KlivansMeka2017}, information-theoretic lower bounds \cite{SanthanamWainwright2012}, and learning from Glauber-type dynamics \cite{NIPS2014_7d260e35,GaitondeMoitraMossel2024}. For $k$-wise Markov random fields learned from i.i.d. samples from the Gibbs distribution, the $n^{\Theta(k)}$-time dependence is believed to be unavoidable because even an order-$(k+1)$ MRF with a single nonzero interaction can encode $k$-sparse parity with noise \cite{BreslerGamarnikShah2018,KlivansMeka2017}. By contrast, recent work shows that this noisy-parity barrier can be bypassed when the learner observes trajectories of the Glauber dynamics, yielding fixed-parameter polynomial-time learning for higher-order MRFs from dynamics \cite{GaitondeMoitraMossel2024}. The Hamiltonian result of \cite{BakshiLiuMoitraTang2024} is analogous in spirit: it obtains a quantum structure-learning speedup by using a Goldreich--Levin-type search over directly accessible Hamiltonian Pauli coefficients. In our Lindbladian setting, this GL step is not presently available, because the desired $\chi$-coefficients are related to PTM data only after local marginal inversion and diagonal-extension subtraction. Thus, while we achieve logarithmic sample complexity and fixed-parameter tractable (FPT) computational complexity under a sparsity assumption for parameter learning, our current structure-learning algorithm still performs an exhaustive $\mathcal O_k(n^k)$ scan. It remains an open problem whether Lindbladian structure learning also admits an FPT support-search algorithm.

        \smallskip

        \paragraph{Organization.}
        In Section \ref{sec:framework-and-definitions}, we fix notation, introduce the Pauli and Pauli-superoperator representations of $k$-local Lindblad generators, and relate their associated $\chi$-coefficients to the Kossakowski matrix $G$ and Hamiltonian coefficients $h$. In Section \ref{sec:robust-local-inversion}, we derive the global and local Fierz inversion formulas and prove the stability of the recursive recovery of $\chi$-coefficients from local PTM data, including the sparse diagonal-extension variant used for structure learning. In Section \ref{sec:learning-ptm-elements}, we show how to estimate the required PTM entries from logarithmically short-time process-shadow data using polynomial approximation and robust interpolation, and combine this with the inversion bounds to prove entrywise recovery of $G$ and $h$. In Section \ref{sec:sdp-projection}, we project the recovered coefficients onto the cone of valid $k$-local Lindblad generators by an efficient SDP and convert entrywise coefficient error into a diamond-norm guarantee. Afterwards, we analyze model misspecification in Section \ref{sec:agnostic-lindbladian-learning} and prove our pipeline stably recovers the best $k$-local comparator up to statistical error plus an agnostic bias that improves under sparse diagonal-extension assumptions. Finally, in Section \ref{sec:lower-bound-product-measurements}, we prove lower bounds for our product-state, short-time, single-qubit-measurement access model, both for single-coefficient learning and for diamond-norm recovery.

\section{Framework}\label{sec:framework-and-definitions}
    We consider a quantum system of $n$ qubits with underlying Hilbert space $\cH:=\bigotimes_{j\in[n]}\cH_j$, $\cH_j\equiv \mathbb{C}^2$, and denote by $\mathcal{P}_n = \{I,X,Y,Z\}^{\otimes n}$ the set of $n$-qubit Pauli strings
    \begin{equation}\label{eq:pauli-strings}
        \mathbf{P}=\bigotimes_{j=1}^n P_j
    \end{equation}
    where $P_j\in \{I,X,Y,Z\}$ is a Pauli matrix acting on site $j$. The set $\mathcal{P}_n$ forms an orthogonal basis with respect to the Hilbert-Schmidt inner product, i.e.~$\Tr(\mathbf{P}\mathbf{Q}) = 2^n \delta_{\mathbf{P},\mathbf{Q}}$. The support of a Pauli string $\mathbf{P}$ is defined as the set $\operatorname{supp}(\mathbf{P}):= \{j\in [n]:P_j\ne I\}$, and its weight is defined as the number of qubits on which it acts nontrivially, i.e.~$\wt(\mathbf{P}) = |\operatorname{supp}(\mathbf{P})|$. For any region $R\subseteq [n]$ of size $r$, we denote by $\mathbf{P}_{R}\in\mathcal{P}_{R}$ the Pauli substring obtained from $\mathbf{P}$ by discarding qubits in $\overline{R}:=[n]\backslash R$, so that $\mathbf{P}=\mathbf{P}_R\otimes \mathbf{P}_{\overline{R}}$. For $0\le r\le n$, we denote by $\mathcal{P}_{n,r}$ the set of Pauli strings of weight $r$, and by $\mathcal{P}_{n,\le r}$ that of Pauli strings of weight at most $r$. We also let $   \cP_{\le r}^{\circ} := \mathcal P_{n,\le r}\setminus\{I\}$ denote the non-identity Pauli strings of weight at most $r$. For every region $R\subseteq[n]$, let $ \cP_R^\circ := \{\mathbf P\in\mathcal P_n\setminus\{I\}:\supp(\mathbf P)\subseteq R\}$ and set $d_R:=|\cP_R^\circ|=4^{|R|}-1$. We denote by $\|\bullet\|_2$ the Hilbert-Schmidt norm on $\mathbb{M}_{2^n}$, by $\|\bullet\|$ the operator norm, and by $\|\bullet\|_{\infty}$ the entrywise sup norm. The family $\{E_{\mathbf{P},\mathbf{Q}}\}_{\mathbf{P},\mathbf{Q}\in\mathcal{P}_n}$, defined by $E_{\mathbf{P},\mathbf{Q}}\coloneqq \mathbf{P}\bullet \mathbf{Q}$, forms a basis of the space of linear maps over $\mathbb{M}_{2^n}$.

    Throughout this paper, we assume that the dynamics of the system is governed by a $k$-local Markovian quantum master equation
    \begin{equation*}
        \frac{d}{dt} \rho(t) = \mathcal{L}(\rho(t))\,,
    \end{equation*}
    determined by the $k$-local generator $\mathcal{L}$ in GKSL form \cite{GKS1976,Lindblad1976} defined below. For the set $\mathcal{R}_{n,\le k}$ of subsets of $[n]$ of size at most $k$, we define the local GKSL generator by
    \begin{equation}\label{eq:linblad-form}
        \cL=\sum_{e\in \mathcal{R}_{n,\le k}}\cL_e\equiv-i[H,\bullet ]+\sum_{e\in \mathcal{R}_{n,\le k}}\sum_{a=1}^{r_e}\Bigl(L_{e,a}\bullet L_{e,a}^\dagger-\tfrac{1}{2}\{L_{e,a}^\dagger L_{e,a},\bullet \}\Bigr),
    \end{equation}
    for a local, traceless Hamiltonian $H:=\sum_{\mathbf{P}\in\mathcal{P}_{n,\le k}}h_{\mathbf{P}}\mathbf{P}$, $r_e\in\N$, and traceless jump operators $L_{e,a}=\sum_{\mathbf{P}\in \mathcal{P}_{e}}\ell_{e,a,\mathbf{P}}\mathbf{P}$. Note that $\cL_e$ denotes full contribution of Hamiltonians and jump operators supported on $e$. In what follows, we set $\ell_{e,a,\mathbf{P}}=0=H_{\mathbf P}$ whenever $\mathbf{P}\in\mathcal{P}_n\backslash \mathcal{P}_e$. In the physical representation above, the Lindblad operators describe localized dissipative channels (elementary quantum jumps (e.g., decay, dephasing, loss)) that model how the system exchanges energy or information with its environment; equivalently, expanding the same generator in the Pauli basis yields the Kossakowski matrix, a positive semidefinite matrix of coefficients that encodes the same dissipative structure algebraically and whose sparsity and locality-induced block structure make it especially useful for learning theory and practical reconstruction from Pauli measurement statistics. This structure is investigated in the following result:
    \begin{lemma}\label{lem:locality-kossakowski}
        The generator $\cL$ of \eqref{eq:linblad-form} is uniquely expanded in the Pauli string basis
        \begin{equation}\label{eq:pauli-gksl-form}
            \mathcal{L} = -i[H, \bullet ] + \sum_{\mathbf{P},\mathbf{Q}\in\mathcal{P}_n\backslash \{I\}} G_{\mathbf{P},\mathbf{Q}} \left( \mathbf{P} \bullet  \mathbf{Q} - \frac{1}{2}\{\mathbf{Q} \mathbf{P}, \bullet \} \right)
        \end{equation}
        where
        \begin{equation*}
            G_{\mathbf{P},\mathbf{Q}}=\sum_{e\in \mathcal{R}_{n,\le k}}\sum_{a=1}^{r_e} {\ell_{e,a,\mathbf{P}}}\,\overline{\ell_{e,a,\mathbf{Q}}}.
        \end{equation*}
        In particular, $G_{\mathbf{P},\mathbf{Q}}=0$ whenever $|\mathrm{supp}(\mathbf{P})\cup\mathrm{supp}(\mathbf{Q})|>k$. Thus the number of nonzero matrix coefficients $G_{\mathbf{P},\mathbf{Q}}$ and vector coefficients $h_{\mathbf{P}}$ is bounded by $\mathcal{O}(n^k)$.
    \end{lemma}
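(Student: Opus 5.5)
Substitute the Pauli expansions of the jump operators into \eqref{eq:linblad-form} and collect terms. This gives the expansion \eqref{eq:pauli-gksl-form} together with the formula for $G$. Uniqueness then follows from linear independence of the superoperator basis $\{E_{\mathbf P,\mathbf Q}\}$.

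\textbf{Existence and the formula for $G$.} For each $e$ and $a$, write $L_{e,a}=\sum_{\mathbf P}\ell_{e,a,\mathbf P}\mathbf P$. Then $L_{e,a}^\dagger=\sum_{\mathbf Q}\overline{\ell_{e,a,\mathbf Q}}\,\mathbf Q$, since Pauli strings are Hermitian. Hence
\[
L_{e,a}\rho L_{e,a}^\dagger=\sum_{\mathbf P,\mathbf Q}\ell_{e,a,\mathbf P}\overline{\ell_{e,a,\mathbf Q}}\,\mathbf P\rho\mathbf Q,
\qquad
L_{e,a}^\dagger L_{e,a}=\sum_{\mathbf P,\mathbf Q}\ell_{e,a,\mathbf P}\overline{\ell_{e,a,\mathbf Q}}\,\mathbf Q\mathbf P .
\]
The jump operators are traceless, so the coefficient of $I$ vanishes and both sums run over $\mathcal P_n\setminus\{I\}$. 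Summing over $e$ and $a$ and exchanging the finite sums gives \eqref{eq:pauli-gksl-form} with $G_{\mathbf P,\mathbf Q}=\sum_{e,a}\ell_{e,a,\mathbf P}\overline{\ell_{e,a,\mathbf Q}}$. The Hamiltonian part is unchanged.

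\textbf{Uniqueness.} Here I have to be careful about what "unique" means. The Kraus-type data $(L_{e,a})$ are not unique, but the superoperator $\mathcal L$ is. Expanding \eqref{eq:pauli-gksl-form} in the basis $E_{\mathbf P,\mathbf Q}$ gives the following coefficients:
\begin{itemize}
\item for $\mathbf P,\mathbf Q\neq I$, the coefficient of $E_{\mathbf P,\mathbf Q}$ is exactly $G_{\mathbf P,\mathbf Q}$, because the commutator and anticommutator terms only produce $E_{\mathbf R,I}$ and $E_{I,\mathbf R}$;
\item the coefficient of $E_{\mathbf R,I}$ is $-ih_{\mathbf R}-\tfrac12\sum_{\mathbf Q\mathbf P\propto\mathbf R}(\cdots)$, and the coefficient of $E_{I,\mathbf R}$ has the analogous form with $+ih_{\mathbf R}$.
\end{itemize}
Since $\{E_{\mathbf P,\mathbf Q}\}$ is a basis, $\mathcal L$ determines $G$ on non-identity pairs. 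With $G$ fixed, $\mathcal L+\tfrac12\sum G\{\mathbf Q\mathbf P,\cdot\}-\sum G\,\mathbf P\cdot\mathbf Q=-i[H,\cdot]$ is determined. The map $H\mapsto[H,\cdot]$ is injective on traceless $H$, so $H$ is determined as well.

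\textbf{Locality and counting.} By construction $\ell_{e,a,\mathbf P}=0$ unless $\operatorname{supp}(\mathbf P)\subseteq e$. So each summand of $G_{\mathbf P,\mathbf Q}$ vanishes unless $\operatorname{supp}(\mathbf P)\cup\operatorname{supp}(\mathbf Q)\subseteq e$ for some $|e|\le k$. Hence $G_{\mathbf P,\mathbf Q}=0$ when the union has size greater than $k$. The number of pairs with support union of size at most $k$ is at most $\sum_{j\le k}\binom nj 16^{j}=\mathcal O(n^k)$ for fixed $k$. Similarly, the number of $h_{\mathbf P}$ with $\mathbf P\in\mathcal P_{n,\le k}$ is $\mathcal O(n^k)$.

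The only step needing care is the uniqueness argument, specifically separating the $E_{\mathbf R,I}$ and $E_{I,\mathbf R}$ contributions of $H$ from those of the anticommutator. Everything else is a direct calculation.
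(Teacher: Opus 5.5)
Your proposal is correct and follows essentially the same route as the paper: expand the jump operators in the Pauli basis to read off $G$, then get uniqueness from linear independence of $\{E_{\mathbf P,\mathbf Q}\}$. The key observation is that the commutator and anticommutator only contribute to $E_{\mathbf A,I}$ and $E_{I,\mathbf A}$, so the non-identity coefficients are exactly $G$, and $H$ then follows from injectivity of $H\mapsto[H,\bullet]$ on traceless $H$. The paper phrases uniqueness through the difference $(\Delta H,\Delta G)$ of two representations, which is the same argument; using tracelessness of the $L_{e,a}$ directly, as you do, is slightly cleaner than the paper's remark about absorbing identity terms.
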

    \begin{proof}
        Expanding each jump operator $L_{e,a}=\sum_{\mathbf P\in\mathcal P_n}\ell_{e,a,\mathbf P}\mathbf P$, with the convention that $\ell_{e,a,\mathbf P}=0$ whenever $\mathbf P\notin\mathcal P_e$, gives
        \begin{equation*}
            \cL = -i[H,\bullet] + \sum_{\mathbf P,\mathbf Q\in\mathcal P_n} G_{\mathbf P,\mathbf Q}\left( \mathbf P\bullet\mathbf Q - \frac12\{\mathbf Q\mathbf P,\bullet\} \right),
        \end{equation*}
        where
        \begin{equation*}
            G_{\mathbf P,\mathbf Q} = \sum_{e\in\mathcal R_{n,\le k}} \sum_{a=1}^{r_e} \ell_{e,a,\mathbf P}\overline{\ell_{e,a,\mathbf Q}}\,.
        \end{equation*}
        The terms with $\mathbf P=I$ or $\mathbf Q=I$ can be absorbed into the Hamiltonian part and the scalar part of the anticommutator, yielding the stated form with $\mathbf P,\mathbf Q\in\mathcal P_n\setminus\{I\}$. We now prove uniqueness. Suppose that two pairs $(H,G)$ and $(H',G')$ give the same generator in the form \eqref{eq:pauli-gksl-form}. Set
        \begin{equation*}
            \Delta H:=H-H'\qquad\text{ and }\qquad \Delta G:=G-G'\,.
        \end{equation*}
        Subtracting the two expressions gives, for all $X\in\mathbb M_{2^n}$,
        \begin{equation*}
            -i[\Delta H,X] + \sum_{\mathbf P,\mathbf Q\in\mathcal P_n\setminus\{I\}} \Delta G_{\mathbf P,\mathbf Q} \left(\mathbf P X\mathbf Q - \frac12\{\mathbf Q\mathbf P,X\} \right)=0\,.
        \end{equation*}
        Equivalently,
        \begin{equation*}
            \left(\sum_{\mathbf P,\mathbf Q\in\mathcal P_n\setminus\{I\}}\!\!\!\!\!\!\frac{\Delta G_{\mathbf P,\mathbf Q}\mathbf Q\mathbf P}{2} + i\Delta H \!\!\right)\!\!XI + IX\!\!\left( \sum_{\mathbf P,\mathbf Q\in\mathcal P_n\setminus\{I\}}\!\!\!\!\!\! \frac{\Delta G_{\mathbf P,\mathbf Q}\mathbf Q\mathbf P}{2} - i\Delta H \!\!\right)\! = \!\!\!\!\!\!\sum_{\mathbf P,\mathbf Q\in\mathcal P_n\setminus\{I\}}\!\!\!\!\!\!\!\! \Delta G_{\mathbf P,\mathbf Q}\, \mathbf P X\mathbf Q .
        \end{equation*}
        Since the maps $ E_{\mathbf A,\mathbf B}:X\mapsto \mathbf A X\mathbf B$, $\mathbf A,\mathbf B\in\mathcal P_n$, form a basis of $\mathcal B(\mathbb M_{2^n})$. The first two terms above lie in the span of basis elements with at least one index equal to $I$, namely $E_{\mathbf A,I}$ or $E_{I,\mathbf A}$. The last term lies in the span of basis elements $E_{\mathbf P,\mathbf Q}$ with $\mathbf P,\mathbf Q\neq I$. These two families are disjoint, so their coefficients vanish separately. Therefore
        \begin{equation*}
            \sum_{\mathbf P,\mathbf Q\in\mathcal P_n\setminus\{I\}} \Delta G_{\mathbf P,\mathbf Q}E_{\mathbf P,\mathbf Q}=0,
        \end{equation*}
        and linear independence gives $\Delta G_{\mathbf P,\mathbf Q}=0$ for all $\mathbf P,\mathbf Q\in\mathcal P_n\setminus\{I\}$. Thus $G=G'$. Substituting $\Delta G=0$ back into the difference identity yields
        \begin{equation*}
            -i[\Delta H,X]=0 \qquad \text{for all }X\in\mathbb M_{2^n}.
        \end{equation*}
        Hence $\Delta H$ commutes with every matrix in $\mathbb M_{2^n}$, so $\Delta H=cI$ for some scalar $c$. Since both $H$ and $H'$ are expanded only over nonidentity Pauli strings, $\Delta H$ is traceless. Hence $c=0$, and therefore $\Delta H=0$. This proves uniqueness. Finally, if $|\operatorname{supp}(\mathbf P)\cup\operatorname{supp}(\mathbf Q)|>k$, then there is no $e\in\mathcal R_{n,\le k}$ such that both $\mathbf P,\mathbf Q\in\mathcal P_e$. Hence for every $e,a$, at least one of $\ell_{e,a,\mathbf P}$ and $\ell_{e,a,\mathbf Q}$ vanishes, and therefore $G_{\mathbf P,\mathbf Q}=0$. Since $k$ is fixed, the number of pairs $(\mathbf P,\mathbf Q)$ with $|\operatorname{supp}(\mathbf P)\cup\operatorname{supp}(\mathbf Q)|\le k $ is $O(n^k)$, and the same is true for the number of nonzero Hamiltonian coefficients $h_{\mathbf P}$. This completes the proof.
    \end{proof}
    In what follows, we consider two more useful representations of $\cL$: first, the Pauli transfer matrix (PTM) of $\mathcal{L}$ has coefficients
    \begin{equation}
        L_{\mathbf{P},\mathbf{Q}} = \frac{1}{2^n} \Tr(\mathbf{P} \mathcal{L}(\mathbf{Q}))\,,\qquad \mathbf{P},\mathbf{Q}\in\mathcal{P}_n\,.
    \end{equation}
    The PTM representation allows us to express the action of $\mathcal{L}$ on any operator in terms of its expansion in the Pauli basis; this representation is standard in quantum process tomography and quantum information (see e.g.~\cite{NielsenChuang2000,ChuangNielsen1997,Poyatos1997}). To avoid confusion we reserve $\mathcal{L}$ for the superoperator and use the capital-letter matrix $L$ for its Pauli-transfer matrix with entries $L_{\mathbf{P},\mathbf{Q}}$. Second, we write the expansion of $\mathcal{L}$ in the Pauli-superoperator basis $\{E_{\mathbf{P},\mathbf{Q}}\}_{\mathbf{P},\mathbf{Q}\in \mathcal{P}_n}$ as
    \begin{equation}
        \mathcal{L} = \sum_{\mathbf{P},\mathbf{Q}\in\mathcal{P}_n} \chi_{\mathbf{P},\mathbf{Q}} E_{\mathbf{P},\mathbf{Q}}\,.
        \label{eq:chi-expansion}
    \end{equation}
    In the following, we often refer to the coefficients $\chi_{\mathbf{P},\mathbf{Q}}$ of a superoperator $\cL$ as its GKSL or $\chi$-coefficients. From the decomposition in Lemma \ref{lem:locality-kossakowski}, it follows directly that for any $\mathbf{P},\mathbf{Q}\in\mathcal{P}_n\backslash \{I\}$,
    \begin{equation}\label{eq:chiG}
        G_{\mathbf{P},\mathbf{Q}} = \chi_{\mathbf{P},\mathbf{Q}} \qquad \text{ and }\qquad  h_{\mathbf{P}} = \frac{i}{2}\bigl(\chi_{\mathbf{P},I}-\chi_{I,\mathbf{P}}\bigr)\,.
    \end{equation}
    \begin{rmk}\label{rmk:chi-id-id}
        Note that $\chi_{I,I}$ is not required to recover all necessary coefficients of $G_{\mathbf{P}\mathbf{Q}}$ and $h_\mathbf{P}$ describing the considered generator. Nevertheless, it is given by
        \begin{equation*}
            \chi_{I,I}=-\sum_{\mathbf P\in\mathcal P_n\setminus\{I\}}\chi_{\mathbf P,\mathbf P}=-\sum_{\mathbf P\in\mathcal P_n\setminus\{I\}}G_{\mathbf P,\mathbf P}
        \end{equation*}
        Thus, after estimating or projecting $G$, one may recover $\chi_{I,I}$ by this diagonal sum; the learning task below concerns the independent nonidentity coefficients.
    \end{rmk}

\section{Robust local inversion}\label{sec:robust-local-inversion}
    The first main novel technical tool of the present paper is a simple inversion formula providing the $\chi$-coefficients in terms of the PTM coefficients of $\cL$: substituting Eq.~(\ref{eq:chi-expansion}) into the definition of the PTM element $L_{\mathbf{P}, \mathbf{Q}}$, we obtain:
    \begin{equation}
        L_{\mathbf{P},\mathbf{Q}} = \frac{1}{2^n} \sum_{\mathbf{P}',\mathbf{Q}'\in\mathcal{P}_n} \chi_{\mathbf{P}',\mathbf{Q}'} \Tr(\mathbf{P} \mathbf{P}'\mathbf{Q}\mathbf{Q}')\,.
        \label{eq:L-from-chi}
    \end{equation}
    To invert this linear system and solve for $\chi_{\mathbf{P},\mathbf{Q}}$, we multiply both sides of Eq.~(\ref{eq:L-from-chi}) by $\Tr(\mathbf{P} \mathbf{P}''\mathbf{Q} \mathbf{Q}'')$ and sum over all $\mathbf{P},\mathbf{Q} \in \mathcal{P}_n$. Next, we apply the Fierz trace identity for Pauli matrices (see Lemma \ref{lem:fierz}), which states that for any matrices $A, B, C, D$:
    \begin{equation}\label{Fierzequation}
        \sum_{\mathbf{P},\mathbf{Q}\in\mathcal{P}_n} \Tr(\mathbf{P} A \mathbf{Q} B) \Tr(\mathbf{P} C \mathbf{Q} D) = 4^n \Tr(AD) \Tr(BC)\,.
    \end{equation}
    For $A=\mathbf{P}', B=\mathbf{Q}', C=\mathbf{P}'', D=\mathbf{Q}''$, this yields:
    \begin{align*}
        \sum_{\mathbf{P},\mathbf{Q}} L_{\mathbf{P},\mathbf{Q}} \Tr(\mathbf{P} \mathbf{P}''\mathbf{Q} \mathbf{Q}'') &= \frac{1}{2^n} \sum_{\mathbf{P}',\mathbf{Q}'} \chi_{\mathbf{P}',\mathbf{Q}'} \left[ 4^n \Tr(\mathbf{P}'\mathbf{Q}'') \Tr(\mathbf{Q}'\mathbf{P}'') \right] \\
        &= 2^n \sum_{\mathbf{P}',\mathbf{Q}'} \chi_{\mathbf{P}',\mathbf{Q}'} (2^n \delta_{\mathbf{P}',\mathbf{Q}''}) (2^n \delta_{\mathbf{Q}',\mathbf{P}''}) = 2^{3n} \chi_{\mathbf{Q}'',\mathbf{P}''}\,.
    \end{align*}
    Relabelling the indices, we obtain the following exact inversion formula:
    \begin{lemma}[Global inversion formula]\label{lem:global-inversion}
        With the notation of the previous paragraph, for any two Pauli strings $\mathbf{P},\mathbf{Q}\in\mathcal{P}_n$,
        \begin{equation*}
            \chi_{\mathbf{P},\mathbf{Q}} = \frac{1}{2^{3n}} \sum_{\mathbf{P}',\mathbf{Q}'\in\mathcal{P}_n} L_{\mathbf{P}',\mathbf{Q}'} \Tr(\mathbf{P}' \mathbf{Q} \mathbf{Q}'\mathbf{P})\,.
        \end{equation*}
    \end{lemma}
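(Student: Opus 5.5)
The plan is to make rigorous the computation sketched just before the statement. It has two ingredients: the forward relation \eqref{eq:L-from-chi} and the Fierz identity \eqref{Fierzequation}.

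\textbf{Step 1: the forward relation.} Insert the expansion \eqref{eq:chi-expansion} into $L_{\mathbf P,\mathbf Q}=2^{-n}\Tr(\mathbf P\cL(\mathbf Q))$. By linearity of the trace this gives \eqref{eq:L-from-chi},
\begin{equation*}
L_{\mathbf P,\mathbf Q}=2^{-n}\sum_{\mathbf P',\mathbf Q'}\chi_{\mathbf P',\mathbf Q'}\Tr(\mathbf P\mathbf P'\mathbf Q\mathbf Q').
\end{equation*}
The expansion coefficients exist and are unique because $\{E_{\mathbf P,\mathbf Q}\}$ is a basis, as noted in Section~\ref{sec:framework-and-definitions}.

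\textbf{Step 2: the Fierz identity.} Prove \eqref{Fierzequation} (Lemma~\ref{lem:fierz}). This is the substantive step. I would use Pauli completeness in the form of the swap decomposition
\begin{equation*}
\sum_{\mathbf P}\mathbf P\otimes\mathbf P=2^n\,\mathbb F,
\end{equation*}
where $\mathbb F$ is the swap operator on $\cH\otimes\cH$. Equivalently, $\sum_{\mathbf P}\Tr(\mathbf P X)\Tr(\mathbf P Y)=2^n\Tr(XY)$.
\begin{itemize}
\item Apply this to the sum over $\mathbf P$ with $X=A\mathbf Q B$ and $Y=C\mathbf Q D$. This gives $2^n\Tr(A\mathbf Q BC\mathbf Q D)=2^n\Tr(\mathbf Q BC\mathbf Q DA)$.
\item Then sum over $\mathbf Q$ using the twirl identity $\sum_{\mathbf Q}\mathbf Q M\mathbf Q=2^n\Tr(M)\,I$. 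Applied to $M=BC$, it yields $2^n\cdot 2^n\Tr(BC)\Tr(DA)=4^n\Tr(AD)\Tr(BC)$.
\end{itemize}
Both identities factor over qubits. It therefore suffices to check the single-qubit versions $\sum_{P}P\otimes P=2\,\mathbb F$ and $\sum_P PMP=2\Tr(M)I$, which follow by testing on the matrix units $|i\rangle\langle j|$.

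\textbf{Step 3: the contraction.} Multiply \eqref{eq:L-from-chi} by $\Tr(\mathbf P\mathbf P''\mathbf Q\mathbf Q'')$ and sum over $\mathbf P,\mathbf Q$. Exchange the finite sums and apply Step 2 with $A=\mathbf P'$, $B=\mathbf Q'$, $C=\mathbf P''$, $D=\mathbf Q''$. Orthogonality $\Tr(\mathbf P'\mathbf Q'')=2^n\delta_{\mathbf P',\mathbf Q''}$ then collapses the double sum to $2^{3n}\chi_{\mathbf Q'',\mathbf P''}$.

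\textbf{Step 4: relabelling.} Rename $\mathbf Q''\to\mathbf P$, $\mathbf P''\to\mathbf Q$, and rename the dummy summation indices $\mathbf P,\mathbf Q$ as $\mathbf P',\mathbf Q'$. This gives
\begin{equation*}
\chi_{\mathbf P,\mathbf Q}=2^{-3n}\sum L_{\mathbf P',\mathbf Q'}\Tr(\mathbf P'\mathbf Q\mathbf Q'\mathbf P).
\end{equation*}
The bookkeeping here is the one delicate point: the cyclic order of the trace must match the statement exactly. I would verify it by substituting directly, since $\Tr(\mathbf P\mathbf P''\mathbf Q\mathbf Q'')$ becomes $\Tr(\mathbf P'\mathbf Q\mathbf Q'\mathbf P)$ under the renaming.

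The main obstacle is establishing the Fierz identity with the correct index ordering. Once it holds, the remaining steps are orthogonality and relabelling.
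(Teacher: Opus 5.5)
Your proposal is correct and follows the paper's argument: substitute the $\chi$-expansion to obtain \eqref{eq:L-from-chi}, contract with $\Tr(\mathbf P\mathbf P''\mathbf Q\mathbf Q'')$ using the Fierz identity, collapse via Pauli orthogonality to $2^{3n}\chi_{\mathbf Q'',\mathbf P''}$, and relabel. The only difference is cosmetic. You prove the Fierz identity in coordinate-free form, using $\sum_{\mathbf P}\mathbf P\otimes\mathbf P=2^n\mathbb F$ together with the Pauli twirl. The paper uses the same completeness relation, written in index notation as $\sum_{\mathbf P}(\mathbf P)_{ij}(\mathbf P)_{kl}=2^n\delta_{i,l}\delta_{j,k}$.
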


    \subsection{Fierz trace identity}
        It remains to show \eqref{Fierzequation}:
        \begin{lemma}[Fierz trace identity for Pauli matrices]\label{lem:fierz}
            For any matrices $A, B, C, D \in \mathbb{M}_{2^n}$, the following identity holds:
            \begin{equation}
                \sum_{\mathbf{P},\mathbf{Q}\in\mathcal{P}_n} \Tr(\mathbf{P} A \mathbf{Q} B) \Tr(\mathbf{P} C \mathbf{Q} D) = 4^n \Tr(AD) \Tr(BC)\,.
            \end{equation}
        \end{lemma}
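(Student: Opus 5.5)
The plan is to reduce the identity to the Pauli completeness (swap) relation
\begin{equation*}
    \sum_{\mathbf P\in\mathcal P_n}\mathbf P\otimes\mathbf P = 2^n\,\mathbb F,
\end{equation*}
where $\mathbb F$ is the swap operator on $\mathcal H\otimes\mathcal H$. Equivalently, in index form, $\sum_{\mathbf P}\mathbf P_{ij}\mathbf P_{kl}=2^n\delta_{il}\delta_{jk}$. This relation follows from the single-qubit identity $\sum_{P\in\{I,X,Y,Z\}}P\otimes P=2F$, which is checked directly on the four Bell states or on the computational basis, and then tensorised over the $n$ sites. Alternatively, it follows from orthonormality of $\{2^{-n/2}\mathbf P\}$ in the Hilbert--Schmidt inner product, i.e.\ $\sum_{\mathbf P}\Tr(\mathbf P X)\Tr(\mathbf P Y)=2^n\Tr(XY)$.

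The cleanest route uses the latter form twice. First, fix $\mathbf Q$ and apply it in the sum over $\mathbf P$ with $X=A\mathbf Q B$ and $Y=C\mathbf Q D$. This gives
\begin{equation*}
    \sum_{\mathbf P}\Tr(\mathbf P A\mathbf Q B)\Tr(\mathbf P C\mathbf Q D)=2^n\Tr(A\mathbf Q B C\mathbf Q D)=2^n\Tr(\mathbf Q\,(BC)\,\mathbf Q\,(DA)).
\end{equation*}
This step uses cyclicity of the trace.

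Second, sum over $\mathbf Q$ using the twirl identity
\begin{equation*}
    \sum_{\mathbf Q}\mathbf Q M\mathbf Q = 2^n\Tr(M)\,I,
\end{equation*}
which is itself a consequence of the swap relation: in indices, $\sum_{\mathbf Q}\mathbf Q_{ij}M_{jk}\mathbf Q_{kl}=2^n\delta_{il}\delta_{jk}M_{jk}$. With $M=BC$ this yields $2^n\Tr(BC)\Tr(DA)$. Multiplying the two factors of $2^n$ gives the claimed $4^n\Tr(AD)\Tr(BC)$.

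The key steps, in order, are:
\begin{enumerate}
    \item Verify the single-qubit swap identity.
    \item Tensorise it to $n$ qubits, using that $\mathcal P_n$ is a product basis and the swap factorises across sites.
    \item Derive both the Hilbert--Schmidt resolution and the twirl identity from it.
    \item Combine them as above.
\end{enumerate}

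There is no real obstacle here. The only place care is needed is the bookkeeping of operator order under cyclic permutations, so that one correctly obtains $\Tr(AD)\Tr(BC)$ rather than, for example, $\Tr(AB)\Tr(CD)$. I will double-check this by testing with $A=B=C=D=I$: the left side is $\sum_{\mathbf P,\mathbf Q}\Tr(\mathbf P\mathbf Q)^2=4^n\cdot 4^n$, and the right side is $4^n\cdot 2^n\cdot 2^n$, so they agree. Linearity in each of $A,B,C,D$ means no generality is lost by working with arbitrary matrices.
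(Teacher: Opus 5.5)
Your proof is correct and rests on the same key ingredient as the paper's proof, namely the Pauli completeness relation $\sum_{\mathbf P}(\mathbf P)_{ij}(\mathbf P)_{kl}=2^n\delta_{il}\delta_{jk}$. The only difference is bookkeeping: the paper writes both traces in index form and contracts all indices in one step, whereas you do the $\mathbf P$-sum with the Hilbert--Schmidt resolution and then the $\mathbf Q$-sum with the twirl identity $\sum_{\mathbf Q}\mathbf Q M\mathbf Q=2^n\Tr(M)\,I$, which makes the operator ordering slightly easier to track.
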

        \begin{proof}
            To prove the Fierz trace identity, we express the traces in explicit matrix index notation and rely on the completeness relation for the orthogonal basis of $2^n \times 2^n$ matrices $\mathbf{P}$ satisfying $\Tr(\mathbf{P} \mathbf{Q}) = 2^n \delta_{\mathbf{P},\mathbf{Q}}$, which is given by:
            \begin{equation}\label{eq:completeness}
                \sum_{\mathbf{P}} (\mathbf{P})_{ij} (\mathbf{P})_{kl} = 2^n \delta_{i,l} \delta_{j,k}\,.
            \end{equation}
            This can be seen by expanding an arbitrary $2^n\times 2^n$ matrix $M = \sum_{\mathbf{P}} c_{\mathbf{P}} \mathbf{P}$ with $c_{\mathbf{P}} = \frac{1}{2^n} \text{Tr}(\mathbf{P} M)$. In terms of matrix components $i, j$, and writing out the trace and reordering the sums, we have:
            \begin{align*}
                M_{ij} &= \frac{1}{2^n} \sum_{\mathbf{P}} \left( \sum_{k,l} (\mathbf{P})_{kl} M_{lk} \right) (\mathbf{P})_{ij}= \sum_{k,l} M_{lk} \left( \frac{1}{2^n} \sum_{\mathbf{P}} (\mathbf{P})_{ij} (\mathbf{P})_{kl} \right)
            \end{align*}
            For this equality to hold for an arbitrary matrix $M$, the term in parentheses satisfies Equation \eqref{eq:completeness}. With that identity, we expand the left-hand side of the claimed identity:
            \begin{align*}
                \sum_{\mathbf{P},\mathbf{Q}}& \Tr(\mathbf{P} A \mathbf{Q} B) \Tr(\mathbf{P} C \mathbf{Q} D)\\
                &= \sum_{\substack{i_1,...,i_4\\j_1,...,j_4}}\sum_{\mathbf{P},\mathbf{Q}} (\mathbf{P})_{i_1i_2} A_{i_2i_3} (\mathbf{Q})_{i_3i_4} B_{i_4i_1} (\mathbf{P})_{j_1j_2} C_{j_2j_3} (\mathbf{Q})_{j_3j_4} D_{j_4j_1} \\
                &= \sum_{\substack{i_1,...,i_4\\j_1,...,j_4}}\Bigl(\sum_{\mathbf{P}} (\mathbf{P})_{i_1i_2}(\mathbf{P})_{j_1j_2}\Bigr)\Bigl(\sum_{\mathbf{Q}}(\mathbf{Q})_{i_3i_4}(\mathbf{Q})_{j_3j_4}\Bigr)A_{i_2i_3}B_{i_4i_1}C_{j_2j_3}D_{j_4j_1} \\
                &= 4^n\sum_{\substack{i_1,...,i_4\\j_1,...,j_4}}\delta_{i_1,j_2}\delta_{i_2,j_1}\delta_{i_3,j_4}\delta_{i_4,j_3}A_{i_2i_3}B_{i_4i_1}C_{j_2j_3}D_{j_4j_1} \\
                &= 4^n\sum_{i_1,...,i_4}A_{i_2i_3}B_{i_4i_1}C_{i_1i_4}D_{i_3i_2} \\
                &= 4^n\operatorname{Tr}(AD)\operatorname{Tr}(BC)\,,
            \end{align*}
            which finishes the proof.
        \end{proof}

    \subsection{Local inversion} \label{genericscheme}
        At this stage, it is unclear how to efficiently compute the coefficients $\chi_{\mathbf{P},\mathbf{Q}}$ from Lemma \ref{lem:global-inversion}, since the sum involves an exponential number of coefficients. For this, given a region $R\subseteq[n]$ to be fixed later, we define the maps
        \begin{align*}
            \cL^{(R)}:=\Tr_{\overline{R}}\left[\cL\left(\bullet \otimes \frac{I_{\overline{R}}}{2^{|\overline{R}|}}\right)\right].
        \end{align*}
        A direct consequence of this definition is that the corresponding local PTM entries are given by the same formula as the global PTM entries:
        \begin{equation}\label{eq:local-vs-global-PTM}
            L^{(R)}_{\mathbf{P}_R,\mathbf{Q}_R} := \frac{1}{2^{|R|}}\Tr_{R}\bigl(\mathbf{P}_{R}\cL^{(R)}(\mathbf{Q}_{R})\bigr) = \frac{1}{2^n}\Tr\bigl((\mathbf{P}_{R}\otimes I_{\overline{R}})\,\cL(\mathbf{Q}_R\otimes I_{\overline{R}})\bigr) = L_{\mathbf{P}_R\otimes I_{\overline{R}},\mathbf{Q}_R\otimes I_{\overline{R}}}\,.
        \end{equation}
        Next, we get
        \begin{align*}
            \cL^{(R)}=\sum_{\mathbf{P},\mathbf{Q}\in\mathcal{P}_n}\chi_{\mathbf{P},\mathbf{Q}}\Tr_{\overline{R}}\left[\mathbf{P}(\bullet\otimes I_{\overline{R}}/2^{|\overline{R}|})\mathbf{Q}\right]=\sum_{\substack{\mathbf{P},\mathbf{Q}\in\mathcal{P}_n\\\mathbf{P}_{\overline{R}}=\mathbf{Q}_{\overline{R}}}}\chi_{\mathbf{P},\mathbf{Q}}\,\mathbf{P}_R\bullet \mathbf{Q}_R.
        \end{align*}
        By uniqueness of the $\chi$-decomposition, denoting by $\chi^{(R)}_{\mathbf{P}_R,\mathbf{Q}_R}$ the $\chi$-coefficients of $\cL^{(R)}$, the above equation yields
        \begin{align}\label{eq:chiRtochi}
            \chi^{(R)}_{\mathbf{P}_R,\mathbf{Q}_R}=\sum_{\mathbf{P}_{\overline{R}}}\chi_{\mathbf{P}_R\otimes\mathbf{P}_{\overline{R}},\mathbf{Q}_R\otimes\mathbf{P}_{\overline{R}}}\,.
        \end{align}
        By the same proof as for the inversion formula Lemma \ref{lem:global-inversion}, we obtain
        \begin{align}\label{eq:localinversionform}
            \chi^{(R)}_{\mathbf{P}_R,\mathbf{Q}_R}=\frac{1}{2^{3|R|}}\sum_{\mathbf{P}'_R,\mathbf{Q}'_R}L^{(R)}_{\mathbf{P}'_R,\mathbf{Q}'_R}\Tr(\mathbf{P}'_R\mathbf{Q}_R\mathbf{Q}'_R\mathbf{P}_R).
        \end{align}
        We refer to this equation as a local inversion formula. This can be efficiently computed as long as one has access to the coefficients $L^{\smash{(R)}}_{\mathbf{P}_R,\mathbf{Q}_R}=L_{\mathbf{P}_R\otimes I_{\overline{R}},\mathbf{Q}_{R}\otimes I_{\overline{R}}}$. Now, by Lemma \ref{lem:locality-kossakowski} and Equation~\eqref{eq:chiG}, for any $\mathbf{P},\mathbf{Q}\in\mathcal{P}_n\backslash \{I\}$, $\chi_{\mathbf{P},\mathbf{Q}}=0$ whenever $|\mathrm{supp}(\mathbf{P})\cup\mathrm{supp}(\mathbf{Q})|>k$. This implies the existence of a collection $\mathcal{R}$ of constant-size regions $R$ such that for any $\mathbf{P},\mathbf{Q}\in\mathcal{P}_n$,
        \begin{equation}
            \chi_{\mathbf{P},\mathbf{Q}}\ne 0\,\Rightarrow\, \exists R\in\mathcal{R}:\,\mathbf{P}_{\overline{R}}=\mathbf{Q}_{\overline{R}}=I_{\overline{R}}\,
        \end{equation}
        (simply take $R=\operatorname{supp}(\mathbf{P})\cup\operatorname{supp}(\mathbf{Q})$). Now, choosing a maximal set $R\in\mathcal{R}$ with respect to the partial order induced by inclusions (i.e. for all $R'\in \mathcal{R}$, $R\subseteq R'\Rightarrow R=R'$), we get from \eqref{eq:chiRtochi} that for any two Pauli strings $\mathbf{P}_R,\mathbf{Q}_R$ with $\operatorname{supp}(\mathbf{P}_R)\cup\operatorname{supp}(\mathbf{Q}_R)=R$,
        \begin{align*}
            \chi^{(R)}_{\mathbf{P}_R,\mathbf{Q}_R}=\sum_{\mathbf{P}_{\overline{R}}}\chi_{\mathbf{P}_R\otimes \mathbf{P}_{\overline{R}},\mathbf{Q}_R\otimes\mathbf{P}_{\overline{R}}}=\chi_{\mathbf{P}_R\otimes I_{\overline{R}},\mathbf{Q}_R\otimes I_{\overline{R}}}\,,
        \end{align*}
        where the sum collapses to a single element by maximality of $R$. Once the $\chi$-coefficients of $\cL$ have all been computed for maximal regions $R$, we can iteratively repeat this procedure by replacing $\mathcal{R}$ with the set of regions remaining after removing those regions. Then, for any maximal region $R'$ of that new set and any Pauli strings $\mathbf{P}_{R'},\mathbf{Q}_{R'}$ with $\operatorname{supp}(\mathbf{P}_{R'})\cup \operatorname{supp}(\mathbf{Q}_{R'})=R'$, we have
        \begin{align*}
            \chi_{\mathbf{P}_{R'}\otimes I_{\overline{R'}},\mathbf{Q}_{R'}\otimes I_{\overline{R'}}}=\chi^{(R')}_{\mathbf{P}_{R'},\mathbf{Q}_{R'}}-\sum_{R\in\mathcal{R},R'\subsetneq  R}\sum_{\substack{\mathbf{P}'_R,\mathbf{Q}'_R\\
            \operatorname{supp}(\mathbf{P}'_R)\cup \operatorname{supp}(\mathbf{Q}'_R)=R\\
            \mathbf{P}_{R'}=\mathbf{P}'_{R'},\mathbf{Q}_{R'}=\mathbf{Q}'_{R'}}}\chi^{(R)}_{\mathbf{P}'_R,\mathbf{Q}'_R}.
        \end{align*}
        Repeating this until all sets in $\mathcal{R}$ have been explored computes all nonzero $\chi$-entries of $\cL$. This procedure is turned into an efficient protocol in Algorithm~\ref{alg:recover-chi}.

        \begin{algorithm}[ht]
            \caption{Recovery of the nonzero $\chi$-coefficients from local PTM entries}
            \label{alg:recover-chi}
            \begin{algorithmic}[1]
                \REQUIRE Locality parameter $k$; estimates
                \begin{equation*}
                    \widehat L_{\mathbf A\otimes I_{\overline R},\,\mathbf B\otimes I_{\overline R}}
                \end{equation*}
                for all $R\subseteq[n]$, $|R|\le k$, and all $\mathbf A,\mathbf B\in\mathcal P_R$.
                \ENSURE Estimates $\widehat\chi_{\mathbf P,\mathbf Q}$ for all pairs satisfying
                \begin{equation*}
                    |\operatorname{supp}(\mathbf P)\cup\operatorname{supp}(\mathbf Q)|\le k.
                \end{equation*}

                \STATE For every $R\subseteq[n]$, $|R|\le k$, and every $\mathbf P_R,\mathbf Q_R\in\mathcal P_R$, compute the local coefficient
                \begin{equation*}
                    \widehat\chi^{(R)}_{\mathbf P_R,\mathbf Q_R}
                    :=
                    \frac{1}{2^{3|R|}}
                    \sum_{\mathbf A_R,\mathbf B_R\in\mathcal P_R}
                    \widehat L_{\mathbf A_R\otimes I_{\overline R},\,\mathbf B_R\otimes I_{\overline R}}
                    \Tr\!\left(\mathbf A_R\mathbf Q_R\mathbf B_R\mathbf P_R\right).
                \end{equation*}

                \STATE Initialize all recovered global coefficients $\widehat\chi_{\mathbf P,\mathbf Q}$ as undefined.

                \FOR{$s=k,k-1,\ldots,1$\qquad}
                \FOR{every region $R\subseteq[n]$ with $|R|=s$\qquad}
                \FOR{every pair $\mathbf P_R,\mathbf Q_R\in\mathcal P_R$ such that $\operatorname{supp}(\mathbf P_R)\cup\operatorname{supp}(\mathbf Q_R)=R$\qquad}
                \STATE
                \begin{equation*}
                    \widehat\chi_{\mathbf P_R\otimes I_{\overline R}, \mathbf Q_R\otimes I_{\overline R}} := \widehat\chi^{(R)}_{\mathbf P_R,\mathbf Q_R} - \sum_{\substack{\mathbf A_{\overline R}\in\mathcal P_{\overline R}\\
                    \mathbf A_{\overline R}\neq I_{\overline R}\\
                    |R\cup\operatorname{supp}(\mathbf A_{\overline R})|\le k}} \widehat\chi_{\mathbf P_R\otimes \mathbf A_{\overline R},\mathbf Q_R\otimes \mathbf A_{\overline R}}.
                \end{equation*}
                \ENDFOR
                \ENDFOR
                \ENDFOR

                \RETURN The coefficients $\widehat\chi_{\mathbf P,\mathbf Q}$.
            \end{algorithmic}
        \end{algorithm}

        \begin{lemma}[Correctness, efficiency, and stability of the local $\chi$-recovery]
            \label{lem:chi-recovery-stability}
            Given a $k$-local Lindbladian $\mathcal L$, assume moreover access to $\varepsilon_L$-accurate PTM estimates uniformly for all $R\subseteq[n]$, $|R|\le k$, and all $\mathbf A_R,\mathbf B_R\in\mathcal P_R$:
            \begin{equation*}
                \left|
                \widehat L_{\mathbf A_R\otimes I_{\overline R},
                \mathbf B_R\otimes I_{\overline R}}
                -
                L_{\mathbf A_R\otimes I_{\overline R},
                \mathbf B_R\otimes I_{\overline R}}
                \right|
                \le \varepsilon_L .
            \end{equation*}
            Then Algorithm~\ref{alg:recover-chi} recovers all possibly nonzero global $\chi$-coefficients. More precisely, the recovered coefficients $\widehat{\chi}_{\mathbf{P},\mathbf{Q}}$ satisfy the following entrywise error bound:
            \begin{equation*}
                \varepsilon_{\chi}:=\|\widehat{\chi}-\chi\|_\infty=\max_{\mathbf P,\mathbf Q}
                \left|
                \widehat\chi_{\mathbf P,\mathbf Q}
                -
                \chi_{\mathbf P,\mathbf Q}
                \right|
                = \mathcal{O}(n^k)\,\varepsilon_L .
            \end{equation*}
            Moreover, for fixed $k$, the number of elementary operations performed by Algorithm~\ref{alg:recover-chi} is $\mathcal{O}(n^k)$, with constants depending only on $k$.
        \end{lemma}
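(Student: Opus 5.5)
The argument has three parts: exactness of the recursion when $\widehat L=L$, propagation of the errors through the local inversion and the downward recursion, and an operation count. Everything is organised by the size $s=|R|$ of the region being processed.

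\emph{Correctness.} First I check that Algorithm~\ref{alg:recover-chi} is exact for noiseless data. By \eqref{eq:local-vs-global-PTM} and the local inversion formula \eqref{eq:localinversionform}, the first step returns exactly $\chi^{(R)}$. By \eqref{eq:chiRtochi} we have $\chi^{(R)}_{\mathbf P_R,\mathbf Q_R}=\sum_{\mathbf A_{\overline R}}\chi_{\mathbf P_R\otimes\mathbf A_{\overline R},\mathbf Q_R\otimes\mathbf A_{\overline R}}$. Lemma~\ref{lem:locality-kossakowski} together with \eqref{eq:chiG} kills every term whose support union exceeds $k$ (for nonidentity pairs). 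Each remaining term with $\mathbf A_{\overline R}\neq I$ has support union $R\cup\operatorname{supp}(\mathbf A_{\overline R})$ strictly larger than $R$, because $\operatorname{supp}(\mathbf P_R)\cup\operatorname{supp}(\mathbf Q_R)=R$. So it was computed at an earlier iteration of the outer loop, and induction downward on $s=k,\dots,1$ gives exact recovery.

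\emph{Stability.} Each $\widehat\chi^{(R)}$ is a linear combination of $4^{2|R|}$ PTM errors with weights $|\Tr(\cdot)|/2^{3|R|}\le 2^{-2|R|}$. Hence
\[
e_0:=\max_R\|\widehat\chi^{(R)}-\chi^{(R)}\|_\infty\le 4^{|R|}\varepsilon_L\le 4^k\varepsilon_L .
\]
Let $e_s$ be the maximal error of a global coefficient whose support union has size $s$. The subtraction in level $s$ runs over extensions $\mathbf A_{\overline R}$ with $|R\cup\supp\mathbf A_{\overline R}|=s'$ for $s<s'\le k$. The number of these is at most $\binom{n-s}{s'-s}3^{s'-s}$. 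This gives the recursion
\[
e_s\le e_0+\sum_{s'=s+1}^{k}\binom{n}{s'-s}3^{s'-s}e_{s'},\qquad e_k\le e_0 .
\]
Unrolling it, I expect $e_s\le C_k\, n^{k-s}e_0$ by induction. The dominant contribution is the chain of single-site extensions, each contributing a factor $3n$; more generally the product $\prod\binom{n}{j_i}$ over any composition of $k-s$ has total degree $k-s$ in $n$. Since $s\ge1$, this yields $\varepsilon_\chi=\mathcal O_k(n^{k-1})\varepsilon_L\subseteq\mathcal O(n^k)\varepsilon_L$. The step needing care is this bookkeeping: showing that the compositions of $k-s$ contribute only $\mathcal O_k(1)$ combinatorial constants, e.g.\ $2^{k}3^k$. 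That keeps the constants depending only on $k$.

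\emph{Complexity.} There are $\mathcal O(n^k)$ regions with $|R|\le k$, each with $16^{|R|}$ Pauli pairs. The first step therefore costs $\mathcal O_k(n^k)$, since each local coefficient is a constant-size sum. For the recursion loops, a naive count gives $\mathcal O(n^{k-s})$ subtraction terms per coefficient of size $s$. Summed over $\binom ns$ regions this is $\mathcal O_k(n^k)$ per level, hence $\mathcal O_k(n^k)$ in total over the $k$ levels. Storing coefficients in a hash map keyed by (region, local Paulis) gives constant-time lookups.
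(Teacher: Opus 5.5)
Your proposal is correct and follows essentially the same route as the paper: exactness from the marginal identity \eqref{eq:chiRtochi} by descending induction on $|R|$, a binomial recursion for the level errors unrolled to $e_s=\mathcal O_k(n^{k-s})\varepsilon_L$, and an $\mathcal O_k(n^k)$ operation count. The differences are minor: your local-inversion constant $4^k$ is looser than the paper's exact bound $\varepsilon_L$ (only $4^{|R|}$ of the pairs $(\mathbf A_R,\mathbf B_R)$ give a nonzero trace), and your sharpened $\mathcal O_k(n^{k-1})\varepsilon_L$ applies only to the coefficients with $s\ge1$ that the loop outputs, whereas $\chi_{I,I}$, which the paper treats separately via $\chi_{I,I}=-\sum_{\mathbf P\neq I}\chi_{\mathbf P,\mathbf P}$ and which you omit, is only controlled at the $\mathcal O(n^k)\varepsilon_L$ level.
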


        \begin{proof}
            For each $R\subseteq[n]$, the local inversion formula gives
            \begin{equation*}
                \chi^{(R)}_{\mathbf P_R,\mathbf Q_R}
                =
                \frac{1}{2^{3|R|}}
                \sum_{\mathbf A_R,\mathbf B_R\in\mathcal P_R}
                L_{\mathbf A_R\otimes I_{\overline R},
                \mathbf B_R\otimes I_{\overline R}}
                \Tr\!\left(\mathbf A_R\mathbf Q_R\mathbf B_R\mathbf P_R\right).
            \end{equation*}
            Thus the first line of Algorithm~\ref{alg:recover-chi} exactly computes $\chi^{(R)}_{\mathbf P_R,\mathbf Q_R}$ when the PTM entries are exact. Next, by Equation \eqref{eq:chiRtochi}, for $R=\operatorname{supp}(\mathbf P)\cup\operatorname{supp}(\mathbf Q)$, we have
            \begin{equation*}
                \chi^{(R)}_{\mathbf P_R,\mathbf Q_R}
                =
                \sum_{\mathbf A_{\overline R}\in\mathcal P_{\overline R}}
                \chi_{\mathbf P_R\otimes\mathbf A_{\overline R},
                \mathbf Q_R\otimes\mathbf A_{\overline R}}.
            \end{equation*}
            Since all terms with $|R\cup\operatorname{supp}(\mathbf A_{\overline R})|>k$ vanish by locality, we have
            \begin{equation*}
                \chi_{\mathbf P_R\otimes I_{\overline R},
                \mathbf Q_R\otimes I_{\overline R}}
                =
                \chi^{(R)}_{\mathbf P_R,\mathbf Q_R}
                -
                \sum_{\substack{
                \mathbf A_{\overline R}\neq I_{\overline R}\\
                |R\cup\operatorname{supp}(\mathbf A_{\overline R})|\le k
                }}
                \chi_{\mathbf P_R\otimes\mathbf A_{\overline R},
                \mathbf Q_R\otimes\mathbf A_{\overline R}}.
            \end{equation*}
            Every coefficient appearing in the sum on the right-hand side has support union strictly larger than $R$. Hence processing the regions in decreasing order of $|R|$ gives a valid descending induction and recovers all coefficients with support union of size at most $k$. Note that the special case $\chi_{I,I}$ is given by
            \begin{equation*}
                \chi_{I,I}=-\sum_{\mathbf P\in\mathcal P_n\setminus\{I\}}\chi_{\mathbf P,\mathbf P}
            \end{equation*}
            following the same inversion steps.
            
            We now prove the error bound. From the local inversion  \eqref{eq:localinversionform}, using the uniform PTM bound and the Pauli trace orthogonality, one obtains
            \begin{equation*}
                \left|
                \widehat\chi^{(R)}_{\mathbf P_R,\mathbf Q_R}
                -
                \chi^{(R)}_{\mathbf P_R,\mathbf Q_R}
                \right|
                \le \varepsilon_L .
            \end{equation*}
            Indeed, for fixed Pauli strings $\mathbf P_R,\mathbf Q_R$, among the pairs $(\mathbf A_R,\mathbf B_R)$ only $4^{|R|}$ terms have nonzero trace $\Tr(\mathbf A_R\mathbf Q_R\mathbf B_R\mathbf P_R)$, and each nonzero trace has modulus $2^{|R|}$. Hence
            \begin{equation*}
                \left|
                \widehat\chi^{(R)}_{\mathbf P_R,\mathbf Q_R}
                -
                \chi^{(R)}_{\mathbf P_R,\mathbf Q_R}
                \right|
                \le
                2^{-3|R|}
                \cdot 4^{|R|}
                \cdot 2^{|R|}
                \cdot \varepsilon_L
                =
                \varepsilon_L .
            \end{equation*}
            Let
            \begin{equation*}
                e_s
                :=
                \max_{\substack{\mathbf P,\mathbf Q\\
                |\operatorname{supp}(\mathbf P)\cup\operatorname{supp}(\mathbf Q)|=s}}
                \left|
                \widehat\chi_{\mathbf P,\mathbf Q}
                -
                \chi_{\mathbf P,\mathbf Q}
                \right|.
            \end{equation*}
            The recursive subtraction formula gives
            \begin{equation}\label{eq:inversion-error}
                e_s \le \varepsilon_L + \sum_{t=1}^{k-s} \binom{n-s}{t}3^t\, e_{s+t}\,,
            \end{equation}
            where the combinatorial factor counts the ways of placing $t$ balls into $n-s$ distinct boxes without repetition. By induction and since $k$ is fixed, we get $e_s=\mathcal{O}(n^{k-s})\varepsilon_L=\mathcal{O}(n^k)\varepsilon_L$. Finally, the number of regions $R\subseteq[n]$ with $|R|\le k$ scales as $\mathcal{O}(n^k)$. Moreover, for each such $R$, the number of local Pauli pairs and the cost of the local inversion are bounded by constants depending only on $k$. The subtraction step also involves at most
            \begin{equation*}
                \sum_{t=0}^{k-|R|}\binom{n-|R|}{t}3^t
            \end{equation*}
            extensions for a fixed pair, and summing over all $R$ with $|R|\le k$ gives $\mathcal{O}(n^k)$ operations for fixed $k$. This proves the claimed efficiency.
        \end{proof}
        \noindent As a direct corollary, we can reconstruct the matrix $G$ and vector $h$ from the PTM entries:
        \begin{corollary}\label{thm:dissipative-learning}
            With the notation of Lemma \ref{lem:chi-recovery-stability}, there exist efficiently computable estimators $\widehat{G}_{\mathbf{P},\mathbf{Q}}$ and $\widehat{h}_{\mathbf{Q}}$ such that
            \begin{equation}
                |\widehat{G}_{\mathbf{P},\mathbf{Q}} - G_{\mathbf{P},\mathbf{Q}}| =\mathcal{O}(n^k) \varepsilon_L \qquad \text{and} \qquad |\widehat{h}_{\mathbf{P}} - h_{\mathbf{P}}| =\mathcal{O}(n^k) \varepsilon_L
            \end{equation}
            for all $\mathbf{P},\mathbf{Q}\in\mathcal{P}_n\backslash I$. The total runtime scales as $\mathcal{O}(n^k)$, with constants depending only on $k$.
        \end{corollary}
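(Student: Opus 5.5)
The plan is to read Corollary~\ref{thm:dissipative-learning} off from Lemma~\ref{lem:chi-recovery-stability} together with the identification \eqref{eq:chiG}.

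\emph{Step 1 (recover $\chi$).} Run Algorithm~\ref{alg:recover-chi} on the $\varepsilon_L$-accurate local PTM estimates. By Lemma~\ref{lem:chi-recovery-stability}, this returns $\widehat\chi_{\mathbf P,\mathbf Q}$ for every pair with $|\operatorname{supp}(\mathbf P)\cup\operatorname{supp}(\mathbf Q)|\le k$. The entrywise error is $\|\widehat\chi-\chi\|_\infty=\mathcal O(n^k)\varepsilon_L$, and the cost is $\mathcal O(n^k)$ operations. For every other pair of non-identity strings, set $\widehat\chi_{\mathbf P,\mathbf Q}:=0$. This is exact, since Lemma~\ref{lem:locality-kossakowski} gives $G_{\mathbf P,\mathbf Q}=\chi_{\mathbf P,\mathbf Q}=0$ there. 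Likewise, $h_{\mathbf P}=0$ whenever $\wt(\mathbf P)>k$.

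\emph{Step 2 (define $\widehat G$ and $\widehat h$).} Following \eqref{eq:chiG}, put
\begin{equation*}
\widehat G_{\mathbf P,\mathbf Q}:=\widehat\chi_{\mathbf P,\mathbf Q},\qquad \widehat h_{\mathbf P}:=\tfrac{i}{2}\bigl(\widehat\chi_{\mathbf P,I}-\widehat\chi_{I,\mathbf P}\bigr).
\end{equation*}
The pairs $(\mathbf P,I)$ and $(I,\mathbf P)$ with $\wt(\mathbf P)\le k$ have support union of size at most $k$, so they are among the recovered entries.

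\emph{Step 3 (error bounds).} The $G$ bound is immediate from Step 1. For $h$, the triangle inequality gives
\begin{equation*}
|\widehat h_{\mathbf P}-h_{\mathbf P}|\le\tfrac12\bigl(|\widehat\chi_{\mathbf P,I}-\chi_{\mathbf P,I}|+|\widehat\chi_{I,\mathbf P}-\chi_{I,\mathbf P}|\bigr)\le\|\widehat\chi-\chi\|_\infty=\mathcal O(n^k)\varepsilon_L.
\end{equation*}

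\emph{Step 4 (runtime).} Forming $\widehat G$ and $\widehat h$ touches each of the $\mathcal O(n^k)$ recovered entries once. The total runtime is therefore dominated by the algorithm's $\mathcal O(n^k)$ cost, with constants depending only on $k$.

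The only point that needs care is the correctness of \eqref{eq:chiG} for the Hamiltonian coefficients. One must check that the terms of the GKSL form with one identity index contribute $\chi_{\mathbf P,I}$ and $\chi_{I,\mathbf P}$ in a way that makes the antisymmetric combination isolate $h_{\mathbf P}$. The anticommutator part $-\tfrac12\{\cdot,\cdot\}$ contributes symmetrically to both entries, and $-i[H,\bullet]$ contributes $-ih_{\mathbf P}$ to $\chi_{\mathbf P,I}$ and $+ih_{\mathbf P}$ to $\chi_{I,\mathbf P}$. Taking the difference and multiplying by $\tfrac i2$ therefore yields $h_{\mathbf P}$, and this identity is assumed from the paper. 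No real obstacle remains beyond this bookkeeping.
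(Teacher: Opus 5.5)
Your proposal is correct and follows the paper's own argument: the paper likewise sets $\widehat G_{\mathbf P,\mathbf Q}=\widehat\chi_{\mathbf P,\mathbf Q}$ and $\widehat h_{\mathbf P}=\tfrac{i}{2}\bigl(\widehat\chi_{\mathbf P,I}-\widehat\chi_{I,\mathbf P}\bigr)$ via \eqref{eq:chiG}, then reads off the error and runtime bounds from Lemma~\ref{lem:chi-recovery-stability}. Your additional bookkeeping is sound and fills in points the paper leaves implicit: zero-filling the entries whose support union exceeds $k$, and the sign check showing that the anticommutator contributions cancel in $\chi_{\mathbf P,I}-\chi_{I,\mathbf P}$.
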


        \begin{proof}
            The proof directly follows from Lemma \ref{lem:chi-recovery-stability} by setting $\widehat{G}_{\mathbf{P},\mathbf{Q}} = \widehat{\chi}_{\mathbf{P},\mathbf{Q}}$ and $\widehat{h}_{\mathbf{P}} = \frac{i}{2}\bigl(\widehat{\chi}_{\mathbf{P},I}-\widehat{\chi}_{I,\mathbf{P}}\bigr)$ as in Equation \eqref{eq:chiG}.
        \end{proof}

    \subsection{Improved inversion under stable sparse diagonal extensions}
        \label{subsec:stable-sparse-diagonal-extensions}

        We now consider a strengthening of the general $k$-local inversion argument in which we impose sparsity on the diagonal extensions appearing in the marginal identity. In the unrestricted $k$-local case, the recursion for $\chi^{(S)}_{\mathbf P_S,\mathbf Q_S}$ may involve all diagonal extensions $\mathbf A_{\overline S}$ with size at most $k-|S|$ whose number can scale as $n^{k-|S|}$. This is the source of the $n^k$-type error amplification in the general $k$-local inversion. We replace the ambient extension count with a thresholded diagonal-extension degree, which provides a more structural characterization of the underlying extension process. Let
        \begin{equation*}
            \mathcal I_k:=\left\{(\mathbf P,\mathbf Q)\in\mathcal P_n\times\mathcal P_n:\ |\operatorname{supp}(\mathbf P)\cup\operatorname{supp}(\mathbf Q)|\le k\right\}\qquad\text{and}\qquad \cI_k^\circ := \cI_k\backslash(I,I).
        \end{equation*}
        For a threshold $\tau>0$, define
        \begin{equation*}
            \Omega_\tau:=\left\{(\mathbf P,\mathbf Q)\in\mathcal I_k^\circ:\ |\chi_{\mathbf P,\mathbf Q}|>\tau\right\}.
        \end{equation*}
        For $S\subseteq[n]$ and $\mathbf P_S,\mathbf Q_S\in\mathcal P_S$, define
        \begin{equation*}
            \operatorname{Ext}_{\Omega_\tau}(S,\mathbf P_S,\mathbf Q_S)
            :=
            \left\{
            \mathbf A_{\overline S}\in\mathcal P_{\overline S}\setminus\{I_{\overline S}\}:
            (\mathbf P_S\otimes\mathbf A_{\overline S},\mathbf Q_S\otimes\mathbf A_{\overline S})\in\Omega_\tau
            \right\}.
        \end{equation*}
        Then, we define the thresholded diagonal-extension degree by
        \begin{align}\label{eq:degreeomegatau}
            \mathfrak d_\tau
            :=
            \max_{S,\mathbf P_S,\mathbf Q_S}
            \left|
            \operatorname{Ext}_{\Omega_\tau}(S,\mathbf P_S,\mathbf Q_S)
            \right|
        \end{align}
        and we denote by $D_\tau:=\sum_{\ell=0}^{k}\mathfrak d_\tau^\ell$ the associated recursive amplification factor. We also define the unresolved diagonal tail
        \begin{align}\label{eq:tailunresolvedomegatau}
            \rho_\tau
            :=
            \max_{S,\mathbf P_S,\mathbf Q_S}
            \sum_{\substack{\mathbf A_{\overline S}\in\mathcal P_{\overline S}\setminus\{I_{\overline S}\}:\\
            (\mathbf P_S\otimes\mathbf A_{\overline S},\mathbf Q_S\otimes\mathbf A_{\overline S})\notin\Omega_\tau}}
            \left|
            \chi_{\mathbf P_S\otimes\mathbf A_{\overline S},
            \mathbf Q_S\otimes\mathbf A_{\overline S}}
            \right|.
        \end{align}
        Thus $\mathfrak d_\tau$ controls the number of large diagonal extensions entering any local marginal, while $\rho_\tau$ controls the bias from discarded subthreshold extensions.

        Before continuing, we compare these expressions with the ones introduced in the literature: 
        \begin{rmk}\label{rmk:sparsity-comparison}
            First, a local effective sparsity parameter analogous to the $B_2$-sparsity of \cite{BakshiLiuMoitraTang2024} can be defined as
            \begin{equation*}
                r_\tau:=\max\left\{1,\max_{u\in[n]}\sum_{\substack{(\mathbf P,\mathbf Q)\in\mathcal I_k^\circ:\\
                u\in\operatorname{supp}(\mathbf P)\cup\operatorname{supp}(\mathbf Q)}}\min\left(1,\frac{|\chi_{\mathbf P,\mathbf Q}|^2}{\tau^2}\right)\right\}.
            \end{equation*}
            This parameter controls $\mathfrak d_\tau$. Indeed, if $S=\operatorname{supp}(\mathbf P_S)\cup\operatorname{supp}(\mathbf Q_S)\neq\emptyset$ and $u\in S$, then every coefficient counted by $\operatorname{Ext}_{\Omega_\tau}(S,\mathbf P_S,\mathbf Q_S)$ has support union containing $u$ and contributes $1$ to the clipped sum defining $r_\tau$. Hence $\mathfrak d_\tau\le r_\tau$. The case $S=\emptyset$ concerns only the identity component and is irrelevant for the recovery of $G$ and $h$. In the Hamiltonian setting, the coefficients are not recovered through a diagonal extension inversion at all: each Hamiltonian coefficient can be isolated from a single PTM entry. Indeed, the Lindblad $\chi$-matrix is supported only on coefficients of the form $(\mathbf R,I)$ and $(I,\mathbf R)$. Hence, in the marginal identity for such coefficients, every nontrivial diagonal extension vanishes. In this restricted sense, the diagonal-extension recursion has no nonidentity terms, so the unresolved diagonal tail is zero and the recursive amplification is absent, while $r_\tau$ does exactly extend the effective sparsity of \cite{BakshiLiuMoitraTang2024}.
        \end{rmk}
         In order to generalize the results of \cite{BakshiLiuMoitraTang2024}, we need to assume that the unresolved diagonal tail can be made small after recursive amplification. We formalize this target-accuracy condition as follows.

        \begin{assumption}[Stable sparse diagonal extensions at accuracy $\varepsilon_\chi$]
            \label{ass:stable-sparse-diagonal-extensions}
            For the target accuracy $\varepsilon_{\chi}>0$ under consideration, there exists a threshold $\tau_\chi>0$ such that
            \begin{equation*}
                \tau_\chi+D_{\tau_\chi}\rho_{\tau_\chi}\le {\varepsilon_\chi}.
            \end{equation*}
        \end{assumption}
        \noindent Equivalently, this assumption can be viewed through the achievable bias floor
        \begin{equation*}
            \varepsilon_{\rm sp}
            :=
            \inf_{\tau>0}
            \left(\tau+D_\tau\rho_\tau\right).
        \end{equation*}
        It holds at any target accuracy $\varepsilon_\chi>\varepsilon_{\rm sp}$; the case $\varepsilon_{\rm sp}=0$ gives guarantees at arbitrarily small target accuracy.

        \noindent We now describe the structure-learning version of the recursion.  The local inversion does not directly reveal a coefficient $\chi_{\mathbf P,\mathbf Q}$.  It reveals the marginal coefficient $\chi^{(S)}_{\mathbf P_S,\mathbf Q_S}$, where $S=\operatorname{supp}(\mathbf P)\cup\operatorname{supp}(\mathbf Q)$.  By \eqref{eq:chiRtochi}, this marginal is the desired coefficient plus all its nontrivial diagonal extensions,
        \begin{equation*}
            \chi^{(S)}_{\mathbf P_S,\mathbf Q_S}
            =
            \chi_{\mathbf P,\mathbf Q}
            +
            \sum_{\mathbf A_{\overline S}\neq I_{\overline S}}
            \chi_{\mathbf P_S\otimes\mathbf A_{\overline S},
            \mathbf Q_S\otimes\mathbf A_{\overline S}} .
        \end{equation*}
        Note that every nontrivial diagonal extension has strictly larger support union.  Hence we process coefficients from support size $k$ down to support size $1$.  Once a large coefficient is accepted, its estimate is propagated downward to all smaller marginals in which it appears as a diagonal extension.  This is a peeling procedure: subtract the already detected higher-level contamination, threshold what remains, and continue.

        The subtle point is that accepted coefficients are data-dependent.  A coefficient below the intended threshold may still be accepted because of statistical error or because it lies in a margin band.  Once accepted, it is propagated downward and contributes estimation error to later coefficients. Thus the proof cannot count only the true coefficients above the same threshold used for omitted tails.  We use a guard band around the decision threshold $\lambda$: accepted propagated terms are counted using the lower support $\Omega_{\lambda-\gamma}$, while omitted true terms are charged to the upper tail outside $\Omega_{\lambda+\gamma}$.

        \begin{algorithm}[ht!]
            \caption{Guarded sparse local $\chi$-recovery}
            \label{alg:sparse-diagonal-chi-recovery}
            \begin{algorithmic}[1]
                \REQUIRE Local PTM estimates
                \begin{equation*}
                    \widehat L_{\mathbf A_S\otimes I_{\overline S},\mathbf B_S\otimes I_{\overline S}}
                \end{equation*}
                for all nonempty $S\subseteq[n]$, $|S|\le k$, and all $\mathbf A_S,\mathbf B_S\in\mathcal P_S$; decision threshold $\lambda>0$.
                \ENSURE A thresholded candidate support and estimates for accepted coefficients.

                \STATE For every nonempty $S\subseteq[n]$, $|S|\le k$, and every $\mathbf P_S,\mathbf Q_S\in\mathcal P_S$, compute
                \begin{equation*}
                    \widehat\chi^{(S)}_{\mathbf P_S,\mathbf Q_S}
                    :=
                    2^{-3|S|}
                    \sum_{\mathbf A_S,\mathbf B_S\in\mathcal P_S}
                    \widehat L_{\mathbf A_S\otimes I_{\overline S},\mathbf B_S\otimes I_{\overline S}}
                    \Tr(\mathbf A_S\mathbf Q_S\mathbf B_S\mathbf P_S).
                \end{equation*}
                \STATE Initialize $\widehat\Omega_\lambda:=\varnothing$, and initialize all correction registers
                \begin{equation*}
                    \mathsf R(S,\mathbf P_S,\mathbf Q_S):=0.
                \end{equation*}
                \FOR{$s=k,k-1,\ldots,1$}
                \FOR{every $(\mathbf P,\mathbf Q)\in\mathcal I_k$ with $|\operatorname{supp}(\mathbf P)\cup\operatorname{supp}(\mathbf Q)|=s$}
                \STATE Set $S:=\operatorname{supp}(\mathbf P)\cup\operatorname{supp}(\mathbf Q)$, and write
                \begin{equation*}
                    \mathbf P=\mathbf P_S\otimes I_{\overline S},
                    \qquad
                    \mathbf Q=\mathbf Q_S\otimes I_{\overline S}.
                \end{equation*}
                \STATE Define
                \begin{equation*}
                    \widetilde\chi_{\mathbf P,\mathbf Q}
                    :=
                    \widehat\chi^{(S)}_{\mathbf P_S,\mathbf Q_S}
                    -
                    \mathsf R(S,\mathbf P_S,\mathbf Q_S).
                \end{equation*}
                \IF{$|\widetilde\chi_{\mathbf P,\mathbf Q}|> \lambda$}
                \STATE Add $(\mathbf P,\mathbf Q)$ to $\widehat\Omega_\lambda$, and set
                \begin{equation*}
                    \widehat\chi_{\mathbf P,\mathbf Q}:=\widetilde\chi_{\mathbf P,\mathbf Q}.
                \end{equation*}
                \STATE For every $\emptyset \neq T\subsetneq S$ such that $\mathbf P_{S\setminus T}=\mathbf Q_{S\setminus T}$, update
                \begin{equation*}
                    \mathsf R(T,\mathbf P_T,\mathbf Q_T)
                    \leftarrow
                    \mathsf R(T,\mathbf P_T,\mathbf Q_T)+\widehat\chi_{\mathbf P,\mathbf Q}.
                \end{equation*}
                \ENDIF
                \ENDFOR
                \ENDFOR
                \RETURN $\widehat\Omega_\lambda$ and $\{\widehat\chi_{\mathbf P,\mathbf Q}\}_{(\mathbf P,\mathbf Q)\in\widehat\Omega_\lambda}$.
            \end{algorithmic}
        \end{algorithm}

        \begin{lemma}[Guarded sparse inversion and error propagation]
            \label{lem:sparse-diagonal-inversion}
            Fix a decision threshold $\lambda>0$ and a margin $0<\gamma<\lambda$. Set
            \begin{equation*}
                \Omega_-:=\Omega_{\lambda-\gamma},
                \qquad
                \Omega_+:=\Omega_{\lambda+\gamma},
                \qquad
                \mathfrak d_-:=\mathfrak d_{\Omega_-},
                \qquad
                D_-:=\sum_{\ell=0}^k \mathfrak d_-^\ell,
                \qquad
                \rho_+:=\rho_{\Omega_+}.
            \end{equation*}
            Assume access to $\varepsilon_L$-accurate PTM estimates,
            \begin{equation*}
                \left|
                \widehat L_{\mathbf A_S\otimes I_{\overline S},\mathbf B_S\otimes I_{\overline S}}
                -
                L_{\mathbf A_S\otimes I_{\overline S},\mathbf B_S\otimes I_{\overline S}}
                \right|
                \le \varepsilon_L
            \end{equation*}
            for every nonempty $S\subseteq[n]$, $|S|\le k$, and all $\mathbf A_S,\mathbf B_S\in\mathcal P_S$, and suppose
            \begin{equation*}
                D_-(\varepsilon_L+\rho_+)\le \gamma.
            \end{equation*}
            Then Algorithm~\ref{alg:sparse-diagonal-chi-recovery}, run with decision threshold $\lambda$, outputs $\widehat\Omega_\lambda$ satisfying
            \begin{equation*}
                \Omega_+\subseteq \widehat\Omega_\lambda\subseteq \Omega_-.
            \end{equation*}
            Equivalently,
            \begin{equation*}
                |\chi_{\mathbf P,\mathbf Q}|> \lambda+\gamma
                \quad\Longrightarrow\quad
                (\mathbf P,\mathbf Q)\in\widehat\Omega_\lambda,
            \end{equation*}
            and
            \begin{equation*}
                |\chi_{\mathbf P,\mathbf Q}|\le \lambda-\gamma
                \quad\Longrightarrow\quad
                (\mathbf P,\mathbf Q)\notin\widehat\Omega_\lambda .
            \end{equation*}
            Moreover, every accepted coefficient estimate satisfies
            \begin{equation*}
                \left|
                \widehat\chi_{\mathbf P,\mathbf Q}
                -
                \chi_{\mathbf P,\mathbf Q}
                \right|
                \le \gamma .
            \end{equation*}
            Finally, the exhaustive implementation of Algorithm~\ref{alg:sparse-diagonal-chi-recovery} has runtime $\mathcal O_k(n^k+2^k|\widehat\Omega_\lambda|)=\mathcal O_k(n^k)$.
        \end{lemma}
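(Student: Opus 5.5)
We will argue by descending induction on the support size $s=|\operatorname{supp}(\mathbf P)\cup\operatorname{supp}(\mathbf Q)|$, from $s=k$ down to $s=1$. The quantity to control is the error of the \emph{corrected} marginal $\widetilde\chi_{\mathbf P,\mathbf Q}$, for every pair at a given level and not only for the accepted ones. Define
\begin{equation*}
\eta_s:=\max_{|\operatorname{supp}(\mathbf P)\cup\operatorname{supp}(\mathbf Q)|=s}\bigl|\widetilde\chi_{\mathbf P,\mathbf Q}-\chi_{\mathbf P,\mathbf Q}\bigr| .
\end{equation*}
The inductive hypothesis at all levels $s'>s$ has two parts: (i) $\eta_{s'}\le (\varepsilon_L+\rho_+)\sum_{\ell=0}^{k-s'}\mathfrak d_-^\ell$; (ii) the accepted pairs at level $s'$ contain all level-$s'$ pairs of $\Omega_+$ and are contained in $\Omega_-$.

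\textbf{Unfolding the register.} First I check what the register $\mathsf R(S,\mathbf P_S,\mathbf Q_S)$ contains when it is read at level $s=|S|$. It is only updated by accepted pairs $(\mathbf P',\mathbf Q')$ of strictly larger support $S'\supsetneq S$ with $\mathbf P'_{S'\setminus S}=\mathbf Q'_{S'\setminus S}$. Since $S'$ is the support union, the common restriction to $S'\setminus S$ is non-identity there. Hence these pairs are exactly the accepted nontrivial diagonal extensions $(\mathbf P_S\otimes\mathbf A_{\overline S},\mathbf Q_S\otimes\mathbf A_{\overline S})$. This gives
\begin{equation*}
\mathsf R(S,\mathbf P_S,\mathbf Q_S)=\sum_{\mathbf A_{\overline S}\neq I,\ \text{accepted}}\widehat\chi_{\mathbf P_S\otimes\mathbf A_{\overline S},\mathbf Q_S\otimes\mathbf A_{\overline S}} .
\end{equation*}
Subtracting this from the exact marginal identity \eqref{eq:chiRtochi} yields the error decomposition
\begin{equation*}
\widetilde\chi_{\mathbf P,\mathbf Q}-\chi_{\mathbf P,\mathbf Q}
=\bigl(\widehat\chi^{(S)}_{\mathbf P_S,\mathbf Q_S}-\chi^{(S)}_{\mathbf P_S,\mathbf Q_S}\bigr)
+\sum_{\text{not accepted}}\chi_{\mathbf P_S\otimes\mathbf A,\mathbf Q_S\otimes\mathbf A}
+\sum_{\text{accepted}}\bigl(\chi-\widehat\chi\bigr)_{\mathbf P_S\otimes\mathbf A,\mathbf Q_S\otimes\mathbf A}.
\end{equation*}

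\textbf{Bounding the three terms.} The first term is at most $\varepsilon_L$, by the same Pauli-trace counting as in the proof of Lemma~\ref{lem:chi-recovery-stability}. For the second term, part (ii) of the hypothesis says every extension in $\Omega_+$ was accepted, so the non-accepted extensions lie outside $\Omega_+$ and their total weight is at most $\rho_+$. For the third term, part (ii) also says accepted extensions lie in $\Omega_-$, so there are at most $\mathfrak d_-$ of them, each at a level $s'>s$. Accepted estimates satisfy $\widehat\chi=\widetilde\chi$, so each error is at most $\eta_{s'}$. This gives
\begin{equation*}
\eta_s\le \varepsilon_L+\rho_+ +\mathfrak d_-\max_{s'>s}\eta_{s'} .
\end{equation*}
Unrolling this from $\eta_k\le\varepsilon_L+\rho_+$ (at level $k$ all extensions vanish and the register is empty) gives part (i) at level $s$. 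In particular $\eta_s\le D_-(\varepsilon_L+\rho_+)\le\gamma$.

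\textbf{Closing the induction.} Part (ii) at level $s$ follows from the thresholding rule. If $|\chi_{\mathbf P,\mathbf Q}|>\lambda+\gamma$, then $|\widetilde\chi_{\mathbf P,\mathbf Q}|>\lambda$ and the pair is accepted. If $|\chi_{\mathbf P,\mathbf Q}|\le\lambda-\gamma$, then $|\widetilde\chi_{\mathbf P,\mathbf Q}|\le\lambda$ and the pair is rejected. Thus $\Omega_+\subseteq\widehat\Omega_\lambda\subseteq\Omega_-$ holds level by level, and every accepted estimate inherits error at most $\gamma$.

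\textbf{Runtime.} Step 1 costs $\mathcal O_k(1)$ per region, hence $\mathcal O_k(n^k)$ in total, and the main loop visits $\mathcal O_k(n^k)$ pairs. Each accepted pair updates at most $2^k$ registers, one per subset $T\subsetneq S$. The total is therefore $\mathcal O_k(n^k+2^k|\widehat\Omega_\lambda|)$.

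\textbf{Main obstacle.} The delicate step is the joint induction. The count of at most $\mathfrak d_-$ propagated errors is valid only because part (ii) holds at all higher levels, i.e.\ accepted pairs lie in $\Omega_-$. At the same time, part (ii) at the current level requires the error bound $\eta_s\le\gamma$, which in turn relies on that count. Both statements must therefore be carried together in the induction, and the bound must be proved for all pairs rather than only accepted ones, so that the rejection direction of the thresholding is covered.
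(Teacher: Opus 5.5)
Your proposal is correct and follows essentially the same argument as the paper: a descending induction on the support size that jointly carries the error recursion $\eta_s\le\varepsilon_L+\rho_++\mathfrak d_-\max_{s'>s}\eta_{s'}$ (for all pairs at level $s$, not just accepted ones) and the invariant that higher-level accepted pairs contain the corresponding part of $\Omega_+$ and lie in $\Omega_-$. You close it with the same two-sided thresholding and the same runtime count. Your explicit check that the register $\mathsf R(S,\mathbf P_S,\mathbf Q_S)$ equals the sum of accepted nontrivial diagonal extensions (using that $S'$ is the support union) spells out a step the paper uses only implicitly. Your level-dependent bound $(\varepsilon_L+\rho_+)\sum_{\ell=0}^{k-s}\mathfrak d_-^\ell$ is a harmless sharpening of the paper's $D_-(\varepsilon_L+\rho_+)$.
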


        \begin{proof}
            Algorithm~\ref{alg:sparse-diagonal-chi-recovery} proceeds by descending support size. It first computes all local marginal coefficients from the estimated local PTM entries. For a coefficient $(\mathbf P,\mathbf Q)$, with union support $S$, the marginal identity expresses the marginal coefficient as the desired coefficient plus its diagonal extensions. Since every nontrivial diagonal extension has strictly larger support union, these extensions have already been processed when the algorithm reaches level $|S|$. More precisely, for every $S$, we recall the local inversion formula \eqref{eq:localinversionform}
            \begin{equation*}
                \chi^{(S)}_{\mathbf P_S,\mathbf Q_S}
                =
                2^{-3|S|}
                \sum_{\mathbf A_S,\mathbf B_S\in\mathcal P_S}
                L_{\mathbf A_S\otimes I_{\overline S},\mathbf B_S\otimes I_{\overline S}}
                \Tr(\mathbf A_S\mathbf Q_S\mathbf B_S\mathbf P_S).
            \end{equation*}
            For fixed $\mathbf P_S,\mathbf Q_S$, exactly $4^{|S|}$ summands may have nonzero trace, and each nonzero trace has modulus $2^{|S|}$. Hence
            \begin{equation*}
                \left|
                \widehat\chi^{(S)}_{\mathbf P_S,\mathbf Q_S}
                -
                \chi^{(S)}_{\mathbf P_S,\mathbf Q_S}
                \right|
                \le
                2^{-3|S|}4^{|S|}2^{|S|}\varepsilon_L
                =
                \varepsilon_L.
            \end{equation*}
            Let $e_s$ denote the worst error of a recursively corrected estimate $\widetilde\chi_{\mathbf P,\mathbf Q}$ among coefficients whose support union has size $s$, at the moment those coefficients are processed.  We prove by descending induction that all coefficients above $\lambda+\gamma$ at levels larger than $s$ have been accepted, every accepted coefficient at levels larger than $s$ belongs to $\Omega_-$, and their estimates have error at most $\max_{s'>s}e_{s'}$.

            Now fix $(\mathbf P,\mathbf Q)\in\mathcal I_k$, and set
            \begin{equation*}
                S:=\operatorname{supp}(\mathbf P)\cup\operatorname{supp}(\mathbf Q).
            \end{equation*}
            The marginal identity \eqref{eq:chiRtochi} yields
            \begin{equation*}
                \chi^{(S)}_{\mathbf P_S,\mathbf Q_S}
                =
                \chi_{\mathbf P,\mathbf Q}
                +
                \sum_{\mathbf A_{\overline S}\neq I_{\overline S}}
                \chi_{\mathbf P_S\otimes\mathbf A_{\overline S},
                \mathbf Q_S\otimes\mathbf A_{\overline S}}.
            \end{equation*}
            Subtracting the exact identity from the corrected estimate gives
            \begin{equation*}
                \begin{aligned}
                    \widetilde\chi_{\mathbf P,\mathbf Q}-\chi_{\mathbf P,\mathbf Q}
                    &=
                    \left(\widehat\chi^{(S)}_{\mathbf P_S,\mathbf Q_S}
                    -
                    \chi^{(S)}_{\mathbf P_S,\mathbf Q_S}\right)\\
                    &\quad+
                    \sum_{\substack{\mathbf A_{\overline S}\neq I_{\overline S}:\\
                    (\mathbf P_S\otimes\mathbf A_{\overline S},
                    \mathbf Q_S\otimes\mathbf A_{\overline S})\notin\widehat\Omega_\lambda}}
                    \chi_{\mathbf P_S\otimes\mathbf A_{\overline S},
                    \mathbf Q_S\otimes\mathbf A_{\overline S}}\\
                    &\quad-
                    \sum_{\substack{\mathbf A_{\overline S}\neq I_{\overline S}:\\
                    (\mathbf P_S\otimes\mathbf A_{\overline S},
                    \mathbf Q_S\otimes\mathbf A_{\overline S})\in\widehat\Omega_\lambda}}
                    \left(
                    \widehat\chi_{\mathbf P_S\otimes\mathbf A_{\overline S},
                    \mathbf Q_S\otimes\mathbf A_{\overline S}}
                    -
                    \chi_{\mathbf P_S\otimes\mathbf A_{\overline S},
                    \mathbf Q_S\otimes\mathbf A_{\overline S}}
                    \right).
                \end{aligned}
            \end{equation*}
            By the induction invariant, every omitted diagonal extension is outside $\Omega_+$, because all higher-level coefficients in $\Omega_+$ have already been accepted.  Hence the middle sum has magnitude at most $\rho_+$.  Also by the induction invariant, every accepted diagonal extension lies in $\Omega_-$; there are at most $\mathfrak d_-$ such extensions, and each propagated estimate has error at most $\max_{s'>s}e_{s'}$.  Therefore
            \begin{equation*}
                e_s\le \varepsilon_L+\rho_++\mathfrak d_-\max_{s'>s}e_{s'}.
            \end{equation*}
            Iterating over at most $k$ levels gives for all $s$
            \begin{equation*}
                e_s\le D_-(\varepsilon_L+\rho_+)\le \gamma.
            \end{equation*}
            The thresholding guarantees close the induction.  If $(\mathbf P,\mathbf Q)$ is accepted, then
            \begin{equation*}
                |\chi_{\mathbf P,\mathbf Q}|
                \ge
                |\widetilde\chi_{\mathbf P,\mathbf Q}|-\gamma
                >
                \lambda-\gamma,
            \end{equation*}
            so the accepted coefficient belongs to $\Omega_-$.  Conversely, if $|\chi_{\mathbf P,\mathbf Q}|>\lambda+\gamma$, then
            \begin{equation*}
                |\widetilde\chi_{\mathbf P,\mathbf Q}|
                \ge
                |\chi_{\mathbf P,\mathbf Q}|-\gamma
                >
                \lambda,
            \end{equation*}
            so the algorithm accepts it.  Thus $\Omega_+\subseteq\widehat\Omega_\lambda\subseteq\Omega_-$.  The accepted coefficient error bound is exactly $e_s\le\gamma$.  Finally, the exhaustive scan over $\mathcal I_k$ costs $\mathcal O_k(n^k)$, and each accepted coefficient updates at most $2^k$ correction registers. This gives runtime $ {\mathcal O_k(n^k+2^k|\widehat\Omega_\lambda|).}$
        \end{proof}
        \begin{corollary}[Threshold structure learning]
            \label{cor:threshold-structure-learning}
            Fix a decision threshold $\lambda>0$ and a margin $0<\gamma<\lambda$. Let
            \begin{equation*}
                D_-:=D_{\lambda-\gamma},
                \qquad
                \rho_+:=\rho_{\lambda+\gamma}.
            \end{equation*}
            Assume
            \begin{equation*}
                D_-\rho_+\le \frac{\gamma}{2}.
            \end{equation*}
            Assume moreover access to PTM estimates with accuracy $\varepsilon_L\le {\gamma}{(2D_-)^{-1}}$. Then Algorithm~\ref{alg:sparse-diagonal-chi-recovery}, run with decision threshold $\lambda$, outputs a candidate support $\widehat\Omega_\lambda\subseteq\mathcal I_k$ which is correct up to a $\gamma$-margin around the decision threshold $\lambda$:
            \begin{equation*}
                |\chi_{\mathbf P,\mathbf Q}|> \lambda+\gamma
                \quad\Longrightarrow\quad
                (\mathbf P,\mathbf Q)\in\widehat\Omega_\lambda,
            \end{equation*}
            and
            \begin{equation*}
                |\chi_{\mathbf P,\mathbf Q}|< \lambda-\gamma
                \quad\Longrightarrow\quad
                (\mathbf P,\mathbf Q)\notin\widehat\Omega_\lambda .
            \end{equation*}
            The classical postprocessing time is $\mathcal O_k(n^k)$. Using the process-shadow derivative-estimation routine of Section \ref{sec:learning-ptm-elements} to obtain the required PTM estimates, the number of samples is
            \begin{equation*}
                \widetilde{\mathcal O}_k\!\left(
                \frac{D_-^2}{\gamma^2}\log\frac{n}{\delta}
                \right),
            \end{equation*}
            where the hidden factors are polylogarithmic in $D_-/\gamma$ and $n$, and depend on $k$ and the weighted interaction-strength bound. In the exact threshold case $\rho_+=0$, so the robustness condition above is automatic. If, in addition, the local or sparse support has bounded diagonal-extension degree so that $D_-=O_k(1)$, this becomes
            \begin{equation*}
                \widetilde{\mathcal O}_k\!\left(
                \gamma^{-2}\log\frac{n}{\delta}
                \right)
            \end{equation*}
            samples.
        \end{corollary}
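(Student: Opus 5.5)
The corollary follows by combining Lemma~\ref{lem:sparse-diagonal-inversion} with the PTM-estimation guarantees of Section~\ref{sec:learning-ptm-elements}. There are three steps: verify the hypotheses of the lemma, translate its conclusion into the stated implications, and count samples.

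\textbf{Verifying the hypothesis.} The lemma requires $D_-(\varepsilon_L+\rho_+)\le\gamma$. Both assumptions of the corollary feed into this directly. From $\varepsilon_L\le\gamma(2D_-)^{-1}$ we get $D_-\varepsilon_L\le\gamma/2$, and by assumption $D_-\rho_+\le\gamma/2$. Adding these gives $D_-(\varepsilon_L+\rho_+)\le\gamma$. The lemma then yields $\Omega_+\subseteq\widehat\Omega_\lambda\subseteq\Omega_-$ together with the runtime bound $\mathcal O_k(n^k)$.

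\textbf{Translating the conclusion.} The inclusion $\Omega_+\subseteq\widehat\Omega_\lambda$ is exactly the first implication. The inclusion $\widehat\Omega_\lambda\subseteq\Omega_-$ says that every accepted pair has $|\chi|>\lambda-\gamma$. This gives the second implication, for strict inequality $<\lambda-\gamma$, by contraposition; the lemma's version with $\le$ is even stronger. Since the algorithm only scans pairs in $\mathcal I_k$, we also have $\widehat\Omega_\lambda\subseteq\mathcal I_k$.

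In the exact threshold case, every coefficient outside $\Omega_+$ vanishes, so $\rho_+=0$ and the robustness condition holds automatically.

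\textbf{Sample complexity.} Set the PTM target accuracy to $\varepsilon_L=\gamma/(2D_-)$. We need simultaneous $\varepsilon_L$-accuracy over all local PTM entries $L_{\mathbf A_S\otimes I,\mathbf B_S\otimes I}$ with $|S|\le k$. There are $\mathcal O_k(n^k)$ such entries. The process-shadow derivative-estimation routine of Section~\ref{sec:learning-ptm-elements} combines:
\begin{itemize}
\item the polynomial approximation of Lemma~\ref{lem:poly-approx-intro}, which gives degree $\mathcal O(\log(1/\varepsilon_L))$;
\item robust interpolation over polylogarithmically many times in $[0,(4\alpha k)^{-1}]$;
\item a median-of-means or Hoeffding bound with a union bound over all entries.
\end{itemize}
Together these give $\widetilde{\mathcal O}_k(\varepsilon_L^{-2}\log(n^k/\delta))$ samples. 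Substituting $\varepsilon_L=\gamma/(2D_-)$ and using $\log(n^k/\delta)=\mathcal O_k(\log(n/\delta))$ yields $\widetilde{\mathcal O}_k(D_-^2\gamma^{-2}\log(n/\delta))$. The polylog factors, in $D_-/\gamma$ through the interpolation degree and in $n$, are absorbed into the tilde. When $D_-=\mathcal O_k(1)$, the bound specializes to $\widetilde{\mathcal O}_k(\gamma^{-2}\log(n/\delta))$.

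\textbf{Main obstacle.} The only nontrivial point is the sample count. The random product-state inputs and single-qubit Pauli measurements must estimate all $k$-local PTM entries simultaneously, with only $\log n$ overhead rather than $n^k$. I would rely on the classical-shadow property that each local observable of weight at most $k$ has variance $\mathcal O(9^k)$, independent of $n$. The union bound then costs only $\log(n^k/\delta)$. I would also rely on the interpolation stability result of Section~\ref{sec:learning-ptm-elements} to pass from overlap accuracy to derivative accuracy with only polylogarithmic loss. Both would be cited from that section rather than reproved.
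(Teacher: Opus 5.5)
Your proposal is correct and follows the paper's proof. You add $D_-\varepsilon_L\le\gamma/2$ and $D_-\rho_+\le\gamma/2$ to meet the hypothesis of Lemma~\ref{lem:sparse-diagonal-inversion}, read off $\Omega_+\subseteq\widehat\Omega_\lambda\subseteq\Omega_-$ and the $\mathcal O_k(n^k)$ runtime, and get the sample bound from $\varepsilon_L=\Theta(\gamma/D_-)$ in the process-shadow derivative routine with a union bound over the $\mathcal O_k(n^k)$ local PTM entries. Your extra details (contraposition for the strict-inequality exclusion, and the $3^{2k}$ shadow factor keeping the union-bound cost at $\log(n^k/\delta)$) only make explicit what the paper leaves implicit.
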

        \begin{proof}
            The assumptions imply
            \begin{equation*}
                D_-(\varepsilon_L+\rho_+)
                \le
                D_-\varepsilon_L+D_-\rho_+
                \le
                \frac{\gamma}{2}+\frac{\gamma}{2}
                =
                \gamma.
            \end{equation*}
            The result follows directly from Lemma~\ref{lem:sparse-diagonal-inversion}. The runtime is the exhaustive runtime of Algorithm~\ref{alg:sparse-diagonal-chi-recovery}. The sample bound follows by choosing $\varepsilon_L=\Theta(\gamma/D_-)$ in the PTM derivative-estimation routine of Section \ref{sec:learning-ptm-elements} and union bounding over $\mathcal O_k(n^k)$ local PTM pairs.
        \end{proof}

        \begin{rmk}[When the tail term vanishes]\label{rem:rho-plus-zero}
            The condition $D_-\rho_+\le\gamma/2$ is a robustness condition, not a sampling condition.  It says that the true diagonal extensions omitted by thresholding cannot, after recursive amplification, move a coefficient across the margin. In exact sparse or exact local models this term can vanish.  For instance, if there is a threshold gap such that every coefficient is either zero or has magnitude larger than $\lambda+\gamma$, then every coefficient outside $\Omega_+$ is zero and therefore $\rho_+=0$.  Similarly, if the generator is known to be supported on a local hypergraph $E$ and we run the supplied-support version on the corresponding support $\Omega_E$, then all coefficients outside $\Omega_E$ vanish and the analogous tail $\rho_{\Omega_E}$ is zero.  In these exact settings the remaining quantity $D_-$ only controls the branching of accepted diagonal extensions.  For bounded-degree graphs or hypergraphs, $D_-$ is bounded in terms of the local degree and constants depending on $k$.
        \end{rmk}

        \noindent Whenever a good candidate set $\Omega$ of influential $\chi$-entries is known, it becomes unnecessary to run an exhaustive search and the computational cost only follows from performing parameter inversion. In this case and under the sparsity condition, the latter can be much improved: more precisely, suppose that a candidate set $\Omega\subseteq\mathcal I_k^\circ$ is given. Assume that $\Omega$ contains all coefficients above threshold,
        \begin{equation*}
            |\chi_{\mathbf P,\mathbf Q}|>\tau\quad\Longrightarrow\quad(\mathbf P,\mathbf Q)\in\Omega .
        \end{equation*}
        Let $\mathfrak d_\Omega$ and $\rho_\Omega$ be the corresponding diagonal-extension degree and unresolved diagonal tail, respectively, defined by replacing $\Omega_\tau$ with $\Omega$ in Equations~\eqref{eq:degreeomegatau} and \eqref{eq:tailunresolvedomegatau}. Let $D_\Omega:=\sum_{\ell=0}^k\mathfrak d_\Omega^\ell$. Assume that, for a target accuracy $\varepsilon_\chi>0$,
        \begin{equation*}
            \tau+D_\Omega\rho_\Omega\le\varepsilon_\chi .
        \end{equation*}
        Run the sparse recursive inversion restricted to $\Omega$, define $\widehat\chi_{\mathbf P,\mathbf Q}:=0$ for $(\mathbf P,\mathbf Q)\notin\Omega$, and use the returned estimates on $\Omega$. The supplied-support recursion restricts the diagonal-extension procedure to the graph induced by $\Omega$; its PTM-query list and $\mathcal O_k(|\Omega|)$ implementation are specified in the proof below.
        \begin{corollary}
            \label{cor:dissipative-learning-stable-sparse}
            Under the supplied-support assumption and protocol of the previous paragraph, for all $(\mathbf P,\mathbf Q)\in\mathcal I_k^\circ$,
            \begin{equation*}
                \left|
                \widehat\chi_{\mathbf P,\mathbf Q}
                -
                \chi_{\mathbf P,\mathbf Q}
                \right|
                =
                \mathcal O\left(
                \varepsilon_\chi+D_\Omega\varepsilon_L
                \right).
            \end{equation*}
            Consequently, the estimators defined in Equation \eqref{eq:chiG} satisfy
            \begin{equation*}
                \left|
                \widehat G_{\mathbf P,\mathbf Q}
                -
                G_{\mathbf P,\mathbf Q}
                \right|
                =
                \mathcal O\left(
                \varepsilon_\chi+D_\Omega\varepsilon_L
                \right),
                \qquad \text{ and }\qquad
                \left|
                \widehat h_{\mathbf P}
                -
                h_{\mathbf P}
                \right|
                =
                \mathcal O\left(
                \varepsilon_\chi+D_\Omega\varepsilon_L
                \right).
            \end{equation*}
            The classical postprocessing time required to compute these estimates, once $\Omega$ is given, is $ \mathcal O_k(|\Omega|)$. Consequently, choosing $\varepsilon_L\le \varepsilon_\chi/D_\Omega$ and estimating the required PTM entries using the process-shadow derivative-estimation routine of Section \ref{sec:learning-ptm-elements} yields entrywise error $\mathcal O(\varepsilon_\chi)$ with
            \begin{equation*}
                \widetilde{\mathcal O}_k\!\left(
                \frac{D_\Omega^2}{\varepsilon_\chi^2}
                \log\frac{|\Omega|}{\delta}
                \right)
            \end{equation*}
            samples, up to polylogarithmic factors in $D_\Omega/\varepsilon_\chi$ and $|\Omega|$.
        \end{corollary}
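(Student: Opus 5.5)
We specify the supplied-support recursion and then rerun the descending-induction argument of Lemma~\ref{lem:sparse-diagonal-inversion}, with the data-dependent set $\widehat\Omega_\lambda$ replaced by the fixed set $\Omega$.

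\textbf{The recursion.} For each $(\mathbf P,\mathbf Q)\in\Omega$ with $S=\operatorname{supp}(\mathbf P)\cup\operatorname{supp}(\mathbf Q)$, we compute $\widehat\chi^{(S)}_{\mathbf P_S,\mathbf Q_S}$ by the local inversion formula \eqref{eq:localinversionform}. This queries only the $16^{|S|}$ PTM entries on $S$, so the query list has size $\mathcal O_k(|\Omega|)$. Coefficients in $\Omega$ are processed in decreasing order of $|S|$, setting
\begin{equation*}
\widehat\chi_{\mathbf P,\mathbf Q}:=\widehat\chi^{(S)}_{\mathbf P_S,\mathbf Q_S}-\sum_{\mathbf A_{\overline S}\in\operatorname{Ext}_\Omega(S,\mathbf P_S,\mathbf Q_S)}\widehat\chi_{\mathbf P_S\otimes\mathbf A_{\overline S},\mathbf Q_S\otimes\mathbf A_{\overline S}}.
\end{equation*}
Every extension has strictly larger support union, so it has already been processed. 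For the implementation, we use the correction registers of Algorithm~\ref{alg:sparse-diagonal-chi-recovery}. Each element of $\Omega$ pushes its estimate to at most $2^k$ sub-marginals $T\subsetneq S$, giving $\mathcal O_k(|\Omega|)$ postprocessing time, with bucketing by support size.

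\textbf{Error recursion.} Subtracting the marginal identity \eqref{eq:chiRtochi} from the recursion gives
\begin{equation*}
\widehat\chi_{\mathbf P,\mathbf Q}-\chi_{\mathbf P,\mathbf Q}=(\widehat\chi^{(S)}-\chi^{(S)})+\sum_{\mathbf A\notin\operatorname{Ext}_\Omega}\chi_{\mathbf P_S\otimes\mathbf A,\mathbf Q_S\otimes\mathbf A}-\sum_{\mathbf A\in\operatorname{Ext}_\Omega}(\widehat\chi-\chi)_{\mathbf P_S\otimes\mathbf A,\mathbf Q_S\otimes\mathbf A}.
\end{equation*}
The bounds on the three terms are as follows.
\begin{itemize}
\item The first term is at most $\varepsilon_L$, by the same $2^{-3|S|}4^{|S|}2^{|S|}$ count used in Lemma~\ref{lem:chi-recovery-stability}.
\item The second term is at most $\rho_\Omega$, directly from its definition.
\item The third term has at most $\mathfrak d_\Omega$ summands.
\end{itemize}
With $e_s$ the worst error at level $s$ over $\Omega$, this yields $e_s\le\varepsilon_L+\rho_\Omega+\mathfrak d_\Omega\max_{s'>s}e_{s'}$. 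Unrolling over at most $k$ levels gives $e_s\le D_\Omega(\varepsilon_L+\rho_\Omega)$.

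\textbf{Coefficients outside $\Omega$.} For $(\mathbf P,\mathbf Q)\notin\Omega$ we output zero. The error is then $|\chi_{\mathbf P,\mathbf Q}|\le\tau$, because $\Omega$ contains every coefficient above $\tau$.

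\textbf{Combining the bounds.} Every entry therefore has error at most $\tau+D_\Omega\rho_\Omega+D_\Omega\varepsilon_L\le\varepsilon_\chi+D_\Omega\varepsilon_L$. The bounds for $\widehat G$ and $\widehat h$ follow from \eqref{eq:chiG}. The error on $h$ is at most that on $\chi$, since $|\tfrac{i}{2}(a-b)|\le\max(|a|,|b|)$.

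\textbf{Sample complexity.} We choose $\varepsilon_L=\varepsilon_\chi/D_\Omega$. We then invoke the process-shadow derivative-estimation routine of Section~\ref{sec:learning-ptm-elements}. It produces all $\mathcal O_k(|\Omega|)$ required local PTM entries to accuracy $\varepsilon_L$ simultaneously with failure probability $\delta$, using $\widetilde{\mathcal O}_k(\varepsilon_L^{-2}\log(|\Omega|/\delta))$ samples after a union bound over the query list.

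\textbf{Main obstacle.} The main subtlety is bookkeeping rather than analysis. The recursion must only ever subtract estimates of elements of $\Omega$, and every such estimate must already carry the inductive bound when it is used. The descending-support order guarantees both, so no guard band is needed, unlike in Lemma~\ref{lem:sparse-diagonal-inversion}. The other point needing care is verifying that the sampling routine's cost depends on $\log|\Omega|$ rather than $\log n^k$. This holds because the shadow estimator's union bound runs only over the queried PTM entries.
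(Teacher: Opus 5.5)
Your proposal is correct and follows essentially the same route as the paper's proof. The shared steps are: the descending-support recursion restricted to $\Omega$, implemented with correction registers and at most $2^k$ pushes per vertex; the error recursion $e_s\le\varepsilon_L+\rho_\Omega+\mathfrak d_\Omega\max_{s'>s}e_{s'}$ unrolled to $D_\Omega(\varepsilon_L+\rho_\Omega)$; the bound $|\chi_{\mathbf P,\mathbf Q}|\le\tau$ for the zeroed entries outside $\Omega$; and a union bound over the $\mathcal O_k(|\Omega|)$ queried PTM entries with $\varepsilon_L=\Theta(\varepsilon_\chi/D_\Omega)$. Your observation that no guard band is needed, because $\Omega$ is fixed rather than data-dependent, is exactly why this argument is simpler than Lemma~\ref{lem:sparse-diagonal-inversion}.
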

        \begin{proof}
            Let the vertices of the directed diagonal-extension graph be the elements of $\Omega$. For $v=(\mathbf P,\mathbf Q)$ with $S_v:=\supp(\mathbf P)\cup\supp(\mathbf Q)$, draw an edge
            \begin{equation*}
                v\longrightarrow u:=(\mathbf P_T,\mathbf Q_T)
            \end{equation*}
            whenever $T\subsetneq S_v$, $\mathbf P_{S_v\setminus T}=\mathbf Q_{S_v\setminus T}$, and $u\in\Omega$. Thus $v$ is a diagonal extension contributing to the marginal associated with $u$. For every $u=(\mathbf P,\mathbf Q)$, write $S_u:=\supp(\mathbf P)\cup\supp(\mathbf Q)$, compute $\widehat\chi^{(S_u)}_{\mathbf P_{S_u},\mathbf Q_{S_u}}$ from its $\mathcal O_k(1)$ local PTM entries, and initialize a correction register $R_u=0$. Processing vertices in decreasing order of $|S_u|$, set
            \begin{equation*}
                \widehat\chi_u
                :=
                \widehat\chi^{(S_u)}_{\mathbf P_{S_u},\mathbf Q_{S_u}}-R_u,
            \end{equation*}
            and add $\widehat\chi_u$ to the register of every out-neighbour of $u$. Since every edge strictly decreases the support size, this recursion is well defined.

            Each local marginal has error at most $\varepsilon_L$, while the omitted extensions outside $\Omega$ contribute at most $\rho_\Omega$. If $e_s$ denotes the largest error among coefficients in $\Omega$ with support size $s$, the descending recursion therefore gives
            \begin{equation*}
                e_s
                \le
                \varepsilon_L+\rho_\Omega
                +\mathfrak d_\Omega\max_{t>s}e_t.
            \end{equation*}
            Iterating over at most $k$ support levels yields
            \begin{equation*}
                \max_{u\in\Omega}|\widehat\chi_u-\chi_u|
                \le
                D_\Omega(\varepsilon_L+\rho_\Omega)
                \le
                D_\Omega\varepsilon_L+\varepsilon_\chi.
            \end{equation*}
            Outside $\Omega$, the thresholding assumption gives $|\chi_{\mathbf P,\mathbf Q}|\leq\tau\leq\varepsilon_\chi$; since these estimates are set to zero, the claimed bound holds on all of $\mathcal I_k^\circ$. The corresponding bounds for $G$ and $h$ follow from their linear definitions in terms of $\chi$.

            Finally, each vertex has at most $2^{|S_u|}\leq2^k$ out-neighbours. Hence the graph contains at most $2^k|\Omega|$ edges, while the local Fierz inversion uses $\mathcal O_k(1)$ PTM entries per vertex. The total query-list size and classical runtime are therefore $\mathcal O_k(|\Omega|)$. Choosing $\varepsilon_L=\Theta(\varepsilon_\chi/D_\Omega)$ in the PTM derivative-estimation bound and union bounding over these $\mathcal O_k(|\Omega|)$ entries gives the stated sample complexity.
        \end{proof}

        The Hamiltonian comparison is discussed above after Equation \eqref{eq:tailunresolvedomegatau}. Here we only emphasize the remaining distinction: in the Lindbladian case the measured PTM entries reveal $\chi$-coefficients only after local Fierz inversion and diagonal-extension subtraction. Thus the supplied-support result gives improved sample complexity once $\Omega$ is known, but it does not by itself provide an efficient $\chi$-support oracle.

        Next, we consider two representative regimes in which Assumption~\ref{ass:stable-sparse-diagonal-extensions} holds.

        \begin{corollary}[Exact support on bounded-intersection hypergraphs]
            \label{cor:bounded-intersection-exact-support}
            Fix $k$, and let $E$ be a collection of subsets $e\subseteq[n]$ with $|e|\le k$.  Define
            \begin{equation*}
                \Omega_E:=
                \left\{
                (\mathbf P,\mathbf Q)\in\mathcal I_k:
                \operatorname{supp}(\mathbf P)\cup\operatorname{supp}(\mathbf Q)\subseteq e
                \text{ for some }e\in E
                \right\}.
            \end{equation*}
            Assume exact support on $E$, i.e. $\chi_{\mathbf P,\mathbf Q}=0$ for all $(\mathbf P,\mathbf Q)\notin\Omega_E$.  If $E$ has intersection degree at most $\Delta$,
            \begin{equation*}
                \max_{e\in E}
                \left|
                \{e'\in E:\ e'\neq e,\ e'\cap e\neq\emptyset\}
                \right|
                \le \Delta,
            \end{equation*}
            then the supplied-support algorithm of Corollary~\ref{cor:dissipative-learning-stable-sparse}, run with $\Omega=\Omega_E$, learns all entries of $G$ and $h$ to entrywise error $\mathcal O(\varepsilon_\chi)$ using
            \begin{equation*}
                \widetilde{\mathcal O}_k\!\left(
                D_{k,\Delta}^2\varepsilon_\chi^{-2}
                \log\frac{|\Omega_E|}{\delta}
                \right),
                \qquad
                D_{k,\Delta}:=
                \sum_{\ell=0}^k
                \left((\Delta+1)(4^k-1)\right)^\ell,
            \end{equation*}
            samples and $\mathcal O_k(|\Omega_E|)$ classical postprocessing time.  In particular, for fixed $k$ and bounded $\Delta$,
            \begin{equation*}
                \widetilde{\mathcal O}_{k,\Delta}\!\left(
                \varepsilon_\chi^{-2}\log\frac{|\Omega_E|}{\delta}
                \right)
            \end{equation*}
            samples after the support $E$ is supplied.
        \end{corollary}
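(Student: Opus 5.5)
The plan is to reduce the statement to Corollary~\ref{cor:dissipative-learning-stable-sparse} with $\Omega=\Omega_E$ (minus the pair $(I,I)$, so that $\Omega\subseteq\mathcal I_k^\circ$). Two things must be checked. First, the supplied-support hypotheses hold with zero bias. Second, the diagonal-extension degree satisfies $\mathfrak d_{\Omega_E}\le(\Delta+1)(4^k-1)$, which gives $D_{\Omega_E}\le D_{k,\Delta}$.

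For the hypotheses, exact support says $\chi_{\mathbf P,\mathbf Q}=0$ outside $\Omega_E$. So every $(\mathbf P,\mathbf Q)$ with $|\chi_{\mathbf P,\mathbf Q}|>\tau$ lies in $\Omega_E$ for every $\tau>0$. Moreover, every diagonal extension omitted from $\Omega_E$ has zero coefficient, so $\rho_{\Omega_E}=0$. The condition $\tau+D_{\Omega_E}\rho_{\Omega_E}\le\varepsilon_\chi$ therefore holds for any $\tau\le\varepsilon_\chi$. Letting $\tau\to0$ is harmless: the only role of $\tau$ in the corollary is to bound $|\chi|$ outside $\Omega$, and that bound is $0$ here. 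Corollary~\ref{cor:dissipative-learning-stable-sparse} then gives entrywise error $\mathcal O(\varepsilon_\chi+D_{\Omega_E}\varepsilon_L)$, postprocessing time $\mathcal O_k(|\Omega_E|)$, and, with $\varepsilon_L=\varepsilon_\chi/D_{\Omega_E}$, the stated sample count. The $G$ and $h$ bounds follow through \eqref{eq:chiG}.

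The main step is the degree bound. Fix $S$ and $\mathbf P_S,\mathbf Q_S$, and let $\mathbf A_{\overline S}\ne I$ be an extension with $(\mathbf P_S\otimes\mathbf A_{\overline S},\mathbf Q_S\otimes\mathbf A_{\overline S})\in\Omega_E$. Then its support union $S\cup\operatorname{supp}(\mathbf A_{\overline S})$ lies in some $e\in E$, and $\mathbf A_{\overline S}$ is a non-identity Pauli on $e\setminus S$. There are at most $4^{|e\setminus S|}-1\le 4^k-1$ such strings for each $e$. It remains to bound the number of edges $e\supseteq S$.

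If $S$ is nonempty, fix one edge $e_0\supseteq S$; if none exists, there are no extensions at all. Every other $e\supseteq S$ meets $e_0$ in $S\ne\emptyset$, so there are at most $\Delta$ such edges, hence at most $\Delta+1$ in total. Summing gives $\mathfrak d_{\Omega_E}\le(\Delta+1)(4^k-1)$, and since $D$ is monotone in $\mathfrak d$, we get $D_{\Omega_E}\le D_{k,\Delta}$.

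The expected obstacle is the case $S=\emptyset$, which concerns only $\chi_{I,I}$. Its number of extensions is unbounded, of order $|E|$. I would handle it by excluding $(I,I)$ from the recursion: the recursion in Corollary~\ref{cor:dissipative-learning-stable-sparse} only processes nonempty supports, as Remark~\ref{rmk:sparsity-comparison} notes, so the maximum defining $\mathfrak d$ may be taken over nonempty $S$. If needed, $\chi_{I,I}$ can be recovered afterwards from Remark~\ref{rmk:chi-id-id}.

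Finally, for fixed $k$ and $\Delta$, $D_{k,\Delta}=\mathcal O_{k,\Delta}(1)$, which yields the last sample bound.
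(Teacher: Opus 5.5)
Your proposal is correct and follows the paper's own proof. Both arguments reduce to Corollary~\ref{cor:dissipative-learning-stable-sparse} with $\rho_{\Omega_E}=0$, bound the extension degree by $(\Delta+1)(4^k-1)$ because every edge containing a nonempty $S$ meets a fixed $e_0\supseteq S$, and set aside the $S=\emptyset$ fiber; the paper simply fixes $\tau=\varepsilon_\chi/2$ rather than discussing $\tau\to0$. As in the paper, the degree bound holds only for fibers with $S=\operatorname{supp}(\mathbf P_S)\cup\operatorname{supp}(\mathbf Q_S)$ (for larger $S$ with $\mathbf P_S=\mathbf Q_S=I_S$ the count can grow with $|E|$), so you should state that the maximum defining $\mathfrak d_{\Omega_E}$ runs only over these fibers, which are the only ones the recursion visits.
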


        \begin{proof}
            The recovery algorithm only uses nonempty support fibers; the case $S=\emptyset$ concerns the identity component and is irrelevant for the recovery of $G$ and $h$, as noted above.  Fix such a nonempty $S$.  If $\operatorname{Ext}_{\Omega_E}(S,\mathbf P_S,\mathbf Q_S)$ is nonempty, then $S\subseteq e_0$ for some $e_0\in E$.  Every edge $e\in E$ containing $S$ intersects $e_0$, and therefore there are at most $\Delta+1$ such edges.  For each edge $e$, the number of nonidentity diagonal extensions supported in $e\setminus S$ is at most $4^{|e\setminus S|}-1\le 4^k-1$. Thus the relevant diagonal-extension degree is bounded by $(\Delta+1)(4^k-1)$, and the corresponding recursive amplification is at most $D_{k,\Delta}$.  Exact support gives $\rho_{\Omega_E}=0$, while $\Omega_E$ contains every nonzero coefficient, so the threshold condition of Corollary~\ref{cor:dissipative-learning-stable-sparse} holds, for instance, with $\tau=\varepsilon_\chi/2$.  The claimed error, runtime, and sample bounds follow by applying that corollary.
        \end{proof}

        \begin{corollary}[Algebraically decaying diagonal-extension tails]
            \label{cor:algebraic-diagonal-extension-tails}
            Fix $k$.  Suppose that there are constants $C>0$ and $p>k+1$ such that, for every nonempty $S\subseteq[n]$ and every $\mathbf P_S,\mathbf Q_S\in\mathcal P_S$, the nonincreasing rearrangement $(a_j)_{j\ge1}$ of the diagonal-extension magnitudes given by the set $ \left\{ \left| \chi_{\mathbf P_S\otimes\mathbf A_{\overline S}, \mathbf Q_S\otimes\mathbf A_{\overline S}} \right|: \mathbf A_{\overline S}\neq I_{\overline S} \right\}$ satisfies $a_j\le Cj^{-p}$.  Let $\overline C_p:=C\left(2+\frac{1}{p-1}\right)$ and, for a target accuracy $\varepsilon_\chi>0$, choose
            \begin{equation*}
                b_\chi
                :=
                \left\lceil
                \max\left\{
                1,
                \left(\frac{4C}{\varepsilon_\chi}\right)^{1/p},
                \left(
                \frac{2(k+1)\overline C_p}{\varepsilon_\chi}
                \right)^{\frac{1}{p-k-1}}
                \right\}
                \right\rceil
            \end{equation*}
            and a threshold $\tau_{b_\chi}$ satisfying
            \begin{equation*}
                C(b_\chi+1)^{-p}<\tau_{b_\chi}\le 2C(b_\chi+1)^{-p},
            \end{equation*}
            and suppose the threshold support $\Omega_{\tau_{b_\chi}}$ is supplied.  Then the supplied-support algorithm of Corollary~\ref{cor:dissipative-learning-stable-sparse}, run with $\Omega=\Omega_{\tau_{b_\chi}}$, learns all entries of $G$ and $h$ to entrywise error $\mathcal O(\varepsilon_\chi)$ using
            \begin{equation*}
                \widetilde{\mathcal O}_k\!\left(
                \left(\sum_{\ell=0}^k b_\chi^\ell\right)^2
                \varepsilon_\chi^{-2}
                \log\frac{|\Omega_{\tau_{b_\chi}}|}{\delta}
                \right)
                =
                \widetilde{\mathcal O}_{k,p,C}\!\left(
                \varepsilon_\chi^{-2-\frac{2k}{p-k-1}}
                \log\frac{|\Omega_{\tau_{b_\chi}}|}{\delta}
                \right)
            \end{equation*}
            samples and $\mathcal O_k(|\Omega_{\tau_{b_\chi}}|)$ classical postprocessing time.
        \end{corollary}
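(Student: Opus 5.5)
The plan is to reduce the statement to Corollary~\ref{cor:dissipative-learning-stable-sparse} with $\Omega=\Omega_{\tau_{b_\chi}}$ and $\tau=\tau_{b_\chi}$. That requires two estimates: a bound on the diagonal-extension degree $\mathfrak d_\Omega$, and a bound on the tail $\rho_\Omega$. Together they should give $\tau+D_\Omega\rho_\Omega\le\varepsilon_\chi$. Once this holds, the corollary gives the entrywise error $\mathcal O(\varepsilon_\chi+D_\Omega\varepsilon_L)$ and the stated sample and runtime bounds with $\varepsilon_L=\varepsilon_\chi/D_\Omega$. The final asymptotic form follows by inserting the size of $b_\chi$.

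\emph{Step 1 (degree).} Fix a nonempty fiber $(S,\mathbf P_S,\mathbf Q_S)$ with nonincreasing rearrangement $(a_j)$. An extension lies in $\Omega_\tau$ iff its magnitude exceeds $\tau_{b_\chi}$. Since $a_j\le Cj^{-p}\le C(b_\chi+1)^{-p}<\tau_{b_\chi}$ for all $j\ge b_\chi+1$, at most $b_\chi$ extensions per fiber are accepted. Hence $\mathfrak d_\Omega\le b_\chi$ and $D_\Omega\le\sum_{\ell=0}^k b_\chi^\ell\le (k+1)b_\chi^k$, using $b_\chi\ge1$.

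\emph{Step 2 (tail).} The omitted extensions have magnitude at most $\tau_{b_\chi}$. I would split the omitted sum into the entries with rank $j\le b_\chi$ and those with $j>b_\chi$. For ranks $j\le b_\chi$: the entries are nonincreasing, and an omitted entry at rank $j\le b_\chi$ forces all later ones to be omitted. Each of these is $\le\tau_{b_\chi}\le 2C(b_\chi+1)^{-p}$, so their total is at most $b_\chi\cdot 2C(b_\chi+1)^{-p}\le 2Cb_\chi^{1-p}$. For ranks $j>b_\chi$, the integral comparison gives $\sum_{j>b}Cj^{-p}\le C(b+1)^{-p}+C\int_{b+1}^\infty x^{-p}dx\le C b^{1-p}\bigl(1+\tfrac1{p-1}\bigr)$. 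Altogether $\rho_\Omega\le \overline C_p\, b_\chi^{1-p}$, which is exactly where the constant $2+\frac1{p-1}$ comes from.

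\emph{Step 3 (balance).} Combining, $\tau+D_\Omega\rho_\Omega\le 2C(b_\chi+1)^{-p}+(k+1)\overline C_p\,b_\chi^{k+1-p}$. The choice $b_\chi\ge(4C/\varepsilon_\chi)^{1/p}$ makes the first term at most $\varepsilon_\chi/2$. The choice $b_\chi\ge(2(k+1)\overline C_p/\varepsilon_\chi)^{1/(p-k-1)}$, valid since $p>k+1$, makes the second term at most $\varepsilon_\chi/2$. Thus Assumption~\ref{ass:stable-sparse-diagonal-extensions} holds with the supplied $\Omega$, and the corollary applies. The $n$-independent scaling comes from $b_\chi=\mathcal O_{k,p,C}(\varepsilon_\chi^{-1/(p-k-1)})$, which dominates the $\varepsilon_\chi^{-1/p}$ branch for small $\varepsilon_\chi$. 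Then $D_\Omega^2\lesssim b_\chi^{2k}=\mathcal O(\varepsilon_\chi^{-2k/(p-k-1)})$, which yields the stated exponent.

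\emph{Main obstacle.} The step needing care is the tail bound. The threshold sits between $C(b+1)^{-p}$ and $2C(b+1)^{-p}$, so some of the top $b$ entries may be omitted, and they must be charged to $\tau$ rather than to the power law. I also need the case where the rearrangement is finite or has ties, which is harmless since the bounds are monotone. The factor-of-$2$ slack is what produces the "$2$" in $\overline C_p$, so the constants should close as stated.
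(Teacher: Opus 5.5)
Your proposal follows the paper's proof essentially step for step: the same degree bound $\mathfrak d_{\tau_{b_\chi}}\le b_\chi$ coming from $\tau_{b_\chi}>C(b_\chi+1)^{-p}$, the same split of the unresolved tail, and the same balancing of $\tau$ and $D\rho$ against $\varepsilon_\chi/2$ each. In the tail split, the at most $b_\chi$ omitted top-ranked entries are charged $b_\chi\tau_{b_\chi}\le 2Cb_\chi^{1-p}$, and the remaining entries are bounded by the power-law sum over $j>b_\chi$.

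The only flaw is arithmetic. Your bound $\sum_{j>b}Cj^{-p}\le Cb^{1-p}\bigl(1+\frac{1}{p-1}\bigr)$, added to $2Cb^{1-p}$, gives $C\bigl(3+\frac{1}{p-1}\bigr)b^{1-p}$, not $\overline C_p b^{1-p}$. So the constant does not close as you claim. Instead use $\sum_{j>b}j^{-p}\le\int_b^\infty x^{-p}\,dx=\frac{b^{1-p}}{p-1}$, as the paper does. Your own intermediate expression $(b+1)^{-p}+\frac{(b+1)^{1-p}}{p-1}$ already satisfies this bound, because $(b+1)^{-p}\le\int_b^{b+1}x^{-p}\,dx$.

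Even without this fix you would obtain $\tau+D\rho\le\frac54\varepsilon_\chi$, which still yields the $\mathcal O(\varepsilon_\chi)$ conclusion.
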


        \begin{proof}
            For any fixed nonempty fiber, the choice $\tau_{b_\chi}>C(b_\chi+1)^{-p}$ implies that every coefficient above threshold must appear among the first $b_\chi$ terms of the rearrangement. Hence $\mathfrak d_{\tau_{b_\chi}}\le b_\chi$ and
            \begin{equation*}
                D_{\tau_{b_\chi}}
                \le
                \sum_{\ell=0}^k b_\chi^\ell .
            \end{equation*}
            The unresolved tail outside the threshold support obeys
            \begin{equation*}
                \rho_{\tau_{b_\chi}}
                \le
                \sum_{j>b_\chi} Cj^{-p}+b_\chi\tau_{b_\chi}
                \le
                \left(\frac{C}{p-1}+2C\right)b_\chi^{1-p}
                =
                \overline C_p b_\chi^{1-p}.
            \end{equation*}
            Therefore
            \begin{equation*}
                D_{\tau_{b_\chi}}\rho_{\tau_{b_\chi}}
                \le
                (k+1)b_\chi^k\overline C_p b_\chi^{1-p}
                =
                (k+1)\overline C_p b_\chi^{k+1-p}
                \le
                \frac{\varepsilon_\chi}{2}.
            \end{equation*}
            Moreover,
            \begin{equation*}
                \tau_{b_\chi}
                \le
                2C b_\chi^{-p}
                \le
                \frac{\varepsilon_\chi}{2}.
            \end{equation*}
            Thus $\tau_{b_\chi}+D_{\tau_{b_\chi}}\rho_{\tau_{b_\chi}}\le\varepsilon_\chi$, so Corollary~\ref{cor:dissipative-learning-stable-sparse} applies.  The final display follows from $\sum_{\ell=0}^k b_\chi^\ell=\mathcal O_k(b_\chi^k)$ and the definition of $b_\chi$.
        \end{proof}

\section{Learning the PTM elements}\label{sec:learning-ptm-elements}
    Next, we argue how to efficiently estimate the coefficients $L_{\mathbf{P},\mathbf{Q}}$. For operators $X \in \mathcal{B}(\mathcal{H})$, the Frobenius norm is defined as: \begin{equation*} \|X\|_{2} := \left(\Tr[X^{\dagger}X]\right)^{1/2} \,.\end{equation*} We denote the weighted interaction strength of $\mathcal{L}$ (recall Equation \eqref{eq:weightedintersectiondegree}) as:
    \begin{equation}\label{eq:weighted-interaction-strength}
        \alpha := \max_{u \in [n]} \sum_{e\in \mathcal{R}_{n,\le k},u\in e} \|\mathcal{L}^\dagger_{e}\|_{2 \to 2} \,.
    \end{equation}
    Given an arbitrary operator $Q$, we consider the degree-$d$ operator-valued Taylor polynomial:
    \begin{equation}\label{eq:approximating-polynomial}
        Q^{(d)}(t) := \sum_{m=0}^{d} \frac{t^{m}}{m!} (\mathcal{L}^{\dagger})^m(Q).
    \end{equation}

    The following lemma expresses the accuracy of $ Q^{(d)}(t)$ as an approximation to the time-evolution of $Q$ under $\mathcal{L}$, and is inspired by the literature of Lieb-Robinson \cite{lieb1972finite, nachtergaele2006lieb, CL19, AnthonyChen2023} and operator-growth bounds \cite{Chen2021, Lucas_2020, LO2020}.

    \begin{lemma}[Polynomial Approximations to Heisenberg Evolution]\label{lem:poly-approx}
        For $q\in\mathbb{N}\backslash \{0\}$, let $Q$ be an arbitrary $q$-local operator and let $a = \lceil q/k \rceil$. Then, for any error $\varepsilon \in (0,1)$ and time $t \in [0, (4\alpha k)^{-1}]$, the operator-valued polynomial $Q^{(d)}(t)$ of degree $d:=a+\left\lceil\log_2\frac{1}{\varepsilon}\right\rceil=\cO\!\left(a+\log\frac{1}{\varepsilon}\right)$ satisfies
        \begin{equation}
            \|e^{t\mathcal{L}^\dagger}[Q] - Q^{(d)}(t)\|_{2} \le \varepsilon \cdot \|Q\|_{2}\,.
        \end{equation}
    \end{lemma}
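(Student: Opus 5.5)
We will bound the Taylor remainder term by term. Since $e^{t\cL^\dagger}[Q]-Q^{(d)}(t)=\sum_{m>d}\frac{t^m}{m!}(\cL^\dagger)^m(Q)$ (the series converges, as everything is finite-dimensional), it suffices to establish a combinatorial operator-growth estimate of the form
\begin{equation*}
    \|(\cL^\dagger)^m(Q)\|_2\;\le\;\alpha^m\,\frac{(a+m)!}{a!}\,k^m\,\|Q\|_2\,,
\end{equation*}
or a comparable bound of the shape $m!\,(c\alpha k)^m\binom{m+a}{a}\|Q\|_2$. Plugging this into the tail and using $t\le(4\alpha k)^{-1}$ gives a remainder of order $\sum_{m>d}\binom{m+a}{a}4^{-m}\|Q\|_2$. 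Since $\binom{m+a}{a}\le 2^{m+a}$, this tail is at most $2^{a}\sum_{m>d}2^{-m}=2^{a-d}\|Q\|_2$. With $d=a+\lceil\log_2(1/\varepsilon)\rceil$, that yields $\le\varepsilon\|Q\|_2$, which is exactly the stated degree.

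The growth estimate is proved by expanding $(\cL^\dagger)^m=\sum_{e_1,\dots,e_m}\cL^\dagger_{e_m}\cdots\cL^\dagger_{e_1}$ and noting two facts. First, each $\cL^\dagger_e$ annihilates operators acting trivially on $e$. This holds because $\cL_e^\dagger(I_e\otimes X)=0$ by unitality of the Heisenberg-picture GKSL generator restricted to $e$ (more precisely, $\cL_e^\dagger(Y_{\bar e})=\cL_e^\dagger(I)\,Y_{\bar e}=0$ for operators supported off $e$). Second, $\cL^\dagger_e$ enlarges the support by at most $k$ sites. Hence only sequences in which each $e_j$ intersects the current support $\operatorname{supp}(Q)\cup e_1\cup\dots\cup e_{j-1}$ contribute, and that support has size at most $q+(j-1)k\le k(a+j-1)$. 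For such a sequence we use $\|\cL^\dagger_{e_m}\cdots\cL^\dagger_{e_1}(Q)\|_2\le\prod_j\|\cL^\dagger_{e_j}\|_{2\to2}\|Q\|_2$. Summing over $e_j$ hitting a given support $A$ gives at most $\sum_{u\in A}\sum_{e\ni u}\|\cL^\dagger_e\|_{2\to2}\le|A|\alpha$. Iterating over $j$ yields
\begin{equation*}
    \|(\cL^\dagger)^m Q\|_2\le\prod_{j=1}^m k(a+j-1)\alpha\,\|Q\|_2=(k\alpha)^m\frac{(a+m-1)!}{(a-1)!}\|Q\|_2\,.
\end{equation*}
Dividing by $m!$ gives $(k\alpha t)^m\binom{a+m-1}{m}$, and then $\binom{a+m-1}{m}\le 2^{a+m-1}$ together with $k\alpha t\le1/4$ gives a tail $\le 2^{a-1}\sum_{m>d}2^{-m}\le 2^{a-d}$. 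This is fine.

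The main obstacle is the iterated summation. The naive triangle inequality over all sequences $(e_1,\dots,e_m)$ must be organized so that the choice of $e_j$ depends only on the support of the partial product, not on cancellations. We handle this by induction on $m$ over the set of "admissible paths", bounding a path-weighted sum rather than the operator norm directly. A second subtlety is that the support of $\cL^\dagger_e(X)$ may be strictly smaller than $\operatorname{supp}(X)\cup e$. This is harmless: we simply use the upper bound $\operatorname{supp}(X)\cup e$ as the bookkeeping set, which only overcounts admissible edges. Finally, we need the annihilation property for each $\cL_e$ individually, which requires that each local term (Hamiltonian plus dissipator on $e$) is itself trace preserving. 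This holds for the decomposition in \eqref{eq:linblad-form}.
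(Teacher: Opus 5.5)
Your proposal is correct and follows the paper's proof essentially step by step. The paper also expands $(\cL^\dagger)^m(Q)$ over admissible paths and uses $\|\cL^\dagger(X)\|_2\le\alpha|A|\,\|X\|_2$ for $X$ supported on $A$. This gives the weighted-sum recursion $W_{m+1}\le\alpha(q+m(k-1))W_m$, hence the bound $(\alpha k)^m\frac{(m-1+a)!}{(a-1)!}\|Q\|_2$, and the paper then sums the Taylor tail using $x=\alpha k t\le 1/4$.

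The only cosmetic difference is in that last step. The paper bounds the tail by a ratio test, $\sum_{m>d}T_m\le 2T_{d+1}\le 2^{a-d-1}$. You instead bound each term directly via $\binom{a+m-1}{m}\le 2^{a+m-1}$ and sum a geometric series, which reaches the same bound $2^{a-d-1}\le\varepsilon$.
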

    \noindent

    \noindent To bound the desired accuracy, it suffices to control the norm of $Q_{m} := (\mathcal{L}^{\dagger})^m(Q)=\mathcal{L}^\dagger (Q_{m-1})$.

    \begin{lemma}[Support Decomposition]\label{lem:suppdecom}
        With the notation of Lemma \ref{lem:poly-approx}, for each $m \ge 0$ there exists a decomposition of the Taylor iterate of the form $Q_{m} = \sum_{A \subseteq [n]} Q_{m,A}$ such that $\operatorname{supp}(Q_{m,A}) \subseteq A$ and $Q_{m,A} = 0$ unless $|A| \le q + (k-1) \cdot m$. Furthermore, the total weight is bounded by:
        \begin{equation}
            \sum_A \|Q_{m,A}\|_2 \le \|Q\|_2 \cdot (\alpha k)^m \frac{(m-1+a)!}{(a-1)! }
        \end{equation}
        with $a = \lceil q/k \rceil$ and $\alpha = \max_{u \in [n]} \sum_{e \ni u} \|\mathcal{L}^\dagger_{e}\|_{2 \to 2}$.
    \end{lemma}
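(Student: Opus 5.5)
The plan is induction on $m$, building the decomposition of $Q_m$ by applying $\mathcal{L}^\dagger=\sum_e \mathcal{L}_e^\dagger$ term by term to the pieces of $Q_{m-1}$. For $m=0$ we take $Q_{0,A}=Q$ for $A=\operatorname{supp}(Q)$ (so $|A|\le q$) and zero otherwise. The bound is then $\|Q\|_2$, which matches the formula at $m=0$ since $\frac{(a-1)!}{(a-1)!}=1$.

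For the inductive step the key observation is locality of the adjoint generator. If $\operatorname{supp}(X)\subseteq A$ and $e\cap A=\emptyset$, then $\mathcal{L}_e^\dagger(X)=0$. This holds because $\mathcal{L}_e^\dagger$ is unital and acts only on the qubits in $e$, so $X=X_A\otimes I_{\overline A}$ is mapped to $X_A\otimes\mathcal{L}_e^\dagger(I)=0$; the Hamiltonian and dissipative parts each annihilate the identity on $e$. When $e\cap A\neq\emptyset$, $\mathcal{L}_e^\dagger(X)$ is supported in $A\cup e$, whose size is at most $|A|+k-1$. We therefore define
\[
Q_{m,B}:=\sum_{A}\sum_{e:\,e\cap A\neq\emptyset,\ A\cup e=B}\mathcal{L}_e^\dagger(Q_{m-1,A}),
\]
which gives the support claim $|B|\le q+(k-1)m$. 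For the norm, the triangle inequality and the union bound over $u\in A$ give
\[
\sum_B\|Q_{m,B}\|_2\le\sum_A\|Q_{m-1,A}\|_2\sum_{u\in A}\sum_{e\ni u}\|\mathcal{L}_e^\dagger\|_{2\to2}\le\alpha\sum_A|A|\,\|Q_{m-1,A}\|_2 .
\]

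The main obstacle is that this bound carries the factor $|A|$, which can be as large as $q+(k-1)(m-1)$. The crude estimate $|A|\le q+k(m-1)\le k(a+m-1)$ yields
\[
\sum_B\|Q_{m,B}\|_2\le\alpha k\,(a+m-1)\sum_A\|Q_{m-1,A}\|_2 .
\]
Iterating this gives $(\alpha k)^m\prod_{j=1}^{m}(a+j-1)=(\alpha k)^m\frac{(a+m-1)!}{(a-1)!}$, exactly the claimed bound. The one point to verify carefully is the inequality $q+(k-1)(j-1)\le k(a+j-1)$ at step $j$, where $|A|\le q+(k-1)(j-1)$ bounds the pieces of $Q_{j-1}$. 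It holds because $q\le ka$ and $(k-1)(j-1)\le k(j-1)$.

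This bound is the input for Lemma~\ref{lem:poly-approx}. There one writes the remainder as $\sum_{m>d}\frac{t^m}{m!}\|Q_m\|_2$ and uses $(\alpha k t)^m\le 4^{-m}$. Since $\frac{(a+m-1)!}{m!(a-1)!}=\binom{a+m-1}{m}\le 2^{a+m-1}$, the terms are bounded by $2^{a-1}2^{-m}$, and the sum over $m>d$ is at most $2^{a-1-d}\le\varepsilon$ by the choice of $d$.
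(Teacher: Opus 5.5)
Your proposal is correct and follows essentially the same route as the paper: apply the local terms $\mathcal{L}_e^\dagger$ to the pieces of $Q_{m-1}$ (the paper phrases this as an expansion over paths $e_1,\dots,e_m$), use $\mathcal{L}_e^\dagger(X)=0$ when $e\cap\operatorname{supp}(X)=\emptyset$ to get the per-step factor $\alpha|A|\le\alpha\bigl(q+(k-1)(m-1)\bigr)$, and then bound $q+i(k-1)\le k(a+i)$ to obtain $(\alpha k)^m\frac{(m-1+a)!}{(a-1)!}$. Your explicit grouping $Q_{m,B}=\sum_{A}\sum_{e\cap A\neq\emptyset,\,A\cup e=B}\mathcal{L}_e^\dagger(Q_{m-1,A})$ simply makes precise the decomposition that the paper leaves implicit.
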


    \begin{proof}
        The first part of the claim follows via the expansion:
        \begin{equation}
            Q_m = \sum_{\text{paths } e} \mathcal{L}^\dagger_{e_m} \dots \mathcal{L}^\dagger_{e_1}(Q),
        \end{equation}
        where the sum is over paths $e=(e_1,\ldots,e_m)$ such that $e_i$ intersects $\operatorname{supp}(Q)\cup e_1\cup\cdots\cup e_{i-1}$ for each $i$. Now, for a fixed $X$ with support on $A \subseteq [n]$, the norm growth by $\mathcal{L}^\dagger$ is bounded by:
        \begin{equation}
            \|\mathcal{L}^\dagger(X)\|_2 = \biggl\|\sum_{e: e \cap A \neq \emptyset} \mathcal{L}^\dagger_e(X)\biggr\|_2 \le \sum_{e: e \cap A \neq \emptyset} \|\mathcal{L}^\dagger_e\|_{2 \to 2} \|X\|_2 \le \alpha |A| \cdot \|X\|_2
        \end{equation}
        where in the first equation we used that $\cL_e^\dagger(X)=0$ for any $e$ with $e\cap A=\emptyset$. Let $W_m$ be the total weight $\sum_A \|Q_{m,A}\|_2$. By induction with $W_0 = \|Q\|_2$:
        \begin{align*}
            W_{m+1} \le \sum_A \sum_{e: e \cap A \neq \emptyset} \|\mathcal{L}^\dagger_e(Q_{m,A})\|_2
            \le \alpha \sum_A |A| \cdot \|Q_{m,A}\|_2 \le \alpha (q + m(k-1)) W_m\,.
        \end{align*}
        The product yields:
        \begin{equation*}
            \prod_{i=0}^{m-1}\alpha(q+i(k-1))
            \le
            (\alpha k)^m\prod_{i=0}^{m-1}(a+i)
            =
            (\alpha k)^m\frac{(m-1+a)!}{(a-1)!}\, .
        \end{equation*}

    \end{proof}

    \noindent We are now ready to prove Lemma \ref{lem:poly-approx}.

    \begin{proof}[Proof of Lemma \ref{lem:poly-approx}]
        With $x := \alpha \cdot t \cdot k$, by Lemma \ref{lem:suppdecom} the error bound is:
        \begin{equation}\label{eq:taylor-path-bound}
            \|e^{t\mathcal{L}^\dagger}(Q) - Q^{(d)}(t)\|_2 \le \|Q\|_2 \sum_{m=d+1}^{\infty} \binom{a+m-1}{m} x^m
        \end{equation}
        For all $m\ge d+1$, let $T_m(x) = \binom{a+m-1}{m}x^m$. The ratio of successive terms satisfies
        \begin{equation*}
            \frac{T_{m+1}(x)}{T_m(x)}
            =
            x\frac{a+m}{m+1}
            \le
            x\left(1+\frac{a-1}{d+2}\right)
            =:
            r_dx.
        \end{equation*}
        Since $d\ge a$ and $x\le1/4$, we have $r_dx<1/2$. Hence
        \begin{equation}\label{eq:local-taylor-path-bound}
            \sum_{m=d+1}^{\infty} T_m(x) \le \frac{T_{d+1}(x)}{1 - r_d \cdot x} \le 2 \cdot T_{d+1}(x)\,.
        \end{equation}
        Moreover, using $x\le1/4$ and $\binom{a+d}{d+1}\le2^{a+d}$, we obtain
        \begin{equation*}
            T_{d+1}(x)
            \le
            \binom{a+d}{d+1}4^{-(d+1)}
            \le
            2^{a-d-2}.
        \end{equation*}
        Therefore
        \begin{equation*}
            \sum_{m=d+1}^{\infty}T_m(x)
            \le
            2^{a-d-1}
            \le
            \varepsilon,
        \end{equation*}
        where the last inequality follows from the choice of $d$.
    \end{proof}

    \noindent Next, given $\mathbf{P},\mathbf{Q}\in\mathcal{P}_{n}$, we denote the scalar degree-$d$ polynomial
    \begin{equation*}
        p_{\mathbf{P},\mathbf{Q}}^{(d)}(t):=\frac{1}{2^n}\Tr( \mathbf{P}^{(d)}(t)\mathbf{Q}).
    \end{equation*}
    By construction, $(p^{(d)}_{\mathbf{P},\mathbf{Q}})'(0)=L_{\mathbf{P},\mathbf{Q}}$. Moreover, by Lemma \ref{lem:poly-approx}, for any $t\in[0,(4\alpha k)^{-1}]$, choosing $a_{\mathbf P}=\lceil \operatorname{wt}(\mathbf{P})/k\rceil$ and $d=\cO\big(a_{\mathbf P}+\log\frac{1}{\varepsilon_{\operatorname{Shadow}}}\big)$,
    \begin{align*}
        \Big|p^{(d)}_{\mathbf{P},\mathbf{Q}}(t)-\frac{1}{2^n}\Tr({\mathbf{P}} e^{t\cL}({\mathbf{Q}}))\Big|\le \frac{1}{2^n}\,\|\mathbf{Q}\|_2\,\Big\|\mathbf{P}^{(d)}(t)-e^{t\cL^\dagger}[\mathbf{P}]\Big\|_2 \le \frac{\varepsilon_{\operatorname{Shadow}}\|\mathbf{P}\|_2\,\|\mathbf{Q}\|_2}{2^n} \le \varepsilon_{\operatorname{Shadow}}\,.
    \end{align*}

    \noindent To get a good approximation of the polynomial $p^{(d)}_{\mathbf{P},\mathbf{Q}}$, we run a process shadow tomography \cite{StilckFranca.2024,StilckFranca.2025}, which enables the parallel estimation of Pauli overlaps of the form
    \begin{equation*}
        2^{-n}\Tr(\mathbf{P}e^{t\cL}[\mathbf{Q}]).
    \end{equation*}
    The protocol consists of:
    \begin{enumerate}
        \item Preparing a random product Pauli eigenstate, i.e.\ a tensor product of eigenstates of $X$, $Y$, or $Z$ on each qubit.
        \item Evolving under $e^{t\cL}$.
        \item Measuring in a random product Pauli basis.
    \end{enumerate}
    \begin{assumption}[Statistical observation model]\label{ass:statistical-observation-model}
        Experimental shots are mutually independent. In a shot at time $t$, the input state and measurement basis are sampled independently from the product-Pauli ensembles specified above, and the outcome is distributed according to the Born rule for $e^{t\cL}$, with the same generator $\cL$ in every shot. 
    \end{assumption}
    For collections of Pauli strings $\{\mathbf{P}_i\}_{i=1}^{K_1}$ and $\{\mathbf{Q}_j\}_{j=1}^{K_2}$ with maximum combined weight
    \begin{equation*}
        {\max_{i,j}}\,\operatorname{wt}(\mathbf{P}_i) + \operatorname{wt}(\mathbf{Q}_j) \leq w\,,
    \end{equation*}
    all $K_1 K_2$ overlaps can be estimated to precision $\varepsilon_{\operatorname{Shadow}}$ with probability at least $1-\delta$ using
    \begin{align}\label{eq:processshadows}
        S = \mathcal{O}\left(3^{w} \log(K_1 K_2 \delta^{-1}) \varepsilon_{\operatorname{Shadow}}^{-2}\right)
    \end{align}
    samples. Here, $K_1$ and $K_2$ are the numbers of queried input/output Pauli strings, and $w=\max_{i,j}(\wt(\mathbf P_i)+\wt(\mathbf Q_j))$; in our local application $w\le 2k$ and $K_1K_2$ is the number of required PTM pairs. The postprocessing involves computing a median-of-means estimator from appropriately weighted measurement outcomes. We refer to~\cite{StilckFranca.2024} for a proof and more details. This tool allows us to efficiently estimate all required expectation values $2^{-n}\Tr(\mathbf{P} e^{t\cL}[\mathbf{Q}])$ in parallel for various times $t$, with sample complexity scaling exponentially only in the maximum weight $w$, which remains constant for local observables.

    \medskip

    Next, we explain how to turn these time-dependent overlap estimates into estimates of the PTM coefficients
    \begin{equation*}
        L_{\mathbf P,\mathbf Q}
        =
        \left.\frac{d}{dt}\right|_{t=0}
        2^{-n}\Tr\!\left(\mathbf P e^{t\cL}(\mathbf Q)\right).
    \end{equation*}
    We follow a scheme already used in \cite{StilckFranca.2024,StilckFranca.2025}: this is done via a robust polynomial-interpolation strategy of~\cite{StilckFranca.2024}, but with one important difference in the present setting: the approximation of the time trace is provided directly by Lemma \ref{lem:poly-approx}. No Lieb--Robinson bound or geometric locality assumption is used. The only dynamical input is that the generator is $k$-local and has a bounded weighted interaction strength $\alpha$. Fix $(\mathbf P,\mathbf Q)$, and define
    \begin{equation*}
        f_{\mathbf P,\mathbf Q}(t)
        :=
        2^{-n}\Tr\!\left(\mathbf P e^{t\cL}(\mathbf Q)\right).
    \end{equation*}
    Then $f'_{\mathbf P,\mathbf Q}(0) = 2^{-n}\Tr\!\left(\mathbf P\cL(\mathbf Q)\right) = L_{\mathbf P,\mathbf Q}$ and define $T:=\frac{1}{4\alpha k}$. By Lemma \ref{lem:poly-approx}, for $a_{\mathbf P}:=\left\lceil\frac{\wt(\mathbf P)}{k}\right\rceil$ and for $d = \cO\!\left(a_{\mathbf P}+\log\frac{1}{\tau}\right)$, the polynomial $p_{\mathbf P,\mathbf Q}^{(d)}(t) := 2^{-n}\Tr\!\left(\mathbf P^{(d)}(t)\mathbf Q\right)$ satisfies
    \begin{equation}\label{eq:poly-uniform-approx}
        \sup_{t\in[0,T]}
        \left|
        f_{\mathbf P,\mathbf Q}(t)
        -
        p_{\mathbf P,\mathbf Q}^{(d)}(t)
        \right|
        \le \tau.
    \end{equation}
    Moreover, $\left(p_{\mathbf P,\mathbf Q}^{(d)}\right)'(0) = L_{\mathbf P,\mathbf Q}$, and both $f_{\mathbf P,\mathbf Q}$ and $p_{\mathbf P,\mathbf Q}^{(d)}$ are real-valued because the dynamics preserves Hermiticity.
    \begin{lemma}[Robust polynomial regression on a time interval]\label{lem:robust-time-polynomial-regression}
        Fix $\eta\in(0,1)$ and $\beta\in[0,1/2)$. There exists a constant $C_{\beta,\eta}>0$, depending only on $\beta$ and $\eta$, such that the following holds. Let $p^{(d)}:[0,T]\to\mathbb R$ have degree at most $d$, and let $\{(t_i,y_i)\}_{i=1}^m$ be $\beta$-good with respect to a Chebyshev partition of $[0,T]$ into $s\geq C_{\beta,\eta}d$ intervals: every interval contains a sample, and in each interval at most a $\beta$ fraction of the samples violate $|y_i-p^{(d)}(t_i)|\leq\varepsilon_{\operatorname{Shadow}}+\tau$. Then the robust regression algorithm of \cite{KaneKarmalkarPrice2017} returns a degree-$d$ polynomial $\widehat p$ satisfying
        \begin{equation*}
            \sup_{t\in[0,T]}|\widehat p(t)-p^{(d)}(t)|\leq C_{\operatorname{reg}}(\varepsilon_{\operatorname{Shadow}}+\tau),\qquad C_{\operatorname{reg}}:=2+\eta.
    \end{equation*}
    \end{lemma}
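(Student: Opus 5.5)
The plan is to reduce the statement to the robust regression guarantee of \cite{KaneKarmalkarPrice2017}, after rescaling the time interval to the standard interval on which that result is formulated. Let $\phi:[-1,1]\to[0,T]$, $\phi(u)=T(u+1)/2$, and set $q:=p^{(d)}\circ\phi$. This is again a real polynomial of degree at most $d$. A Chebyshev partition of $[0,T]$ is exactly the image under $\phi$ of the Chebyshev partition of $[-1,1]$, with intervals $[\cos(\pi j/s),\cos(\pi(j-1)/s)]$. The samples $(u_i,y_i)$ with $u_i=\phi^{-1}(t_i)$ are therefore $\beta$-good for $q$ in the same sense, with the same inlier tolerance $\sigma:=\varepsilon_{\operatorname{Shadow}}+\tau$. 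Sup norms are invariant under this affine change of variable, so it suffices to prove the statement on $[-1,1]$ and pull the result back.

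On $[-1,1]$, I would invoke the $\ell_\infty$ robust regression theorem of \cite{KaneKarmalkarPrice2017} in its deterministic form. It says: if there are $s\ge C_{\beta,\eta}d$ Chebyshev intervals, each containing a sample, and at most a $\beta<1/2$ fraction of the samples in each interval are outliers, then the algorithm returns a degree-$d$ polynomial $\widehat q$ with $\|\widehat q-q\|_{\infty}\le(2+\eta)\sigma$. The algorithm solves a linear program that minimises, over degree-$d$ polynomials, a weighted median-type loss within each interval.

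The proof of that theorem, which I would reproduce in outline, has three steps.

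\textbf{Step 1 (residual control at enough points).} In each interval, more than half of the samples satisfy $|y_i-q(u_i)|\le\sigma$. By the optimality of $\widehat q$ for the per-interval median objective, in each interval there is a point $\xi_j$ with $|\widehat q(\xi_j)-q(\xi_j)|\le 2\sigma$. One $\sigma$ comes from the inlier tolerance, and the other from comparing against the candidate solution $q$ itself.

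\textbf{Step 2 (Chebyshev partition argument).} This is the main obstacle. We need a bound on the degree-$d$ polynomial $r:=\widehat q-q$ everywhere, given only that it is at most $2\sigma$ at one point $\xi_j$ per interval. We use the Bernstein/Markov inequality in the Chebyshev variable $u=\cos\theta$: $|r'(u)|\sqrt{1-u^2}\le d\|r\|_\infty$. So $\theta\mapsto r(\cos\theta)$ has derivative at most $d\|r\|_\infty$. Each Chebyshev interval has $\theta$-width $\pi/s$, so every $u$ lies within angular distance $\pi/s$ of some $\xi_j$. Hence
\[
\|r\|_\infty\le 2\sigma+\frac{\pi d}{s}\|r\|_\infty .
\]
Choosing $s\ge C_{\beta,\eta}d$ with $\pi/C_{\beta,\eta}\le \eta/(2+\eta)$ gives $\|r\|_\infty\le(2+\eta)\sigma$. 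The $\beta$-dependence of $C_{\beta,\eta}$ enters through Step 1, where we need enough samples per interval so that the median is determined by inliers.

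\textbf{Step 3 (conclusion).} Transport back via $\phi$ and set $\widehat p:=\widehat q\circ\phi^{-1}$, which gives $\sup_{[0,T]}|\widehat p-p^{(d)}|\le C_{\operatorname{reg}}(\varepsilon_{\operatorname{Shadow}}+\tau)$.

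I expect the delicate parts to be Step 1, namely making the per-interval median optimality argument rigorous with weights, and the constant bookkeeping in Step 2. For both I would rely directly on the cited result rather than re-deriving them.
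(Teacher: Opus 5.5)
Your proposal is correct and takes essentially the paper's route: the paper gives no independent proof and simply imports the deterministic $\beta$-good guarantee of \cite{KaneKarmalkarPrice2017}, transported to $[0,T]$ by exactly the affine change of variables you describe. Your outline of the cited proof is only supplementary. Step~2 is sound, but Step~1 does not follow from optimality of a summed objective: per-interval control needs, for example, a max over intervals, after which in each interval the samples $\sigma$-close to $q$ and those $\sigma$-close to $\widehat q$ together outnumber the interval's samples and so share a point. Also, $C_{\beta,\eta}$ scales the number of intervals, not the number of samples per interval, so deferring to the cited theorem at those points, as you do, is the right call.
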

    We choose $m=\mathcal{O}(d\log d)$ interpolation times $t_1,\ldots,t_m\in[0,T]$ covering such a Chebyshev partition. At each time $t_i$, the process-shadow estimator gives estimates $\widehat f_{\mathbf P,\mathbf Q}(t_i)$ such that, uniformly over all pairs $(\mathbf P,\mathbf Q)$ to be learned and all interpolation times,
    \begin{equation}\label{eq:time-sample-error}
        \left|
        \widehat f_{\mathbf P,\mathbf Q}(t_i)
        -
        f_{\mathbf P,\mathbf Q}(t_i)
        \right|
        \le \varepsilon_{\operatorname{Shadow}}
    \end{equation}
    with probability at least $1-\delta$. By the process-shadow bound and a union bound over all $m$ interpolation times and all required $M_k:=|\mathcal I_k|=\mathcal O_k(n^k)$, it suffices to use
    \begin{equation*}
        S_i
        =
        \mathcal{O}_k\!\left(
        \log\!\left(\frac{mn^k}{\delta}\right)
        \varepsilon_{\operatorname{Shadow}}^{-2}
        \right)
    \end{equation*}
    samples at each time $t_i$. Then, with probability at least $1-\delta$, \eqref{eq:time-sample-error} holds simultaneously for every $i\in[m]$ and every $(\mathbf P,\mathbf Q)\in\mathcal I_k$. On this event, \eqref{eq:poly-uniform-approx} and the triangle inequality show that the data are $0$-good with error $\varepsilon_{\operatorname{Shadow}}+\tau$. Lemma~\ref{lem:robust-time-polynomial-regression} therefore returns a degree-$d$ polynomial $\widehat p_{\mathbf P,\mathbf Q}$ obeying
    \begin{equation}\label{eq:robust-poly-fit}
        \sup_{t\in[0,T]}
        \left|
        \widehat p_{\mathbf P,\mathbf Q}(t)
        -
        p_{\mathbf P,\mathbf Q}^{(d)}(t)
        \right|
        \le
        C_{\operatorname{reg}}(\varepsilon_{\operatorname{Shadow}}+\tau).
    \end{equation}
    The point of using robust interpolation is that the conclusion is a uniform-in-time polynomial approximation on the whole interval, rather than a bound only at the sampled times. It remains to convert the uniform polynomial approximation into an estimate of the derivative at zero. We use Markov brothers' inequality \cite{StilckFranca.2024,StilckFranca.2025}: for every real or complex polynomial $r$ of degree at most $d$,
    \begin{equation}\label{eq:markov-brothers}
        \sup_{t\in[0,T]} |r'(t)|
        \le
        \frac{2d^2}{T}
        \sup_{t\in[0,T]} |r(t)|.
    \end{equation}
    Applying \eqref{eq:markov-brothers} to $r(t) = \widehat p_{\mathbf P,\mathbf Q}(t) - p_{\mathbf P,\mathbf Q}^{(d)}(t)$ and using \eqref{eq:robust-poly-fit}, setting $\widehat L_{\mathbf P,\mathbf Q} := \widehat p'_{\mathbf P,\mathbf Q}(0)$ and choosing $ \varepsilon_{\operatorname{Shadow}},\tau \le \frac{T\varepsilon_L}{4C_{\operatorname{reg}}d^2} $, we get
    \begin{equation}\label{eq:shadow-ptm-approximation}
        \left|
        \widehat L_{\mathbf P,\mathbf Q}
        -
        L_{\mathbf P,\mathbf Q}
        \right|
        \le
        \varepsilon_L.
    \end{equation}
    Combining this derivative-estimation step with the inversion error bounds of Corollaries \ref{thm:dissipative-learning}, \ref{cor:threshold-structure-learning}, and \ref{cor:dissipative-learning-stable-sparse} gives the following overall coefficient-learning guarantee.

    \begin{theorem}[Entrywise recovery of $G$ and $h$]\label{thm:overall-coefficient-learning}
        Fix $k,\alpha$. Given an unknown $k$-local generator $\cL$ with bounded weighted interaction strength $\alpha$, the protocol specified above outputs coefficients $\widehat G_{\mathbf P,\mathbf Q}$ and $\widehat h_{\mathbf P}$ such that, with probability at least $1-\delta$,
        \begin{equation*}
            \sup_{\mathbf{P},\mathbf{Q}\in\mathcal{P}_n}\big|\widehat G_{\mathbf{P},\mathbf{Q}}-G_{\mathbf{P},\mathbf{Q}}\big|
            \le
            \varepsilon_\chi,
            \qquad
            \sup_{\mathbf{P}\in\mathcal{P}_n}    \big|\widehat h_{\mathbf{P}}-h_{\mathbf{P}}\big|
            \le
            \varepsilon_\chi.
        \end{equation*}
        The protocol uses
        \begin{align*}
            &\widetilde{\mathcal O}_k\!\left(
            {\varepsilon_\chi^{-2}n^{2k}\log(1/\delta)}
            \right)& \text{ samples, and}\\
            & \mathcal O_k(\operatorname{polylog}(1/\varepsilon_\chi))& \text{times in }[0,(4\alpha k)^{-1}].
        \end{align*}
        where the $\widetilde{\mathcal O}$-notation hides factors polynomial in $\log(n/\varepsilon_{\chi})$ and constants depending on $k$ and $\alpha$.

        For the sparse version, we separate the costs associated to support search from parameter learning. If, for some threshold $\tau>0$, a candidate set $\Omega\subseteq\mathcal I_k$ of influential coefficients is supplied and satisfies the assumptions of Corollary~\ref{cor:dissipative-learning-stable-sparse}, then estimating the coefficients on this fixed support only requires
        \begin{align*}
            &  {\widetilde{\mathcal O}_{k}\!\left({\frac{D_{\Omega}^2}{\varepsilon_\chi^2}\log(|\Omega|/\delta)}
            \right)}&{\text{samples and}}\\
            &\mathcal O_k(|\Omega|)&\text{ classical postprocessing time.}
        \end{align*}
        Moreover, for a decision threshold $\lambda>0$ and a margin $0<\gamma<\lambda$, suppose the guarded diagonal-extension quantities satisfy
        \begin{equation*}
            D_{\lambda-\gamma}\rho_{\lambda+\gamma}\le \frac{\gamma}{2}.
        \end{equation*}
        Then finding a set $\widehat{\Omega}_\lambda$ that includes all coefficients with $|\chi_{\mathbf{P},\mathbf{Q}}|> \lambda+\gamma$ and rejects all coefficients with $|\chi_{\mathbf{P},\mathbf{Q}}|< \lambda-\gamma$ with probability $1-\delta$ can be achieved with
        \begin{align*}
            &\widetilde{\mathcal O}_k\left(
            \frac{D_{\lambda-\gamma}^2}{\gamma^2}\log\frac{n}{\delta}
            \right)&\text{ samples and}\\
            &\mathcal O_k(n^k)&\text{ classical postprocessing time}.
        \end{align*}

    \end{theorem}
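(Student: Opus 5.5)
The theorem bundles three regimes, and all three follow the same pipeline. First we obtain uniformly accurate local PTM estimates $\widehat L$ from process-shadow data via polynomial regression and Markov's inequality. We then feed these estimates into the appropriate inversion result: Lemma~\ref{lem:chi-recovery-stability}/Corollary~\ref{thm:dissipative-learning} for the dense case, Corollary~\ref{cor:dissipative-learning-stable-sparse} for a supplied support, and Corollary~\ref{cor:threshold-structure-learning} for structure learning. The work is to fix $\varepsilon_L$ in each regime, translate it into $\varepsilon_{\operatorname{Shadow}}$ and $\tau$, and then count samples and times.

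\textbf{Step 1 (PTM estimation).} Fix $T=(4\alpha k)^{-1}$. All queried Pauli pairs are local, with $\wt(\mathbf P),\wt(\mathbf Q)\le k$, so $a_{\mathbf P}=1$. We choose $\tau=\varepsilon_{\operatorname{Shadow}}=T\varepsilon_L/(4C_{\operatorname{reg}}d^2)$ and $d=\mathcal O(1+\log(1/\tau))$. This is a fixed-point relation: $d$ depends logarithmically on $\tau$, and $\tau$ depends on $d^{-2}$. It is solved by $d=\mathcal O_k(\log(1/\varepsilon_L))$, since $\log(d^2)$ is lower order.

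Lemma~\ref{lem:poly-approx} then gives \eqref{eq:poly-uniform-approx}. We take $m=\mathcal O(d\log d)$ Chebyshev-partition times, which is the claimed polylogarithmic number of timesteps. At each time we use the shadow bound \eqref{eq:processshadows} with $w\le 2k$ and a union bound over the $K_1K_2$ required pairs and the $m$ times. On the resulting good event, the data are $0$-good, so Lemma~\ref{lem:robust-time-polynomial-regression} applies, and \eqref{eq:markov-brothers} then gives \eqref{eq:shadow-ptm-approximation} simultaneously for all pairs. The total sample count is
\[
m\cdot\mathcal O\!\left(3^{2k}\,\varepsilon_{\operatorname{Shadow}}^{-2}\log(mK_1K_2/\delta)\right)=\widetilde{\mathcal O}_k\!\left(\varepsilon_L^{-2}\log(K_1K_2/\delta)\right),
\]
where the factors $d^4$, $m$ and $\alpha^2$ are absorbed into the tilde.

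\textbf{Step 2 (dense case).} By Corollary~\ref{thm:dissipative-learning}, the error is $\varepsilon_\chi\le C_k n^k\varepsilon_L$, so we set $\varepsilon_L=\varepsilon_\chi/(C_kn^k)$. The number of pairs is $K_1K_2=\mathcal O_k(n^k)$, so $\log(K_1K_2/\delta)=\mathcal O_k(\log n+\log(1/\delta))$. Substituting into Step 1 gives $\widetilde{\mathcal O}_k(\varepsilon_\chi^{-2}n^{2k}\log(1/\delta))$, with $\log n$ hidden. The bound on $\widehat h$ inherits the same constant because $|\tfrac i2(\cdot-\cdot)|\le\varepsilon_\chi$.

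\textbf{Step 3 (sparse cases).} With a supplied $\Omega$, we take $\varepsilon_L=\varepsilon_\chi/D_\Omega$. Only the $\mathcal O_k(|\Omega|)$ local PTM entries listed in the proof of Corollary~\ref{cor:dissipative-learning-stable-sparse} are queried, which gives $\log(|\Omega|/\delta)$ in the union bound and postprocessing time $\mathcal O_k(|\Omega|)$. For structure learning we take $\varepsilon_L=\gamma/(2D_-)$ and union bound over the $\mathcal O_k(n^k)$ pairs, giving $\log(n/\delta)$. Corollary~\ref{cor:threshold-structure-learning} then yields the guarded inclusions with $\mathcal O_k(n^k)$ postprocessing time. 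The failure probability is $\delta$ in every case, because the only randomness is the shadow event.

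\textbf{Main obstacle.} The main obstacle is bookkeeping rather than mathematics. First, the self-referential choice of $d$ versus $\tau$ must be resolved so that only polylogarithmic overhead appears. Second, the shadow guarantee must hold uniformly over all times, pairs, and each of the three query lists under Assumption~\ref{ass:statistical-observation-model}; in particular, the regression lemma must be applied on a single high-probability event, not per pair. Third, the $O(\cdot)$ constants in the sparse corollary, which bounds the error by $\varepsilon_\chi+D_\Omega\varepsilon_L$, need a rescaling of $\varepsilon_\chi$ by a constant factor, which is harmless.
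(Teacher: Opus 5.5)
Your proposal is correct and is essentially the paper's own argument: $\varepsilon_L$-accurate local PTM estimates on a single high-probability shadow event, obtained via Lemma~\ref{lem:poly-approx}, robust regression and Markov brothers' inequality, are fed into Corollary~\ref{thm:dissipative-learning}, Corollary~\ref{cor:dissipative-learning-stable-sparse} and Corollary~\ref{cor:threshold-structure-learning} with $\varepsilon_L=\Theta(\varepsilon_\chi n^{-k})$, $\Theta(\varepsilon_\chi/D_\Omega)$ and $\gamma/(2D_-)$, respectively. One caveat, shared with the paper: the regression needs at least $C_{\beta,\eta}d$ distinct times with $d=\mathcal O(\log(1/\varepsilon_L))$, so in the dense case the number of evolution times you actually obtain is $\operatorname{polylog}(n/\varepsilon_\chi)$ rather than literally $\mathcal O_k(\operatorname{polylog}(1/\varepsilon_\chi))$.
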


\section{\texorpdfstring{Extracting the best approximate $k$-local Lindblad form}{Extracting the best approximate k-local Lindblad form}}\label{sec:sdp-projection}
    The previous sections reconstruct coefficient arrays $\widehat h$ and $\widehat G$ in the Pauli--GKSL parameterization. Because of statistical and interpolation errors, the reconstructed dissipative matrix need not be positive semidefinite on local Kossakowski blocks and therefore need not define a valid Lindblad generator. We enforce physicality by projecting the recovered dissipative coefficients onto the cone generated by local positive semidefinite Kossakowski blocks. This projection is an SDP; its explicit formulation, Slater-point construction, runtime analysis, and the proofs of the results below are collected in Section \ref{app:sdp-projection-details} and we summarize its properties in the proposition below:

    For each $e\in\mathcal R_{n,\leq k}$, let $X_{\operatorname{true}}^e$ be the Kossakowski matrix of the dissipative part of $\cL_e$ in the basis $\cP_e^\circ$, with entries $(X_{\operatorname{true}}^e)_{\mathbf P,\mathbf Q}:=\sum_a\ell_{e,a,\mathbf P}\overline{\ell_{e,a,\mathbf Q}}$. Then $X_{\operatorname{true}}^e\succeq0$. Writing $X_{\operatorname{true}}=(X_{\operatorname{true}}^e)_e$ and $\cA(X)$ for the embedded sum of the local blocks, \eqref{eq:linblad-form} gives $G=\cA(X_{\operatorname{true}})$; Section \ref{app:sdp-projection-details} verifies the SDP trace constraint.

    \begin{proposition}[Efficient SDP projection]\label{prop:sdp-projection-runtime}
        Fix $\varepsilon_{\operatorname{SDP}}>0$ and assume $\|\widehat G-G\|_\infty\leq\varepsilon_\chi$. With the notation and SDP formulation in Section \ref{app:sdp-projection-details}, there is a standard dense interior-point algorithm that, given access to the entries of $\widehat G$, outputs a feasible point $\widehat X$ such that $\|\cA(\widehat X)- G\|_\infty\le C_{\operatorname{SDP}}\bigl(\varepsilon_{\chi}+\varepsilon_{\operatorname{SDP}}\bigr)$, with        \begin{equation*}
            \widetilde{\mathcal O}_k\left(n^{\frac{9k}{2}}\log\frac{1}{\varepsilon_{\operatorname{SDP}}}\right)
        \end{equation*}
        arithmetic operations in the classical postprocessing. Here $C_{\operatorname{SDP}}$ is a universal constant.
    \end{proposition}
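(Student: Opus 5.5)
I would first fix the SDP concretely, since the paper only sketches it. Let the variables be one Hermitian block $X^e\in\mathbb M_{d_e}$ for each region $e\in\mathcal R_{n,\le k}$, indexed by $\cP_e^\circ$, with $d_e=4^{|e|}-1\le 4^k$, together with one scalar slack $s\ge0$. The map $\cA(X):=\sum_e\iota_e(X^e)$ embeds each local block into the global $\cI_k$-indexed array. The program is
\begin{equation*}
    \min\; s\quad\text{s.t.}\quad X^e\succeq0\;\forall e,\qquad -s\le \operatorname{Re}/\operatorname{Im}\bigl(\cA(X)-\widehat G\bigr)_{\mathbf P,\mathbf Q}\le s\;\;\forall(\mathbf P,\mathbf Q)\in\cI_k^\circ,\qquad \sum_e\Tr X^e\le B.
\end{equation*}
Here $B$ is a bound implied by the weighted interaction strength $\alpha$ (this is the "trace constraint" the paper verifies). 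I would then check that $X_{\rm true}$ is feasible with $s=\varepsilon_\chi$. This holds because $\cA(X_{\rm true})=G$ by Lemma~\ref{lem:locality-kossakowski}, and $\|G-\widehat G\|_\infty\le\varepsilon_\chi$ by assumption. Hence the optimal value satisfies $s^\star\le\varepsilon_\chi$.

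For any $\varepsilon_{\rm SDP}$-optimal feasible $\widehat X$, I would apply the triangle inequality:
\begin{equation*}
    \|\cA(\widehat X)-G\|_\infty\le \|\cA(\widehat X)-\widehat G\|_\infty+\|\widehat G-G\|_\infty.
\end{equation*}
The first term is at most $\sqrt2\,(s^\star+\varepsilon_{\rm SDP})$ once real and imaginary parts are combined, and the second is at most $\varepsilon_\chi$. This gives the claimed bound with a universal constant, e.g. $C_{\rm SDP}=3$. To make the interior-point guarantee apply with only logarithmic dependence on the accuracy, I need strict feasibility and bounded primal and dual feasible sets. For the primal Slater point I would take $X^e=\eta I$ for a small $\eta>0$, a large enough $s$, and a trace bound $B$ strictly above $\sum_e\Tr X^e$. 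For the dual I would use the standard box-constraint dual: its multipliers are bounded by the objective normalization, so a strictly feasible dual point can be written down explicitly. The radius $R$ of the primal feasible set is controlled by $B$ and $\|\widehat G\|_\infty$.

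For the runtime I would count the problem size. There are $N=\mathcal O_k(n^k)$ PSD blocks, each of constant size $4^k$, and $\mathcal O_k(n^k)$ linear inequality constraints. The coupling is through $\cA$, and each entry receives contributions from $\mathcal O_k(2^k)$ blocks containing its support. A standard dense interior-point method needs $\mathcal O(\sqrt{\nu}\log(R/\varepsilon_{\rm SDP}))$ iterations, where the barrier parameter is $\nu=\mathcal O_k(n^k)$. Each iteration forms and factors a dense Newton system in $m=\mathcal O_k(n^k)$ variables, which costs $\mathcal O_k(m^{\omega})\le\mathcal O_k(n^{3k})$, with the matrix multiplication exponent $\omega$ bounded by $3$. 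Multiplying gives $\widetilde{\mathcal O}_k(n^{k/2}\cdot n^{3k}\log(1/\varepsilon_{\rm SDP}))$, which is $\widetilde{\mathcal O}_k(n^{7k/2}\log(1/\varepsilon_{\rm SDP}))$. To land on the stated $n^{9k/2}$ I would count more conservatively. Forming the Schur complement costs $m^2$ entries at $\mathcal O_k(n^k)$ each, giving $n^{3k}$. The stated bound is then reached by allowing a further $n^k$ factor, or equivalently by bounding the per-iteration cost as $\mathcal O(n^{4k})$ from naive assembly. Since an upper bound is all that is required, either accounting suffices, and the $\log R$ term is absorbed into the $\widetilde{\mathcal O}$.

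The main obstacle is making the Slater bounds explicit enough that $\log(R/\varepsilon_{\rm SDP})$ is only $\mathcal O_k(\log n+\log(1/\varepsilon_{\rm SDP}))$. This requires the trace bound $B$ to be polynomial in $n$, which I would derive from $\alpha$: each $\|\cL_e^\dagger\|_{2\to2}$ bounds $\Tr X^e_{\rm true}$ up to $k$-dependent constants, and there are $\mathcal O(n)$ sites. It also requires the dual strict feasibility margin to be at least inverse-polynomial in $n$. I would handle this by explicitly exhibiting the dual point. The remaining care is ensuring $\widehat G$ is bounded, which follows from $\|\widehat G-G\|_\infty\le\varepsilon_\chi$ and $|G_{\mathbf P,\mathbf Q}|=\mathcal O_k(\alpha)$.
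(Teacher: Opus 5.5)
Your proposal is correct and follows essentially the same route as the paper: $X_{\rm true}$ is feasible with objective at most $\varepsilon_\chi$, the triangle inequality against $\widehat G$ gives the error bound, explicit primal and dual Slater points are exhibited, and the $\mathcal O(\sqrt{\nu}\log(\cdot))$ iteration bound is applied with $\nu=\mathcal O(n^k)$.

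The differences are only bookkeeping:
\begin{itemize}
\item Your real/imaginary box gives $C_{\operatorname{SDP}}=1+\sqrt2$ rather than $2$.
\item Your trace budget $B=\mathcal O_k(n\alpha)$ and entry bound $|G_{\mathbf P,\mathbf Q}|=\mathcal O_k(\alpha)$ (choose $u\in\operatorname{supp}(\mathbf P)\cup\operatorname{supp}(\mathbf Q)$) are sharper than the paper's trace budget $R_*=\gamma R_{n,k}$ and entry bound $\gamma R_{n,k}$. This only affects the logarithm.
\item Your block-aware count of $\mathcal O_k(n^{3k})$ operations per iteration beats the paper's generic dense count $\mathcal O(MD^3+M^2D^2+M^3)=\mathcal O_k(n^{4k})$. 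It gives $\widetilde{\mathcal O}_k(n^{7k/2}\log(1/\varepsilon_{\operatorname{SDP}}))$, which implies the stated bound.
\end{itemize}

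Two small corrections are needed.
\begin{itemize}
\item The number of regions $e\in\mathcal R_{n,\le k}$ containing a support $S$ is $\sum_{j\le k-|S|}\binom{n-|S|}{j}=\Theta(n^{k-|S|})$, not $\mathcal O(2^k)$; the latter counts subsets of $S$. Your conservative $\mathcal O_k(n^k)$-per-entry Schur count does not rely on this claim, so nothing breaks.
\item As written, $s$ is unbounded above, so your primal feasible set is not bounded and its ``radius'' is not controlled by $B$ and $\|\widehat G\|_\infty$. Add a cap such as $s\le B+\|\widehat G\|_\infty+1$, as the paper does with $\eta_*$ and $y_*$. This loses nothing, because every feasible $X$ satisfies $|\cA(X)_{\mathbf P,\mathbf Q}|\le\sum_e\Tr(X^e)\le B$.
\end{itemize}
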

    
    The following deterministic estimate converts coefficient error into generator error in diamond norm. Set $N_G:=|\{(\mathbf P,\mathbf Q)\in\cI_k:\mathbf P,\mathbf Q\neq I\}|$.
    
    \begin{lemma}[Coefficient-to-diamond bound]\label{lem:coeff-to-diamond}
        Let $\cL_{h,G}$ and $\cL_{h',G'}$ be two $k$-local generators written in Pauli--GKSL coefficient form. Then
        \begin{equation*}
            \|\cL_{h,G}-\cL_{h',G'}\|_{\diamond}
            \le
            2|\mathcal P_{n,\le k}|\sup_{\mathbf{P}}\big|h_{\mathbf{P}}-h_{\mathbf{P}}'\big|
            +
            2N_G\sup_{\mathbf{P},\mathbf{Q}}\big|G_{\mathbf{P},\mathbf{Q}}-G_{\mathbf{P},\mathbf{Q}}'\big|.
        \end{equation*}
        In particular, whenever $\|h-h'\|_\infty,\|G-G'\|_{\infty}\le \varepsilon$, we obtain
        \begin{equation*}
            \|\cL_{h,G}-\cL_{h',G'}\|_{\diamond}
            =\mathcal{O}(n^k\varepsilon).
        \end{equation*}
    \end{lemma}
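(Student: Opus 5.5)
The plan is to write the difference of the two generators as a single linear combination of elementary superoperators and apply the triangle inequality for the diamond norm term by term. By linearity of the Pauli--GKSL parameterisation in $(h,G)$,
\begin{equation*}
    \cL_{h,G}-\cL_{h',G'}=-i\Bigl[\sum_{\mathbf P\in\mathcal P_{n,\le k}}(h_{\mathbf P}-h'_{\mathbf P})\mathbf P,\bullet\Bigr]+\sum_{\substack{(\mathbf P,\mathbf Q)\in\cI_k\\ \mathbf P,\mathbf Q\neq I}}(G_{\mathbf P,\mathbf Q}-G'_{\mathbf P,\mathbf Q})\Bigl(\mathbf P\bullet\mathbf Q-\tfrac12\{\mathbf Q\mathbf P,\bullet\}\Bigr).
\end{equation*}
Here we use Lemma~\ref{lem:locality-kossakowski}: for $k$-local generators the $G$ entries with $|\supp(\mathbf P)\cup\supp(\mathbf Q)|>k$ vanish for both generators. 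Consequently only the $N_G$ index pairs of $\cI_k$ with nonidentity entries contribute, and only the $|\mathcal P_{n,\le k}|$ Hamiltonian indices do.

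Next, bound the diamond norm of each elementary map uniformly. For a Pauli string $\mathbf P$ (unitary, with operator norm $1$), the commutator map satisfies $\|[\mathbf P,\bullet]\|_\diamond\le 2$. This follows because $X\mapsto \mathbf P X$ and $X\mapsto X\mathbf P$, tensored with the identity, are multiplications by unitaries and hence trace-norm isometries. For the dissipative terms, $\|\mathbf P\bullet\mathbf Q\|_\diamond\le\|\mathbf P\|\|\mathbf Q\|=1$, using $\|(A\otimes I)Y(B\otimes I)\|_1\le\|A\|\|B\|\|Y\|_1$. Likewise $\|\tfrac12\{\mathbf Q\mathbf P,\bullet\}\|_\diamond\le\tfrac12\cdot 2\|\mathbf Q\mathbf P\|=1$. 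Hence each dissipator element has diamond norm at most $2$.

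Summing, the Hamiltonian part contributes at most $2|\mathcal P_{n,\le k}|\sup_{\mathbf P}|h_{\mathbf P}-h'_{\mathbf P}|$. The dissipative part contributes at most $2N_G\sup|G-G'|$, which is exactly the claimed bound. For the second statement, count: $|\mathcal P_{n,\le k}|=\sum_{j\le k}\binom nj3^j=\mathcal O_k(n^k)$. Similarly $N_G\le\sum_{j\le k}\binom nj 16^j=\mathcal O_k(n^k)$, since a pair with support union $R$ is determined by $R$ and a pair of local Paulis on $R$. This gives $\mathcal O(n^k\varepsilon)$.

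There is no real obstacle. The only point needing care is that the diamond norm requires the bounds after tensoring with an identity on an ancilla. The one-line operator inequalities above are stable under that extension, since $\|\mathbf P\otimes I\|=1$. The remaining points are the bookkeeping of which index sets appear and that $\chi_{I,I}$ and the identity-indexed terms are already absorbed into $h$ by Lemma~\ref{lem:locality-kossakowski}.
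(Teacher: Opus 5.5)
Your proposal is correct and follows essentially the same route as the paper. Like the paper, you expand the difference linearly in $(h-h',G-G')$, bound $\|[\mathbf P,\bullet]\|_\diamond\le 2$ and $\|\mathbf P\bullet\mathbf Q-\tfrac12\{\mathbf Q\mathbf P,\bullet\}\|_\diamond\le 2$ using the unit diamond norm of Pauli left/right multiplications, and sum with the triangle inequality. Your explicit count giving $|\mathcal P_{n,\le k}|,\,N_G=\mathcal O_k(n^k)$ supplies the step the paper leaves implicit for the second claim.
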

    Combining entrywise coefficient learning with the SDP projection gives the final valid-generator guarantee.

    \begin{theorem}[Learning a valid $k$-local Lindblad generator]\label{thm:full-lindblad-learning}
        Let $\cL=\cL_{h,G}$ be a general $k$-local Lindblad generator on $n$ qubits with constant weighted interaction strength parameter $\alpha=O(1)$. Assume that $G$ admits a $k$-local positive decomposition as in \eqref{eq:linblad-form}. Fix $\varepsilon_\diamond,\delta\in(0,1)$. There is a protocol that learns the Lindblad operators of a valid $k$-local Lindblad generator $\widehat\cL$ by the projection procedure detailed in Section \ref{app:sdp-projection-details} such that, with probability at least $1-\delta$,
            \begin{equation*}
                \|\widehat\cL-\cL\|_\diamond
                \le
                \varepsilon_\diamond.
            \end{equation*}
            The algorithm requires
            \begin{align*}
                &\widetilde{\mathcal O}_{k}\!\left(
                \frac{
                n^{4k}
                \log(1/\delta)
                }{
                \varepsilon_\diamond^2
                }
                \right)& \text{ samples and}\\
                & \mathcal O_k(\operatorname{polylog}(1/\varepsilon_\diamond))& \text{times in }[0,(4\alpha k)^{-1}].
            \end{align*}
            where the $\widetilde{\mathcal O}$-notation hides factors polynomial in $\log(n/\varepsilon)$ and constants depending on $k$ and $\alpha$. Moreover, for fixed $k$, the protocol uses a postprocessing step that runs in time $\widetilde{\mathcal{O}}(n^{9k/2}\log(\varepsilon_\diamond^{-1}))$.
        \end{theorem}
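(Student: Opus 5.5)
The proof chains three earlier results: Theorem~\ref{thm:overall-coefficient-learning} for entrywise coefficient estimates, Proposition~\ref{prop:sdp-projection-runtime} for the projection onto valid local Kossakowski blocks, and Lemma~\ref{lem:coeff-to-diamond} to turn entrywise error into diamond-norm error. We run the coefficient-learning protocol at accuracy $\varepsilon_\chi := c\,\varepsilon_\diamond/n^k$, with $c>0$ a constant depending only on $k$ fixed below. By Theorem~\ref{thm:overall-coefficient-learning}, with probability at least $1-\delta$ we obtain $\widehat G,\widehat h$ satisfying $\|\widehat G-G\|_\infty\le\varepsilon_\chi$ and $\|\widehat h-h\|_\infty\le\varepsilon_\chi$. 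The sample cost is $\widetilde{\mathcal O}_k(\varepsilon_\chi^{-2}n^{2k}\log(1/\delta))=\widetilde{\mathcal O}_k(\varepsilon_\diamond^{-2}n^{4k}\log(1/\delta))$, using $\mathcal O_k(\operatorname{polylog}(n/\varepsilon_\diamond))$ evolution times in $[0,(4\alpha k)^{-1}]$. The extra $\log n$ inside the polylog is absorbed by the $\widetilde{\mathcal O}$ convention.

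Conditioned on this success event, we apply Proposition~\ref{prop:sdp-projection-runtime} with $\varepsilon_{\operatorname{SDP}}:=\varepsilon_\chi$. This returns a feasible $\widehat X=(\widehat X^e)_e$ with every $\widehat X^e\succeq 0$ and $\|\cA(\widehat X)-G\|_\infty\le 2C_{\operatorname{SDP}}\varepsilon_\chi$, using $\widetilde{\mathcal O}_k(n^{9k/2}\log(1/\varepsilon_\diamond))$ arithmetic operations. Next we diagonalise each block, $\widehat X^e=\sum_a \lambda_{e,a}v_{e,a}v_{e,a}^\dagger$, and define jump operators $\widehat L_{e,a}:=\sqrt{\lambda_{e,a}}\sum_{\mathbf P\in\cP_e^\circ}(v_{e,a})_{\mathbf P}\mathbf P$. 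For the Hamiltonian we take $\widehat H:=\sum_{\mathbf P\in\mathcal P_{n,\le k}\setminus\{I\}}\mathrm{Re}(\widehat h_{\mathbf P})\mathbf P$, restricting to the real part so that $\widehat H$ is Hermitian. Since $h$ is real, this does not increase the entrywise error. The resulting $\widehat\cL$ is in GKSL form \eqref{eq:linblad-form} with $k$-local terms, so it generates a CPTP semigroup. By the computation in Lemma~\ref{lem:locality-kossakowski}, its Pauli--GKSL coefficients are exactly $(\mathrm{Re}\,\widehat h,\cA(\widehat X))$. The eigendecompositions involve $\mathcal O(n^k)$ blocks of size $\mathcal O_k(1)$, so they do not affect the runtime.

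Finally, Lemma~\ref{lem:coeff-to-diamond} gives
\begin{equation*}
\|\widehat\cL-\cL\|_\diamond\le 2|\mathcal P_{n,\le k}|\,\varepsilon_\chi+2N_G\cdot 2C_{\operatorname{SDP}}\varepsilon_\chi .
\end{equation*}
Both $|\mathcal P_{n,\le k}|$ and $N_G$ are at most $C_k n^k$. Hence the right-hand side is at most $C'_k n^k\varepsilon_\chi$, and choosing $c=1/C'_k$ makes it at most $\varepsilon_\diamond$.

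The main point to verify is that the hypothesis of Proposition~\ref{prop:sdp-projection-runtime} holds, i.e.\ that the true $X_{\operatorname{true}}$ is feasible for the SDP so that the projection can only lose a constant factor. This is exactly the assumption that $G$ admits a $k$-local positive decomposition, together with the trace-constraint verification deferred to Section~\ref{app:sdp-projection-details}. Beyond that, the proof only requires tracking that the quadratic loss $n^{2k}$ from converting $\varepsilon_\diamond$ into $\varepsilon_\chi$ multiplies the $n^{2k}$ in the coefficient bound, which gives the stated $n^{4k}$.
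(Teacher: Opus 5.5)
Your proposal is correct and follows the paper's proof essentially step for step: learn the coefficients at $\varepsilon_\chi=\varepsilon_{\operatorname{SDP}}=\Theta_k(\varepsilon_\diamond/n^k)$ via Theorem~\ref{thm:overall-coefficient-learning}, project with Proposition~\ref{prop:sdp-projection-runtime}, diagonalise the local blocks, and convert to diamond norm with Lemma~\ref{lem:coeff-to-diamond}, so that the $n^{2k}$ coefficient cost becomes $n^{4k}$. Your extra details only make explicit what the paper leaves implicit: taking $\mathrm{Re}\,\widehat h_{\mathbf P}$ so that $\widehat H$ is Hermitian, writing the jump operators explicitly with $\sqrt{\lambda_{e,a}}$, and noting that the number of evolution times is really $\operatorname{polylog}(n/\varepsilon_\diamond)$, a $\log n$ that the paper's identical substitution also incurs but does not display.
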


\section{Model-misspecified Lindbladian learning}\label{sec:agnostic-lindbladian-learning}
    We now allow the data-generating Lindbladian $\cL$ to be outside the model class used by the learner. This is analogous to model-misspecified learning in graphical models (also known as agnostic learning) where the estimator is compared to the best feasible model rather than assuming exact realizability \cite{Bhattacharyya2021}. Let $\mathfrak L_{k,\alpha}$ denote the class of $k$-local Lindbladians from Equation \eqref{eq:linblad-form} with weighted interaction strength at most $\alpha$, cf.~\eqref{eq:weightedintersectiondegree}, and satisfying the usual GKSL feasibility constraints. For $\tau>0$ and a supplied support $\Omega\subseteq\mathcal I_k^\circ$. Define
    \begin{equation}\label{eq:closed-misspecified-support-class}
        \mathfrak L_{k,\alpha,\Omega,\tau}
        :=
        \left\{
        \cL_{h,G}\in\mathfrak L_{k,\alpha}:
        |\chi_{\mathbf P,\mathbf Q}(\cL_{h,G})|\leq\tau
        \text{ for all }(\mathbf P,\mathbf Q)\in\mathcal I_k^\circ\setminus\Omega
        \right\}.
    \end{equation}
    When $\tau$ is fixed or clear from the context, we abbreviate this class by $\mathfrak L_{k,\alpha,\Omega}$. Here $\chi(\cL_{h,G})$ denotes the Pauli-superoperator coefficient matrix of $\cL_{h,G}$. For any generator $\cL$, define the best-in-class approximation error
    \begin{equation*}
        \mathrm{opt}(\cL;\mathfrak L_{k,\alpha}):=\inf_{\cL'\in\mathfrak L_{k,\alpha}}\|\cL-\cL'\|_\diamond .
    \end{equation*}
    Since this feasible class is finite dimensional, it is bounded, and because the GKSL and interaction–strength constraints are closed, the class $\mathfrak L_{k,\alpha}$ is compact for fixed $n$. Hence the infimum is attained by an optimal comparator $\cL_k^*=\cL_{h^*,G^*}\in\mathfrak L_{k,\alpha}$. Since $\cL'\mapsto\chi(\cL')$ is continuous, the non-strict modulus constraint in \eqref{eq:closed-misspecified-support-class} makes $\mathfrak L_{k,\alpha,\Omega,\tau}$ a closed subclass of $\mathfrak L_{k,\alpha}$., which directly implies a minimum in the supplied-support case.
    
    The following proposition shows that the learning pipeline of Section \ref{sec:robust-local-inversion}, \ref{sec:learning-ptm-elements} and \ref{sec:sdp-projection} is stable under this misspecification. The learned coefficients recover the coefficients of the best feasible comparator, up to the statistical target accuracy and a bias proportional to $\mathrm{opt}(\cL;\mathfrak L_{k,\alpha})$. The sample complexity and sampled-time count are unchanged from the realizable setting, apart from the accuracy needed to absorb this bias. The supplied-support guarantee has the same form, with the global $n^k$ stability factor replaced by the sparse stability parameter $D_\Omega$.

    \begin{proposition}[Model-misspecified Lindbladian learning]\label{prop:agnostic-learning}
        Fix $k$ and $\alpha$, and let $\mathfrak L_{k,\alpha}$ be the corresponding class of $k$-local Lindbladians with weighted interaction strength at most $\alpha$. Let $\cL$ be any Lindbladian generator on $n$ qubits, set $\mathrm{opt}:=\mathrm{opt}(\cL;\mathfrak L_{k,\alpha})$, and let $\cL_k^*=\cL_{h^*,G^*}\in\mathfrak L_{k,\alpha}$ be an optimal comparator. For any $\varepsilon_\chi>0$ and $\delta\in(0,1)$, with probability at least $1-\delta$, the algorithm from Section \ref{sec:learning-ptm-elements} outputs coefficients $\widehat G_{\mathbf P,\mathbf Q}$ and $\widehat h_{\mathbf P}$ such that
        \begin{equation*}
            \begin{aligned}
                \sup_{(\mathbf{P},\mathbf{Q})\in\mathcal I_k}\big|\widehat G_{\mathbf{P},\mathbf{Q}}-G^*_{\mathbf{P},\mathbf{Q}}\big|&\le\varepsilon_\chi+\cO\!\left(n^k\operatorname{polylog}(\varepsilon^{-1}_\chi)\mathrm{opt}\right),\\
                \sup_{\mathbf{P}\in\cP_{n,\leq k}}\big|\widehat h_{\mathbf{P}}-h^*_{\mathbf{P}}\big|&\le\varepsilon_\chi+\cO\!\left(n^k\operatorname{polylog}(\varepsilon^{-1}_\chi)\mathrm{opt}\right).
            \end{aligned}
        \end{equation*}
        The protocol uses $\widetilde{\mathcal O}_k\!\left(\varepsilon_\chi^{-2}n^{2k}\log(1/\delta)\right)$ samples and $\mathcal O_k(\operatorname{polylog}(1/\varepsilon_\chi))$ times in $[0,(4\alpha k)^{-1}]$, where the $\widetilde{\mathcal O}$-notation hides factors polynomial in $\log(n/\varepsilon_{\chi})$ and constants depending on $k$ and $\alpha$.

        Moreover, let
        \begin{equation*}
            \mathrm{opt}_{\Omega}
            :=
            \inf_{\cL'\in\mathfrak L_{k,\alpha,\Omega}}
            \|\cL-\cL'\|_\diamond,
        \end{equation*}
        and let $\cL_\Omega^*=\cL_{h_\Omega^*,G_\Omega^*}\in\mathfrak L_{k,\alpha,\Omega}$ be an optimal comparator on a supplied support $\Omega\subseteq\mathcal I_k$. Assume that, for some $\tau>0$, $\cL_\Omega^*$ satisfies Assumption \ref{ass:stable-sparse-diagonal-extensions} for target accuracy $\varepsilon_\chi$. Then, with probability at least $1-\delta$, the sparse supplied-support version of the algorithm outputs coefficients satisfying
        \begin{equation*}
            \begin{aligned}
                \sup_{(\mathbf{P},\mathbf{Q})\in\mathcal I_k}\big|\widehat G_{\mathbf{P},\mathbf{Q}}-(G_\Omega^*)_{\mathbf{P},\mathbf{Q}}\big|
                &\le\varepsilon_\chi+\cO\!\left(D_\Omega\operatorname{polylog}(\varepsilon^{-1}_\chi)\mathrm{opt}_\Omega\right),\\
                \sup_{\mathbf{P}\in\cP_{n,\leq k}}\big|\widehat h_{\mathbf{P}}-(h_\Omega^*)_{\mathbf{P}}\big|
                &\le\varepsilon_\chi+\cO\!\left(D_\Omega\operatorname{polylog}(\varepsilon^{-1}_\chi)\mathrm{opt}_\Omega\right).
            \end{aligned}
        \end{equation*}
        The protocol uses $\widetilde{\mathcal O}_{k}\!\left(\frac{D_{\Omega}^2}{\varepsilon_\chi^2}\log(|\Omega|/\delta)\right)$ samples and $\mathcal O_k(|\Omega|)$ classical postprocessing time.
    \end{proposition}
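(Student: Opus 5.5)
The plan is to run the realizable pipeline unchanged on data generated by $\cL$, and to treat the true generator as a perturbation of the comparator $\cL_k^*$. Write $\Delta:=\cL-\cL_k^*$, so that $\|\Delta\|_\diamond=\mathrm{opt}$. Every quantity the learner estimates is an overlap
\begin{equation*}
f_{\mathbf P,\mathbf Q}(t)=2^{-n}\Tr(\mathbf P e^{t\cL}(\mathbf Q)),\qquad \wt(\mathbf P),\wt(\mathbf Q)\le k.
\end{equation*}
The first step compares this with the comparator trace $f^*_{\mathbf P,\mathbf Q}(t)=2^{-n}\Tr(\mathbf P e^{t\cL_k^*}(\mathbf Q))$. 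Duhamel's formula gives
\begin{equation*}
e^{t\cL}-e^{t\cL_k^*}=\int_0^t e^{(t-s)\cL}\,\Delta\, e^{s\cL_k^*}\,ds .
\end{equation*}
Both semigroups are CPTP and hence diamond-norm contractive. Writing $\mathbf Q=\mathbf Q_+-\mathbf Q_-$ as a difference of two projections with $\|\mathbf Q\|_1=2^n$, and using $|\Tr(\mathbf P Y)|\le\|Y\|_1$, we obtain $|f_{\mathbf P,\mathbf Q}(t)-f^*_{\mathbf P,\mathbf Q}(t)|\le t\,\mathrm{opt}\le T\,\mathrm{opt}$ uniformly on $[0,T]$, where $T=(4\alpha k)^{-1}$.

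The second step reruns the derivative-estimation argument of Section~\ref{sec:learning-ptm-elements} with the comparator polynomial $p^{*(d)}_{\mathbf P,\mathbf Q}$. This polynomial comes from Lemma~\ref{lem:poly-approx} applied to $\cL_k^*$, which is legitimate because $\cL_k^*\in\mathfrak L_{k,\alpha}$; no locality of $\cL$ itself is needed. Its derivative at zero is $L^*_{\mathbf P,\mathbf Q}$. The shadow estimates $\widehat f(t_i)$ are now within $\varepsilon_{\mathrm{Shadow}}+\tau+T\,\mathrm{opt}$ of $p^{*(d)}(t_i)$. The process-shadow bound \eqref{eq:processshadows} only requires $\cL$ to be a valid generator, so it holds with the same sample count.

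Lemma~\ref{lem:robust-time-polynomial-regression} then yields a fit within $C_{\mathrm{reg}}(\varepsilon_{\mathrm{Shadow}}+\tau+T\,\mathrm{opt})$ of $p^{*(d)}$ uniformly on $[0,T]$. Markov brothers' inequality \eqref{eq:markov-brothers} converts this to
\begin{equation*}
|\widehat L_{\mathbf P,\mathbf Q}-L^*_{\mathbf P,\mathbf Q}|\le \varepsilon_L+2C_{\mathrm{reg}}d^2\,\mathrm{opt},\qquad d=\cO(\log(1/\varepsilon_L)),
\end{equation*}
where the factor $T$ cancels. This $d^2=\operatorname{polylog}(\varepsilon_\chi^{-1})$ is the source of the polylog in the bias term.

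The third step is the inversion. Lemma~\ref{lem:chi-recovery-stability} is applied to the $k$-local generator $\cL_k^*$ with effective PTM accuracy $\varepsilon_L'=\varepsilon_L+\cO(\operatorname{polylog}(\varepsilon_\chi^{-1})\,\mathrm{opt})$, and it amplifies errors by $\cO(n^k)$. Choosing $\varepsilon_L=\Theta(\varepsilon_\chi n^{-k})$ exactly as in Theorem~\ref{thm:overall-coefficient-learning} gives the claimed bound for $\chi^*$. This choice leaves the sample count and the number of sampled times unchanged, and the bounds for $G^*$ and $h^*$ follow via \eqref{eq:chiG}.

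For the supplied-support case, the same first two steps apply with $\cL_\Omega^*$ in place of $\cL_k^*$. Corollary~\ref{cor:dissipative-learning-stable-sparse} is then applied to $\cL_\Omega^*$, which satisfies Assumption~\ref{ass:stable-sparse-diagonal-extensions} by hypothesis. Its recursion gives error $D_\Omega(\varepsilon_L'+\rho_\Omega)+\tau$, i.e.\ $\varepsilon_\chi+\cO(D_\Omega\operatorname{polylog}\,\mathrm{opt}_\Omega)$ once $\varepsilon_L=\varepsilon_\chi/D_\Omega$.

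The main obstacle is the first step: bounding the Pauli-overlap perturbation by the diamond norm uniformly over the time window. Normalization is the critical point, since $\mathbf Q/2^n$ is not a state. I would handle it with the trace-norm splitting of $\mathbf Q$ described above, together with the contractivity of both semigroups. A secondary subtlety is making sure the polynomial surrogate is built from $\cL_k^*$ rather than $\cL$, because $\cL$ may not be local.
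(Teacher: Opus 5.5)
Your proposal is correct and follows essentially the same route as the paper's proof. That route is: Duhamel plus diamond-norm contractivity to get a $t\,\mathrm{opt}$ perturbation of the Pauli time traces, the comparator polynomial from Lemma~\ref{lem:poly-approx} applied to $\cL_k^*$, robust regression and Markov brothers giving $\varepsilon_L+\cO(d^2)\,\mathrm{opt}$, and then the $\cO(n^k)$ (resp.\ $D_\Omega$) inversion, with your explicit use of $\|\mathbf Q\|_1=2^n$ filling in a normalization step the paper leaves implicit. One small bookkeeping caveat, which the paper shares: with $\varepsilon_L=\Theta(\varepsilon_\chi n^{-k})$ the degree is $d=\cO(\log(n^k/\varepsilon_\chi))$, so the bias factor $d^2$ is polylogarithmic in $n/\varepsilon_\chi$ rather than in $\varepsilon_\chi^{-1}$ alone (in the supplied-support case it is polylogarithmic in $D_\Omega/\varepsilon_\chi$).
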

    \begin{proof}
        Let $\cL_k^*\in\mathfrak L_{k,\alpha}$ be an optimal comparator. Duhamel's formula and diamond-norm contractivity of Lindblad semigroups give, for $t\in[0,T]$,
        \begin{equation*}
            \begin{aligned}
                    \|e^{t\cL}-e^{t\cL_k^*}\|_\diamond
                    &\le
                    t\int_0^1
                    \left\|
                    e^{(1-s)t\cL}
                    (\cL-\cL_k^*)
                    e^{st\cL_k^*}
                    \right\|_\diamond ds
                    \\
                    &\le
                    t\,\mathrm{opt}.
                \end{aligned}
        \end{equation*}
        Therefore, for every Pauli pair $(\mathbf P,\mathbf Q)$ used in the interpolation, the true and comparator time traces differ by at most $t\,\mathrm{opt}$. Define
        \begin{equation*}
            f_{\mathbf P,\mathbf Q}^*(t):=2^{-n}\Tr\!\left(\mathbf P e^{t\cL_k^*}(\mathbf Q)\right),
            \qquad
            f_{\mathbf P,\mathbf Q}(t):=2^{-n}\Tr\!\left(\mathbf P e^{t\cL}(\mathbf Q)\right).
        \end{equation*}
        Let $p_{\mathbf P,\mathbf Q}^{*,(d)}$ be the degree-$d$ polynomial approximation to $f_{\mathbf P,\mathbf Q}^*$ supplied by Equation \eqref{eq:poly-uniform-approx}, with uniform approximation error $\varepsilon_{\rm poly}$. Then
        \begin{equation*}
            \sup_{t\in[0,T]}
            \left|
            f_{\mathbf P,\mathbf Q}(t)
            -
            p_{\mathbf P,\mathbf Q}^{*,(d)}(t)
            \right|
            \le
            \varepsilon_{\rm poly}
            +
            T\mathrm{opt}.
        \end{equation*}
        Shadow tomography adds stochastic error $\varepsilon_{\operatorname{Shadow}}$. Thus, with probability at least $1-\delta$, uniformly over the sampled times and all relevant Pauli pairs, the data given to robust interpolation deviate from the comparator polynomial by at most
        \begin{equation*}
            \varepsilon_{\operatorname{Shadow}}
            +
            \varepsilon_{\rm poly}
            +
            T\mathrm{opt}.
        \end{equation*}

        Markov brothers' inequality, Equation \eqref{eq:markov-brothers}, converts this uniform-in-time error into a derivative error at zero. Setting $\varepsilon_{\operatorname{Shadow}},\varepsilon_{\rm poly}\le T\varepsilon_L/(4d^2)$, we obtain
        \begin{equation*}
            \Bigl| \widehat L_{\mathbf P,\mathbf Q} - L^*_{\mathbf P,\mathbf Q} \Bigr| \le \varepsilon_L+2d^2\mathrm{opt}\,,
        \end{equation*}
        where $L^*_{\mathbf P,\mathbf Q}=(f_{\mathbf P,\mathbf Q}^*)'(0)$ is the comparator PTM coefficient. With $a_{\mathbf P}:=\lceil\wt(\mathbf P)/k\rceil$, the polynomial degree is
        \begin{equation*}
            d
            =
            \cO\!\left(
            a_{\mathbf P}
            +
            \log\frac{1}{\varepsilon_{\operatorname{Shadow}}}
            \right).
        \end{equation*}

        The inversion bounds from Corollary \ref{thm:dissipative-learning} and Theorem \ref{thm:overall-coefficient-learning} then propagate this PTM error to the comparator coefficients. Taking $\varepsilon_L=\Theta(\varepsilon_\chi/n^k)$ gives
        \begin{equation*}
            \begin{aligned}
                \sup_{(\mathbf{P},\mathbf{Q})\in\mathcal I_k}
                \big|\widehat G_{\mathbf{P},\mathbf{Q}}-G^*_{\mathbf{P},\mathbf{Q}}\big|
                &\le
                \varepsilon_\chi
                +
                \cO\!\left(n^k\operatorname{polylog}(\varepsilon^{-1}_\chi)\mathrm{opt}\right),
                \\
                \sup_{\mathbf{P}\in\cP_{n,\leq k}}
                \big|\widehat h_{\mathbf{P}}-h^*_{\mathbf{P}}\big|
                &\le
                \varepsilon_\chi
                +
                \cO\!\left(n^k\operatorname{polylog}(\varepsilon^{-1}_\chi)\mathrm{opt}\right).
            \end{aligned}
        \end{equation*}
        The sample and time counts are the same as in the realizable coefficient-learning theorem.

        Finally, repeat the argument with an optimal supplied-support comparator $\cL_\Omega^*\in\mathfrak L_{k,\alpha,\Omega}$. Under Assumption \ref{ass:stable-sparse-diagonal-extensions} and Corollary \ref{cor:dissipative-learning-stable-sparse} replaces the global stability factor by $D_\Omega$, giving
        \begin{equation*}
            \begin{aligned}
                \sup_{(\mathbf{P},\mathbf{Q})\in\mathcal I_k}
                \big|\widehat G_{\mathbf{P},\mathbf{Q}}-(G_\Omega^*)_{\mathbf{P},\mathbf{Q}}\big|
                &\le
                \varepsilon_\chi
                +
                \cO\!\left(D_\Omega\operatorname{polylog}(\varepsilon^{-1}_\chi)\mathrm{opt}_\Omega\right),
                \\
                \sup_{\mathbf{P}\in\cP_{n,\leq k}}
                \big|\widehat h_{\mathbf{P}}-(h_\Omega^*)_{\mathbf{P}}\big|
                &\le
                \varepsilon_\chi
                +
                \cO\!\left(D_\Omega\operatorname{polylog}(\varepsilon^{-1}_\chi)\mathrm{opt}_\Omega\right).
            \end{aligned}
        \end{equation*}
        The sparse sample and runtime bounds are those of the supplied-support algorithm.
    \end{proof}

    The preceding guarantee measures misspecification globally, in diamond norm. For geometrically local comparators this is stronger than the protocol needs. Since the data consist only of local Pauli overlaps, it suffices to control the residual on the finite light cones generated by those Pauli observables, plus the Lieb--Robinson tail outside the light cone.

    Fix a finite metric space $(V,\dist)$, $V=[n]$, of effective dimension $D$: there are constants $C_{\rm vol},C_{\partial}$, independent of $n$, such that for every $x\in V$ and $R\ge0$,
    \begin{equation*}
        |B_R(x)|\le C_{\rm vol}(1+R)^D,
        \qquad
        |B_R(x)\setminus B_{R-1}(x)|
        \le
        C_{\partial}(1+R)^{D-1},
    \end{equation*}
    with the second bound interpreted trivially at $R=0$. We write $B_R(A):=\{x\in V:\dist(x,A)\le R\}$ for $A\subset V$. A decomposition $\cL'=\sum_Z\cL'_Z$ is geometrically $(k,r_0)$-local if $|Z|\le k$ and $\diam(Z)\le r_0$ whenever $\cL'_Z\neq0$. We denote the restriction to $B\subseteq V$ by $\cL'_B:=\sum_{Z\subseteq B}\cL'_Z$, and write $\mathfrak L^{\mathrm{geo}}_{k,\alpha,r_0}\subseteq\mathfrak L_{k,\alpha}$ for the geometrically $(k,r_0)$-local comparators with weighted interaction strength at most $\alpha$.

    The locality input for the proof is the following uniform Heisenberg truncation estimate: for every $\cL'\in\mathfrak L^{\mathrm{geo}}_{k,\alpha,r_0}$, every observable $O$ supported on $A$, every $u\in[0,T]$, $T=(4\alpha k)^{-1}$, and every $R\ge0$,
    \begin{equation}\label{eq:local-agnostic-lr}
        \left\|
        e^{u(\cL')^\dagger}(O)
        -
        e^{u(\cL'_{B_R(A)})^\dagger}(O)
        \right\|_\infty
        \le
        \varepsilon_{\rm LR}(R,u,A)\,\|O\|_\infty .
    \end{equation}
    which is given by the Lieb-Robinson bounds proven in \cite{Chen2021, AnthonyChen2023}, in particular, applying the time-dependent dissipative bound of \cite[Proposition~E.1]{StilckFranca.2025} to the time-independent special case and bounding the boundary terms by the shell-growth estimate gives constants depending only on $k,\alpha,r_0,C_{\rm vol},C_{\partial},D$ such that
    \begin{equation}\label{eq:local-agnostic-lr-factorial}
        \varepsilon_{\rm LR}(d'r_0,u,A)
        \le
        C_{\rm LR}|A|(1+d')^{\nu}
        \sum_{\ell\ge d'}
        \frac{(v_{\rm LR}u)^{\ell+1}}{(\ell+1)!}.
    \end{equation}
    Hence, for the $k$-local Pauli observables used by the protocol, we may take the uniform tail
    \begin{equation}\label{eq:local-agnostic-lr-uniform-tail}
        \Delta_{\rm LR}^{d'}
        :=
        \sup_{\substack{\mathbf P\in\cP_{n,\le k}\\0\le u\le T}}
        \varepsilon_{\rm LR}(d'r_0,u,\supp(\mathbf P))
        \le
        C'_{\rm LR}(1+d')^{\nu'}
        \sum_{\ell\ge d'}
        \frac{(v'_{\rm LR}T)^{\ell+1}}{(\ell+1)!},
    \end{equation}
    which may be further relaxed to the familiar form $C''_{\rm LR}(1+d')^{\nu''}e^{v''_{\rm LR}T-\mu d'}$. We keep the sharper factorial tail in Equation \eqref{eq:local-agnostic-lr-uniform-tail}.

    \begin{corollary}[Local misspecification under Lieb--Robinson truncation]\label{cor:local-agnostic-learning}
        Fix $k,\alpha,r_0$ and the geometrically local comparator class $\mathfrak L^{\mathrm{geo}}_{k,\alpha,r_0}$. For a residual $\Delta$, define
        \begin{equation*}
            \eta_{d'}(\Delta)
            :=
            \sup_{\mathbf P\in\cP_{n,\le k}}
            \ \sup_{\substack{\|X\|_\infty\le1\\ \supp(X)\subseteq B_{d'r_0}(\supp(\mathbf P))}}
            \|\Delta^\dagger(X)\|_\infty .
        \end{equation*}
        Let
        \begin{equation*}
            \mathrm{opt}_{\mathrm{loc}}^{d'}
            :=
            \inf_{\cL'\in\mathfrak L^{\mathrm{geo}}_{k,\alpha,r_0}}
            \left[
            \eta_{d'}(\cL-\cL')
            +
            \Delta_{\rm LR}^{d'}\,
            \|(\cL-\cL')^\dagger\|_{\infty\rightarrow\infty}
            \right].
        \end{equation*}
        Define $\mathrm{opt}_{\mathrm{loc},\Omega}^{d'}$ analogously, with the infimum restricted to $\mathfrak L^{\mathrm{geo}}_{k,\alpha,r_0}\cap\mathfrak L_{k,\alpha,\Omega}$. Then the conclusions of Proposition \ref{prop:agnostic-learning} remain valid with $\mathrm{opt}$ and $\mathrm{opt}_\Omega$ replaced by $\mathrm{opt}_{\mathrm{loc}}^{d'}$ and $\mathrm{opt}_{\mathrm{loc},\Omega}^{d'}$, respectively, under the same assumptions on the corresponding comparators.
    \end{corollary}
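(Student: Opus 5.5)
The approach is to rerun the proof of Proposition~\ref{prop:agnostic-learning}, replacing the single global Duhamel estimate by a localized one. The only point where $\mathrm{opt}$ enters that proof is the bound
\begin{equation*}
    \sup_{t\in[0,T]}\bigl|f_{\mathbf P,\mathbf Q}(t)-f^*_{\mathbf P,\mathbf Q}(t)\bigr|\le T\,\mathrm{opt}
\end{equation*}
on the interpolated time traces, together with the derivative gap $|L_{\mathbf P,\mathbf Q}-L^*_{\mathbf P,\mathbf Q}|$ that it feeds through Markov's inequality. Everything downstream is unchanged: robust regression (Lemma~\ref{lem:robust-time-polynomial-regression}), Markov brothers' inequality \eqref{eq:markov-brothers}, the Fierz/recursive inversion (Corollary~\ref{thm:dissipative-learning}), and its sparse analogue (Corollary~\ref{cor:dissipative-learning-stable-sparse}). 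It therefore suffices to show that, for every $k$-local pair $(\mathbf P,\mathbf Q)$ and every $t\le T$,
\begin{equation*}
    |f_{\mathbf P,\mathbf Q}(t)-f^*_{\mathbf P,\mathbf Q}(t)|\le C\,t\,\bigl[\eta_{d'}(\cL-\cL')+\Delta^{d'}_{\rm LR}\|(\cL-\cL')^\dagger\|_{\infty\to\infty}\bigr]
\end{equation*}
for an arbitrary comparator $\cL'\in\mathfrak L^{\mathrm{geo}}_{k,\alpha,r_0}$. Taking the infimum over $\cL'$ (or an approximate minimizer, if the class is only closed) then replaces $\mathrm{opt}$ by $\mathrm{opt}^{d'}_{\rm loc}$.

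To prove this, I would write the Heisenberg-picture Duhamel formula with the comparator evolution acting first on $\mathbf P$:
\begin{equation*}
    2^{-n}\Tr\bigl(\mathbf P(e^{t\cL}-e^{t\cL'})(\mathbf Q)\bigr)=\int_0^t 2^{-n}\Tr\Bigl(e^{(t-s)\cL^\dagger}\bigl[\Delta^\dagger\bigl(e^{s\cL'^\dagger}(\mathbf P)\bigr)\bigr]\,\mathbf Q\Bigr)\,ds,\qquad \Delta:=\cL-\cL'.
\end{equation*}
The order matters: the local Lieb--Robinson control \eqref{eq:local-agnostic-lr} is available only for the comparator $\cL'$, which is geometrically local, and not for the true $\cL$.

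Next, decompose $e^{s\cL'^\dagger}(\mathbf P)=X_{\rm loc}+X_{\rm tail}$ with $X_{\rm loc}:=e^{s(\cL'_{B})^\dagger}(\mathbf P)$ and $B=B_{d'r_0}(\supp\mathbf P)$. Two properties of $X_{\rm loc}$ are needed. First, it is supported in $B$, because the restricted generator only involves terms $Z\subseteq B$. Second, $\|X_{\rm loc}\|_\infty\le 1$, because $e^{s\cL'_B}$ is a quantum dynamical semigroup, so its adjoint is unital and CP, hence contractive in operator norm. By \eqref{eq:local-agnostic-lr} and \eqref{eq:local-agnostic-lr-uniform-tail}, $\|X_{\rm tail}\|_\infty\le\Delta^{d'}_{\rm LR}$. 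It follows that
\begin{equation*}
    \|\Delta^\dagger(X_{\rm loc})\|_\infty\le\eta_{d'}(\Delta),\qquad \|\Delta^\dagger(X_{\rm tail})\|_\infty\le\|\Delta^\dagger\|_{\infty\to\infty}\,\Delta^{d'}_{\rm LR}.
\end{equation*}
The remaining factor is handled by operator-norm contractivity of the true $e^{(t-s)\cL^\dagger}$ (again unital and CP) together with $|2^{-n}\Tr(Y\mathbf Q)|\le\|Y\|_\infty$. Integrating over $s\in[0,t]$ gives the claimed bound with $C=1$.

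The derivative gap follows from the same computation, either by dividing by $t$ and letting $t\to0$, or directly via $L-L^*=2^{-n}\Tr(\mathbf P\,\Delta(\mathbf Q))$. This is bounded by $\eta_{d'}$, since at $s=0$ the tail vanishes and $\mathbf P$ itself is supported in $B$. Hence the regression data deviate from the comparator polynomial by at most $\varepsilon_{\rm Shadow}+\varepsilon_{\rm poly}+T\,\mathrm{opt}^{d'}_{\rm loc}$. The chain Markov $\to$ inversion then yields the same bias terms as before: $\mathcal O(n^k\operatorname{polylog}(\varepsilon_\chi^{-1})\,\mathrm{opt}^{d'}_{\rm loc})$ in the general case and $D_\Omega$ in place of $n^k$ in the supplied-support case. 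The sample counts are untouched, and the supplied-support statement is the identical argument with the infimum restricted to $\mathfrak L^{\mathrm{geo}}_{k,\alpha,r_0}\cap\mathfrak L_{k,\alpha,\Omega}$.

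The main obstacle is the localization step. The difficulty is not that $\|X_{\rm loc}\|_\infty\le1$ fails, since contractivity handles that. Rather, one must check that the uniform truncation estimate \eqref{eq:local-agnostic-lr} genuinely applies to the comparator uniformly in $s\le T$, and that the suprema defining $\eta_{d'}$ and $\Delta^{d'}_{\rm LR}$ range over exactly the observables produced here, namely $k$-local Pauli $\mathbf P$ with both $\mathbf P$ and $\mathbf Q$ of weight at most $k$. If needed, I would symmetrize by also applying the argument with the roles of $\mathbf P$ and $\mathbf Q$ exchanged. One further detail to verify is whether the proposition's use of a minimizer requires attainment. I would avoid this by working with $\epsilon$-optimal comparators and letting $\epsilon\to0$.
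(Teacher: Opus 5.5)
Your proposal is correct and matches the paper's proof essentially step for step. It uses the same Heisenberg-picture Duhamel formula with the comparator acting first on $\mathbf P$, and the same split of $e^{s(\cL')^\dagger}(\mathbf P)$ into the truncated evolution $e^{s(\cL'_B)^\dagger}(\mathbf P)$, controlled by $\eta_{d'}$, plus a Lieb--Robinson tail, controlled by $\Delta^{d'}_{\rm LR}\|\Delta^\dagger\|_{\infty\to\infty}$; the interpolation, Markov, and inversion steps are then left unchanged. The only cosmetic difference is how the outer true evolution is handled: you use operator-norm contractivity of $e^{(t-s)\cL^\dagger}$ plus H\"older, while the paper moves it to the Schr\"odinger side and uses trace-norm contractivity on $\mathbf Q$, which is the dual form of the same bound. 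Your separate bound on $|L_{\mathbf P,\mathbf Q}-L^*_{\mathbf P,\mathbf Q}|$ is harmless but unneeded, since Markov's inequality is applied relative to the comparator polynomial.
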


    \begin{proof}
        We follow the proof of Proposition \ref{prop:agnostic-learning}, replacing the global diamond-norm perturbation estimate by a local Heisenberg estimate. Fix $\cL'\in\mathfrak L^{\mathrm{geo}}_{k,\alpha,r_0}$, set $\Delta:=\cL-\cL'$, and let $(\mathbf P,\mathbf Q)\in\mathcal I_k$. Define
        \begin{equation*}
            f_{\mathbf P,\mathbf Q}(t)
            =
            2^{-n}\Tr\!\left(e^{t\cL^\dagger}(\mathbf P)\mathbf Q\right),
            \qquad
            f'_{\mathbf P,\mathbf Q}(t)
            =
            2^{-n}\Tr\!\left(e^{t(\cL')^\dagger}(\mathbf P)\mathbf Q\right),
        \end{equation*}
        so Duhamel's formula in the Heisenberg picture gives, for $t\in[0,T]$,
        \begin{equation*}
            f_{\mathbf P,\mathbf Q}(t)-f'_{\mathbf P,\mathbf Q}(t)
            =
            \frac{t}{2^{n}}\int_0^1
            \Tr\!\left[
            e^{(1-s)t\cL^\dagger}
            \Delta^\dagger
            e^{st(\cL')^\dagger}(\mathbf P)
            \,\mathbf Q
            \right]\,ds .
        \end{equation*}
        Moving $e^{(1-s)t\cL^\dagger}$ to the Schr\"odinger side and using trace-norm contractivity of the CPTP map $e^{(1-s)t\cL}$ on Hermitian inputs,
        \begin{equation*}
            \|e^{(1-s)t\cL}(\mathbf Q)\|_1\le\|\mathbf Q\|_1=2^n .
        \end{equation*}
        Hence
        \begin{equation}\label{eq:local-agnostic-duhamel}
            |f_{\mathbf P,\mathbf Q}(t)-f'_{\mathbf P,\mathbf Q}(t)|
            \le
            t\int_0^1
            \left\|
            \Delta^\dagger e^{st(\cL')^\dagger}(\mathbf P)
            \right\|_\infty ds .
        \end{equation}
        Let $B_{\mathbf P}^{d'}:=B_{d'r_0}(\supp(\mathbf P))$ and
        \begin{equation*}
            \mathbf P_{d'}(st)
            :=
            e^{st(\cL'_{B_{\mathbf P}^{d'}})^\dagger}(\mathbf P).
        \end{equation*}
        The truncated Heisenberg evolution is supported inside $B_{\mathbf P}^{d'}$ and is an operator-norm contraction, hence $\|\mathbf P_{d'}(st)\|_\infty\le1$. By definition of $\eta_{d'}$,
        \begin{equation*}
            \left\|\Delta^\dagger\mathbf P_{d'}(st)\right\|_\infty
            \le
            \eta_{d'}(\Delta).
        \end{equation*}
        On the other hand, Equation (\ref{eq:local-agnostic-lr}, \ref{eq:local-agnostic-lr-uniform-tail}) give
        \begin{equation*}
            \left\|
            e^{st(\cL')^\dagger}(\mathbf P)-\mathbf P_{d'}(st)
            \right\|_\infty
            \le
            \Delta_{\rm LR}^{d'} .
        \end{equation*}
        Inserting these two estimates into Equation \eqref{eq:local-agnostic-duhamel} yields
        \begin{equation*}
            |f_{\mathbf P,\mathbf Q}(t)-f'_{\mathbf P,\mathbf Q}(t)|
            \le
            t\left[
            \eta_{d'}(\Delta)
            +
            \Delta_{\rm LR}^{d'}\,
            \|\Delta^\dagger\|_{\infty\rightarrow\infty}
            \right].
        \end{equation*}
        Taking the infimum over $\cL'\in\mathfrak L^{\mathrm{geo}}_{k,\alpha,r_0}$ gives the replacement for the $t\,\mathrm{opt}$ term in the proof of Proposition \ref{prop:agnostic-learning}. The robust interpolation, Markov-brothers derivative step, and inversion bounds are unchanged. The supplied-support version is identical, with the infimum restricted to $\mathfrak L^{\mathrm{geo}}_{k,\alpha,r_0}\cap\mathfrak L_{k,\alpha,\Omega}$.
    \end{proof}

    \begin{rmk}
        Note that, the local residual term tests $\cL-\cL'$ only on observables supported inside $B_{d'r_0}(\supp(\mathbf P))$, rather than on the full system. The proof uses only the abstract truncation estimate Equation \eqref{eq:local-agnostic-lr}; hence the same replacement works for comparator classes with different Lieb--Robinson tails, including time-dependent local comparator dynamics satisfying the hypotheses of \cite[Proposition~E.1]{StilckFranca.2025}, after replacing semigroups by the corresponding propagators in the time-trace comparison.
    \end{rmk}

\section{Lower bound for learning Lindbladians in diamond norm}\label{sec:lower-bound-product-measurements}
    Recently, lower bounds for Hamiltonian learning from time-evolution access were established in~\cite{chen2025lower}. While these results demonstrate important obstructions to Hamiltonian learning, their dependence on $n$ and $\varepsilon$ does not match the scaling of our upper bounds. This mismatch is partly due to the significantly stronger access model considered in~\cite{chen2025lower}, which allows more general quantum control and measurement procedures than the experimental setting studied here, typically allowing for Heisenberg-limited scalings, but which at the same time does not assume the boundedness of the (weighted) intersection degree of $\cL$.

    In this section, we prove a lower bound adapted to our access model, in which the learner is restricted to tensor-product input states, short-time Lindbladian evolution, and single-qubit Pauli measurements. Under these restrictions, we obtain a lower bound for Lindbladian learning whose dependence on the number of local terms matches the natural parameter-counting scaling, and whose dependence on the target accuracy exhibits our standard statistical rate $1/\varepsilon_\diamond^2$.

    More precisely, a single sample, or shot, consists of preparing a tensor-product input state, evolving it for a time $t\le t_{\max}$, and measuring a single-qubit Pauli observable. The choices of input state, evolution time, and measured Pauli observable may be adaptive.

    \begin{theorem}[Single-coefficient lower bound] \label{thm:single-chi-lower-bound}
        Fix a nonidentity Pauli string $\mathbf R\in\mathcal P_n$ satisfying $|\operatorname{supp}(\mathbf R)|\le k$. Consider the one-parameter Hamiltonian subfamily
        \begin{equation*}
            \mathcal L_\theta(\rho):=-i[\theta\mathbf R,\rho],\qquad\theta\in\mathbb R\,.
        \end{equation*}
        Any adaptive protocol that uses tensor-product inputs, evolutions of duration $t\le t_{\max}$, and single-qubit Pauli measurements, and outputs an estimator $\widehat\chi_{\mathbf R,I}$ satisfying
        \begin{equation*}
            \Pr_\theta\!\left[|\widehat\chi_{\mathbf R,I}-\chi_{\mathbf R,I}|\le \varepsilon_1 \right]\ge \frac23
        \end{equation*}
        uniformly over $\theta\in\{-2\varepsilon_1,+2\varepsilon_1\}$ requires $\frac{c}{t_{\max}^2\varepsilon_1^2}$ samples, for some universal constant $c>0$.
    \end{theorem}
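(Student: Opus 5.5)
Since $\chi_{\mathbf R,I}=-i\theta$ and $\chi_{I,\mathbf R}=i\theta$ for $\mathcal L_\theta=-i[\theta\mathbf R,\bullet]$, an estimator with accuracy $\varepsilon_1$ at both $\theta_\pm=\pm2\varepsilon_1$ separates the two hypotheses: the two true values differ by $4\varepsilon_1>2\varepsilon_1$. So the plan is a two-point (Le Cam) testing reduction. Any protocol succeeding with probability $2/3$ under both hypotheses yields a test with error at most $1/3$. Hence the total variation between the laws of the full transcripts must be at least $1/3$, and by Pinsker or Bretagnolle–Huber the KL divergence (or the squared Hellinger distance) between the transcript distributions must be at least a constant.

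The main step bounds the per-shot information. For a fixed adaptive choice of product input $\rho$, time $t\le t_{\max}$, and single-qubit Pauli observables, each outcome distribution is
\begin{equation*}
p_\pm(x)=\Tr\bigl(M_x e^{-it\theta_\pm\mathbf R}\rho e^{it\theta_\pm\mathbf R}\bigr).
\end{equation*}
Since $\mathbf R^2=I$, we have $e^{-it\theta\mathbf R}=\cos(t\theta)I-i\sin(t\theta)\mathbf R$. Instead of KL, which can blow up when some $p(x)$ vanishes, I will use squared Hellinger (or Bhattacharyya fidelity). Hellinger distance between the measured distributions is bounded by the Bures/trace-type distance between the pure evolved states (or their purifications). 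The fidelity between $U_+\rho U_+^\dagger$ and $U_-\rho U_-^\dagger$ for pure $\rho$ is $|\langle\psi|e^{-it(\theta_+-\theta_-)\mathbf R}|\psi\rangle|^2\ge\cos^2(4t\varepsilon_1)$. Mixed inputs follow by joint concavity. Data processing then gives a per-shot squared Hellinger distance $\le 1-\cos(4t\varepsilon_1)\le 8t_{\max}^2\varepsilon_1^2$.

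Adaptivity is handled by the chain rule. Squared Hellinger between transcript laws is subadditive in the form $1-H^2$ multiplicative via the Bhattacharyya coefficient: conditioning on the past, each shot multiplies the Bhattacharyya coefficient by at least $\cos(4t_{\max}\varepsilon_1)$ uniformly over adaptive choices. After $N$ shots the Bhattacharyya coefficient is $\ge\cos(4t_{\max}\varepsilon_1)^N\ge 1-8Nt_{\max}^2\varepsilon_1^2$. Since $\mathrm{TV}\le\sqrt{1-\mathrm{BC}^2}$, total variation $\ge1/3$ forces $N\ge c/(t_{\max}^2\varepsilon_1^2)$. The restriction to product inputs and single-qubit measurements is not even needed; the bound holds for any input and measurement.

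The obstacle I expect is making the adaptive tensorization rigorous with outcome distributions that may have zeros. Bhattacharyya multiplicativity over conditional distributions handles this cleanly, so I would write it out via the tower property rather than using KL.
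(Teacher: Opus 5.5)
Your proposal is correct and follows the paper's proof essentially step for step. Both use the two-point reduction at $\theta=\pm2\varepsilon_1$, the per-shot fidelity bound $|\langle\psi|e^{it\Delta\mathbf R}|\psi\rangle|^2\ge\cos^2(t\Delta)$ with $\Delta=4\varepsilon_1$, data processing down to the classical Bhattacharyya coefficient, its multiplicative lower bound along the adaptive transcript, and a final conversion to total variation. The differences are only cosmetic: the paper handles mixed inputs directly via Uhlmann's theorem, $F(\rho,U\rho U^\dagger)\ge|\Tr(\rho U)|^2$, rather than joint concavity, and it uses $\mathrm{TV}\le\sqrt{2(1-\mathrm{BC})}$ in place of your slightly sharper $\mathrm{TV}\le\sqrt{1-\mathrm{BC}^2}$.
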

    \begin{proof}
        In the Pauli-superoperator basis, the generator $\cL_\theta$ satisfies $\chi_{\mathbf R,I}=-i\theta$, $\chi_{I,\mathbf R}=i\theta$, and all other Hamiltonian $\chi$-coefficients vanish, i.e.~$\chi_{\mathbf P,I}=\chi_{I,\mathbf P}=0$ for all $\mathbf P\in\cP_n\backslash \{R\}$. Hence learning $\chi_{\mathbf R,I}$ to accuracy $\varepsilon_1$ is equivalent to learning $\theta$ to accuracy $\varepsilon_1$. It is enough to distinguish the two hypotheses $\theta_+=2\varepsilon_1$ versus $\theta_-=-2\varepsilon_1$. Under these hypotheses, the corresponding values of $\chi_{\mathbf R,I}$ are separated by $4\varepsilon_1$. Therefore any estimator satisfying $|\widehat\chi_{\mathbf R,I}-\chi_{\mathbf R,I}|\le \varepsilon_1$ identifies the correct sign of $\theta$.

        We now bound the distinguishability of the two hypotheses in one experiment. Fix an arbitrary allowed experiment: a tensor-product input state $\rho$, an evolution time $t\le t_{\max}$, and a tensor product of single-qubit Pauli measurements. Let $P_+$ and $P_-$ be the resulting outcome distributions under $\theta_+$ and $\theta_-$, respectively. The two evolved states are
        \begin{equation*}
            \rho_+ = e^{-it\theta_+\mathbf R}\rho e^{it\theta_+\mathbf R}, \qquad \rho_- = e^{-it\theta_-\mathbf R}\rho e^{it\theta_-\mathbf R}.
        \end{equation*}
        By unitary invariance of fidelity,
        \begin{equation*}
            F(\rho_+,\rho_-) = F\!\left(\rho, e^{-it(\theta_+-\theta_-)\mathbf R}\rho e^{it(\theta_+-\theta_-)\mathbf R}\right).
        \end{equation*}
        Set $\Delta:=\theta_+-\theta_-=4\varepsilon_1$. Using Uhlmann's theorem, for any state $\rho$,
        \begin{equation*}
            \begin{aligned}
                F\!\left(\rho,e^{-it\Delta\mathbf R}\rho e^{it\Delta\mathbf R}\right)&\ge\left|\Tr\!\left(\rho e^{-it\Delta\mathbf R}\right)\right|^2\\
                &=\left|\cos(t\Delta)-i\sin(t\Delta)\Tr(\rho\mathbf R)\right|^2\\
                &\ge\cos^2(t\Delta).
            \end{aligned}
        \end{equation*}
        Since a measurement cannot decrease fidelity, the classical fidelity of the outcome distributions satisfies
        \begin{equation*}
            F_{\mathrm{cl}}(P_+,P_-):=\left(\sum_x\sqrt{P_+(x)P_-(x)}\right)^2\ge\cos^2(t\Delta).
        \end{equation*}
        Equivalently, their Bhattacharyya coefficient $\operatorname{BC}(P_+,P_-):=\sum_x\sqrt{P_+(x)P_-(x)}$ satisfies
        \begin{equation*}
            \operatorname{BC}(P_+,P_-)\ge\cos(t\Delta).
        \end{equation*}
        Now consider an adaptive protocol with $N$ samples. Conditional on any past transcript, the next input state, evolution time, and measurement may be chosen adaptively, but the same one-step bound applies. Together with the chain rule for probability distributions, we achieve
        \begin{equation*}
            \operatorname{BC}(\mathbb P_+^{(N)},\mathbb P_-^{(N)})\ge\prod_{j=1}^N \cos(t_j\Delta)\ge\cos(t_{\max}\Delta)^N,
        \end{equation*}
        which implies by $\operatorname{TV}(P,Q)\le \sqrt{2(1-\operatorname{BC}(P,Q))}$
        \begin{equation*}
            \operatorname{TV}(\mathbb P_+^{(N)},\mathbb P_-^{(N)})\le\sqrt{N t_{\max}^2\Delta^2}=4\sqrt{N}\,t_{\max}\varepsilon_1.
        \end{equation*}
        Since any test distinguishing the two hypotheses with success probability at least $2/3$ must have $ \operatorname{TV}(\mathbb P_+^{(N)},\mathbb P_-^{(N)})\ge\frac13$, we directly get
        \begin{equation*}
            N\ge\frac{c}{t_{\max}^2\varepsilon_1^2}
        \end{equation*}
        for a universal constant $c>0$.
    \end{proof}

    \noindent Next, we turn our attention to showing a lower bound on the sample complexity for learning $k$-local Lindbladians in diamond norm.

    \begin{theorem}[Diamond-norm lower bound]\label{thm:diamond-lower-bound}
        Let $D:=\binom nk$. For every $k$-subset $S\subseteq[n]$, define the Pauli-$Z$ string
        \begin{equation*}
            Z_S:=\bigotimes_{j\in S}Z_j .
        \end{equation*}
        For each sign vector $v=(v_S)_{S\in\binom{[n]}k}\in\{\pm1\}^{D}$, with $\binom{[n]}k:=\{S\subset [n]:|S|=k\}$, define the $k$-local Hamiltonian $H_v:=\delta\sum_{S\in\binom{[n]}k}v_S Z_S$ and the corresponding Hamiltonian generator $\mathcal L_v(\rho):=-i[H_v,\rho]$. Any adaptive protocol that uses tensor-product inputs, evolutions of duration $t\le t_{\max}$, and single-qubit Pauli measurements, and outputs an estimator $\widehat{\mathcal L}$ satisfying
        \begin{equation*}
            \Pr_{\mathcal L_v}\!\left[\|\widehat{\mathcal L}-\mathcal L_v\|_\diamond\le \varepsilon_\diamond\right]\ge \frac23
        \end{equation*}
        uniformly over $v$, requires
        \begin{equation*}
            N\ge\Omega_k\!\left(\frac{n^k}{t_{\max}^2\varepsilon_\diamond^2}\right).
        \end{equation*}
    \end{theorem}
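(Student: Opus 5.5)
We prove the bound with a Fano/Assouad-type packing argument over the hypercube $\{\pm1\}^D$, combined with the single-shot Bhattacharyya/fidelity bound already established in the proof of Theorem~\ref{thm:single-chi-lower-bound}.

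\textbf{Step 1: diamond-norm separation.} For $v,w\in\{\pm1\}^D$ we have $\mathcal L_v-\mathcal L_w=-i[H_v-H_w,\bullet]$. Since the $Z_S$ commute and are diagonal, $H_v-H_w=2\delta\sum_{S:v_S\neq w_S}v_SZ_S$ is diagonal. The diamond norm of $\mathrm{ad}_{K}$ for Hermitian $K$ equals $\lambda_{\max}(K)-\lambda_{\min}(K)$. Viewing $K$ as the function $x\mapsto 2\delta\sum_{S\in T}v_S\prod_{j\in S}x_j$ on $\{\pm1\}^n$, with $T=\{S:v_S\ne w_S\}$, we lower bound this spread. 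The degree-$k$ homogeneous polynomial has Fourier $\ell_2$ mass $4\delta^2|T|$. We would like $\max-\min\gtrsim \delta\sqrt{|T|}$, which follows from a standard anticoncentration fact: $\|f\|_\infty\ge\|f\|_2$ and $f$ has mean zero, so $\max-\min\ge\|f\|_2=2\delta\sqrt{|T|}$. By Gilbert--Varshamov we pick a packing $V\subseteq\{\pm1\}^D$ of size $2^{cD}$ with pairwise Hamming distance $\ge D/4$. Then distinct elements satisfy $\|\mathcal L_v-\mathcal L_w\|_\diamond\ge \delta\sqrt{D}$. Setting $\delta:=3\varepsilon_\diamond/\sqrt D$ makes them $>2\varepsilon_\diamond$-separated, so a successful learner identifies $v\in V$ with probability $\ge 2/3$.

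\textbf{Step 2: information per shot.} For Fano we bound the mutual information between the uniform $v\in V$ and the transcript. By the chain rule over adaptive shots, it suffices to bound, for any fixed product input $\rho$, time $t\le t_{\max}$ and product Pauli measurement, $\max_{v,w}\mathrm{KL}$, or better an average-case quantity. The naive bound from the proof of Theorem~\ref{thm:single-chi-lower-bound} gives only $\mathrm{KL}\lesssim t^2\|H_v-H_w\|^2\sim t^2\delta^2D^2$, which is too weak. This is the main obstacle. The fix is to exploit the product structure. Because everything is diagonal in $Z$, a $Z$-basis outcome on a product input carries no information. An $X/Y$ measurement on qubit $j$ picks up phases only from the terms $Z_S$ with $S\ni j$, and the output distribution on the measured qubits depends on $v$ through single-qubit phases $\phi_j(x)=2t\delta\sum_{S\ni j}v_S\prod_{i\in S\setminus j}x_i$, where $x$ denotes the other qubits' $Z$-values.

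We then bound the mutual information via $I\le \mathbb E_v\,\mathrm{KL}(P_v\|\bar P)$ with $\bar P$ the outcome distribution at $v=0$ (i.e.\ $\delta=0$), and compare against $\delta=0$ rather than pairwise. The intended chi-square computation expands the outcome probabilities to second order in $t\delta$. Averaged over uniform $v$, the cross terms vanish since the $v_S$ are independent signs. What remains is $\sum_S (t\delta)^2\cdot O_k(1)$ per measured qubit in $S$, giving per-shot information $O_k(t_{\max}^2\delta^2 D)$. Since $\delta^2D=9\varepsilon_\diamond^2$, this is $O_k(t_{\max}^2\varepsilon_\diamond^2)$.

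The delicate point is that KL against a mixture needs a second-moment (chi-square) bound. Here we use that for uniform $v$ the $v$-averaged output equals the product of dephased single-qubit marginals, and that higher-order terms are controlled when $t\delta\sqrt D\lesssim t_{\max}\varepsilon_\diamond\le1$. This regime can be assumed, since otherwise the claimed bound is trivial. If the average-case computation proves messy, the fallback is Assouad's lemma: it needs only a neighbouring-pair (single-flip) bound, $\mathrm{TV}\le O(\sqrt N\,t_{\max}\delta)$, which follows verbatim from the fidelity argument of Theorem~\ref{thm:single-chi-lower-bound} applied to $\Delta=2\delta Z_S$. Together with Step 1's additive decomposition of the loss across coordinates, this yields $N\gtrsim 1/(t_{\max}^2\delta^2)=D/(9t_{\max}^2\varepsilon_\diamond^2)$.

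\textbf{Step 3: conclusion.} With the Fano route, $\log|V|=cD\le N\cdot O_k(t_{\max}^2\varepsilon_\diamond^2)+\log 2$ gives $N=\Omega_k(n^k/(t_{\max}^2\varepsilon_\diamond^2))$, using $D=\binom nk=\Theta_k(n^k)$. I would present the Assouad route as the primary argument, because its per-coordinate bound is already proved. There the remaining care is to make the diamond loss split into per-coordinate losses. For this I would use the projection $\widehat v_S:=\mathrm{sign}$ of the estimator's $Z_S$-coefficient and show via Step 1 that $\|\widehat{\mathcal L}-\mathcal L_v\|_\diamond\le\varepsilon_\diamond$ forces $\widehat v$ to agree with $v$ on at least $3D/4$ coordinates.
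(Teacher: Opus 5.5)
Your primary route is the paper's proof: Assouad over the full hypercube $\{\pm1\}^D$, the separation $\|\mathcal L_v-\mathcal L_w\|_\diamond\ge 2\delta\sqrt{d_H(v,w)}$, and the single-flip bound $\operatorname{TV}\le 2\sqrt N\,t_{\max}\delta$ from the fidelity estimate applied to $2\delta Z_S$ after the commuting common terms cancel. The decoding step, however, has two problems.

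\textbf{The decoder.} Step~1 only compares two members $\mathcal L_v,\mathcal L_w$ of the family. Its spread argument needs $\mathcal L_v-\mathcal L_w=-i[\Delta H,\bullet]$ with $\Delta H$ traceless. So it says nothing about your decoder, the sign of the $Z_S$-coefficient of an arbitrary estimator $\widehat{\mathcal L}$. There are two fixes.
\begin{itemize}
\item Use the nearest-neighbour decoder $\widehat v:=\operatorname{argmin}_w\|\widehat{\mathcal L}-\mathcal L_w\|_\diamond$, as the paper does. The triangle inequality then gives $\|\mathcal L_{\widehat v}-\mathcal L_v\|_\diamond\le 2\varepsilon_\diamond$, and Step~1 applies.
\item Alternatively, prove separately that coefficient extraction is diamond-norm contractive. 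The $\chi$-matrix of a superoperator is $2^{-n}$ times its Choi matrix in the orthonormal Pauli--Bell basis. Hence its $I$-row and $I$-column have $\ell_2$-norm at most the diamond norm. With $\widehat h_S:=\tfrac i2(\widehat\chi_{Z_S,I}-\widehat\chi_{I,Z_S})$ this gives $\sum_S|\widehat h_S-\delta v_S|^2\le\|\widehat{\mathcal L}-\mathcal L_v\|_\diamond^2$.
\end{itemize}

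\textbf{The agreement threshold.} The target ``$\widehat v$ agrees with $v$ on at least $3D/4$ coordinates'' is too weak for Assouad. With success probability only $2/3$ it gives $\mathbb E[d_H(\widehat v,v)]\le\tfrac23\cdot\tfrac D4+\tfrac13 D=\tfrac D2$. Combined with $\Pr[\widehat v_j\ne v_j]\ge\tfrac12(1-\operatorname{TV}_j)$, this places no constraint on the $\operatorname{TV}_j$. You need $d_H\le cD$ on the success event with $c<1/4$. The Gilbert--Varshamov threshold $D/4$ belongs to the Fano route and should not be carried over.

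Step~1 already supplies the needed constant. With your $\delta=3\varepsilon_\diamond/\sqrt D$ it gives $2\delta\sqrt{d_H}\le 2\varepsilon_\diamond$, i.e.\ $d_H\le D/9$. Hence $\mathbb E\, d_H\le 11D/27$ and $\max_j\operatorname{TV}_j\ge 5/27$. The paper takes $\delta=4\varepsilon_\diamond/\sqrt D$, obtaining $d_H\le D/16$, $\mathbb E\, d_H\le 3D/8$ and $\max_j\operatorname{TV}_j\ge 1/4$.

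\textbf{The Fano sketch.} I would drop it. Besides being incomplete, it claims the $v$-averaged output is a product of dephased single-qubit marginals, which is false. Averaging over $v_S$ produces the $k$-body channel $\rho\mapsto\cos^2(t\delta)\rho+\sin^2(t\delta)Z_S\rho Z_S$. This correlates product inputs; for example, $|{+}{+}\rangle$ with $k=2$ yields perfectly correlated $X$-outcomes.
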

    \begin{proof}
        Let $D:=\binom nk$, enumerate the subsets in $\binom{[n]}k$ as $S_1,\ldots,S_D$, and take
        \begin{equation*}
            \delta:=\frac{4\varepsilon_\diamond}{\sqrt D}.
        \end{equation*}
        It suffices to prove the claim in the regime $t_{\max}\varepsilon_\diamond\le \sqrt D/32$. If this condition fails, then $D/(t_{\max}^2\varepsilon_\diamond^2)\le 32^2$, so the asserted lower bound is only a universal constant after adjusting the implicit constant. Let $V$ be uniform on the full hypercube $\{\pm1\}^D$, and consider the Hamiltonians
        \begin{equation*}
            H_v=\delta\sum_{j=1}^D v_j Z_{S_j},
            \qquad
            \cL_v(\rho)=-i[H_v,\rho].
        \end{equation*}
        For each realization $v,w\in\{\pm1\}^D$, set $\Delta H:=H_v-H_w$. Since $\Delta H$ is traceless,
        \begin{equation}\label{eq:separation-bound-1}
            \|\cL_v-\cL_w\|_\diamond
            =
            \|[\Delta H,\bullet]\|_\diamond
            =
            \lambda_{\max}(\Delta H)-\lambda_{\min}(\Delta H)
            \ge
            \|\Delta H\|.
        \end{equation}
        Pauli orthogonality gives
        \begin{equation}\label{eq:separation-bound-2}
            \|\Delta H\|
            \ge
            \left(2^{-n}\Tr(\Delta H^2)\right)^{1/2}
            =
            2\delta\sqrt{d_H(v,w)}.
        \end{equation}
        Define the nearest-neighbor decoder
        \begin{equation*}
            \widehat V
            :=
            \operatorname{argmin}_{w\in\{\pm1\}^D}
            \|\widehat{\cL}-\cL_w\|_\diamond .
        \end{equation*}
        On the event $\|\widehat{\cL}-\cL_V\|_\diamond\le\varepsilon_\diamond$, the triangle inequality and the bound in Equation (\ref{eq:separation-bound-1}, \ref{eq:separation-bound-2}) imply
        \begin{equation*}
            d_H(\widehat V,V)
            \le
            \left(\frac{\varepsilon_\diamond}{\delta}\right)^2
            =
            \frac{D}{16}.
        \end{equation*}
        Since this event has probability at least $2/3$ and always $d_H(\widehat V,V)\le D$, we have
        \begin{equation*}
            \sum_{j=1}^D\Pr[\widehat V_j\neq V_j]=\mathbb E[d_H(\widehat V,V)]\le \frac{3D}{8}.
        \end{equation*}
        For each coordinate $j$, let $Q_j^\pm$ be the law of the full adaptive transcript $Y$ conditioned on $V_j=\pm1$. Assouad's coordinate-wise testing bound \cite[Lemma~2]{Yu1997} gives
        \begin{equation*}
            \Pr[\widehat V_j\neq V_j]
            \ge
            \frac12\left(1-\operatorname{TV}(Q_j^+,Q_j^-)\right).
        \end{equation*}
        Summing over $j$ and using the expected-Hamming bound yields
        \begin{equation*}
            \frac1D\sum_{j=1}^D\operatorname{TV}(Q_j^+,Q_j^-)\ge\frac14,
            \qquad
            \text{hence}\qquad
            \max_j\operatorname{TV}(Q_j^+,Q_j^-)\ge\frac14.
        \end{equation*}

        It remains to upper-bound this coordinate total variation. Fix $j\in[D]$, and write $V_{-j}:=(V_1,\ldots,V_{j-1},V_{j+1},\ldots,V_D)$ for all coordinates except the $j$-th one. For $v_{-j}\in\{\pm1\}^{D-1}$, write $Q^{(v_{-j},\pm)}$ for the transcript law of the fixed generator with $V_{-j}=v_{-j}$ and $V_j=\pm1$. Since $V_1,\ldots,V_D$ are independent Rademacher random variables, conditioning on $V_j=\pm1$ leaves $V_{-j}$ uniform on $\{\pm1\}^{D-1}$, and hence
        \begin{equation*}
            Q_j^\pm
            =
            2^{-(D-1)}
            \sum_{v_{-j}\in\{\pm1\}^{D-1}}
            Q^{(v_{-j},\pm)},
        \end{equation*}
        with identical mixture weights. Joint convexity of total variation gives
        \begin{equation*}
            \operatorname{TV}(Q_j^+,Q_j^-)
            \le
            \max_{v_{-j}}
            \operatorname{TV}\!\left(Q^{(v_{-j},+)},Q^{(v_{-j},-)}\right).
        \end{equation*}
        Fix $v_{-j}$. In one shot at time $t\le t_{\max}$, all $Z_S$'s commute, so the Hamiltonian terms common to the two hypotheses cancel by unitary invariance. Thus the two states differ only by the sign of $\delta Z_{S_j}$. The same Uhlmann/fidelity calculation as in Theorem \ref{thm:single-chi-lower-bound} gives a classical Bhattacharyya coefficient at least $\cos(2t\delta)$. This bound holds conditionally on every past transcript, hence multiplicativity of the Bhattacharyya coefficient along the adaptive transcript gives
        \begin{equation*}
            \operatorname{BC}
            \!\left(
            Q^{(v_{-j},+)},
            Q^{(v_{-j},-)}
            \right)
            \ge
            \cos(2t_{\max}\delta)^N .
        \end{equation*}
        Using $\operatorname{TV}\le\sqrt{2(1-\operatorname{BC})}$, $1-\cos^N x\le N(1-\cos x)$, and $1-\cos x\le x^2/2$, we obtain
        \begin{equation*}
            \operatorname{TV}(Q_j^+,Q_j^-)
            \le
            2\sqrt N\,t_{\max}\delta .
        \end{equation*}
        Combining this with the previous lower bound on $\max_j\operatorname{TV}(Q_j^+,Q_j^-)$ gives
        \begin{equation*}
            \frac14\le 2\sqrt N\,t_{\max}\delta,
            \qquad\text{so}\qquad
            N\ge \frac{c}{t_{\max}^2\delta^2}
            =
            \Omega\!\left(
            \frac{D}{t_{\max}^2\varepsilon_\diamond^2}
            \right)
            =
            \Omega_k\!\left(
            \frac{n^k}{t_{\max}^2\varepsilon_\diamond^2}
            \right)\,,
        \end{equation*}
        which finishes the proof.
    \end{proof}
    \begin{rmk}
        For the packing above, the weighted interaction strength scales as $\alpha=\Theta_k\!\left(\varepsilon_\diamond n^{k/2-1}\right)$ and is therefore not constant for $k>2$. Normalizing the generator by $\alpha$ rescales both the maximal time and the target accuracy, so time rescaling alone does not strengthen the lower bound at fixed normalized accuracy.
    \end{rmk}

\section{Application: local-observable verification}\label{sec:local-observable-verification}
    Most papers in the Hamiltonian learning literature focus on the regime where the number of samples grows polylogarithmically with the system size. However, the previous section shows that global diamond-norm recovery can require polynomially many samples in our access model. As we show next, one can nevertheless learn local evolution models that approximate all local marginals at short times. The lower bound above concerns learning an entire generator in diamond norm.
    This is stronger than what is needed to verify a local observable in the spirit of~\cite{kraft2025boundederrorquantumsimulationhamiltonian}. The relevant
    point is that, under a Lieb--Robinson type truncation estimate, the expectation
    of a local observable at time $t$ depends only on the generator parameters in
    a neighborhood of the observable, up to a controllable tail. Hence one can learn
    a local effective generator on that neighborhood and compare the simulator to
    the learned local model. The required coefficient precision is set by the size
    of this neighborhood, not by the total system size $n$.

    Let $A=\supp(O)$ and let $B_R(A)\subseteq[n]$ denote the radius-$R$
    neighborhood of $A$ in the metric or interaction graph relevant for the
    Lieb--Robinson bound. Denote by $\cL_{B_R(A)}$ the truncated generator
    obtained by keeping only those local terms whose support is contained in
    $B_R(A)$.

    \begin{assumption}[Local truncation for observables]
        \label{ass:local-observable-truncation}
        There is a function $\Delta_{\rm LR}(R,t,A)$ such that, for every region $B\supseteq A$, every observable $O$ supported on $A$, and $R:=\dist(A,B^c)$,
        \begin{equation*}
            \left\|
            e^{t\cL^\dagger}(O)
            -e^{t\cL_B^\dagger}(O)
            \right\|
            \le
            \Delta_{\rm LR}(R,t,A)\,\|O\|.
        \end{equation*}
        For finite-range interactions on a lattice of polynomial volume growth, one may take $\Delta_{\rm LR}(R,t,A)$ exponentially small in $R-vt$, up to the usual constants. Analogous versions hold for sufficiently decaying interactions with the corresponding long-range Lieb--Robinson tail.
    \end{assumption}

    \begin{theorem}[Learning local verification parameters]
        \label{thm:local-observable-verification}
        Fix a $q$-local observable $O$ with $q=O(1)$, an input product state $\rho_0$, an evolution time $t>0$, and an accuracy $\varepsilon\in(0,1)$. Set $A:=\supp(O)$, choose $R_{\operatorname{in}}$ such that $\Delta_{\rm LR}(R_{\operatorname{in}},t,A)\leq\varepsilon/3$, and let $B_{\operatorname{in}}:=B_{R_{\operatorname{in}}}(A)$. Choose $R_{\operatorname{out}}>R_{\operatorname{in}}$ satisfying the buffered-data condition \eqref{eq:buffered-ptm-condition} below, and set $B_{\operatorname{out}}:=B_{R_{\operatorname{out}}}(A)$ and $m:=|B_{\operatorname{out}}|$. There is a protocol that learns a valid $k$-local Lindblad generator $\widehat\cL_{B_{\operatorname{in}}}$ supported on $B_{\operatorname{in}}$ such that, with probability at least $1-\delta$,
        \begin{equation*}
            \left|
            \Tr\!\left[O e^{t\cL}(\rho_0)\right]
            -
            \Tr\!\left[
            O e^{t\widehat\cL_{B_{\operatorname{in}}}}(\rho_{0,B_{\operatorname{in}}})
            \right]
            \right|
            \le
            \varepsilon\,\|O\|.
        \end{equation*}
        The number of samples is
        \begin{equation*}
            \widetilde{\mathcal O}_{k}\!\left(
            \frac{t^2\,m^{4k}}{\varepsilon^2}
            \log\frac{1}{\delta}
            \right),
        \end{equation*}
        up to constants depending on the local weighted interaction strength and the Lieb--Robinson constants, but not on $n$ except through $m$. For polynomial volume growth and exponentially decaying Lieb--Robinson tails, the radii can be chosen so that $m=\operatorname{poly}(t,\log(1/\varepsilon))$, up to polylogarithmic factors. Hence, for $t=\operatorname{polylog}(n)$ and fixed $\varepsilon$, the sample complexity is $\operatorname{polylog}(n)$.
    \end{theorem}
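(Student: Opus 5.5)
The plan is to split the error into three pieces and allot $\varepsilon/3$ to each, via a triangle inequality in the Heisenberg picture. First, apply Assumption~\ref{ass:local-observable-truncation} with $B=B_{\operatorname{in}}$. This replaces $\Tr[Oe^{t\cL}(\rho_0)]=\Tr[e^{t\cL^\dagger}(O)\rho_0]$ by $\Tr[e^{t\cL_{B_{\operatorname{in}}}^\dagger}(O)\rho_0]$ at cost $\Delta_{\rm LR}(R_{\operatorname{in}},t,A)\|O\|\le\varepsilon\|O\|/3$. Since $e^{t\cL_{B_{\operatorname{in}}}^\dagger}(O)$ is supported on $B_{\operatorname{in}}$, this expectation only involves the marginal $\rho_{0,B_{\operatorname{in}}}$ of the product state. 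Second, compare $e^{t\cL_{B_{\operatorname{in}}}}$ with $e^{t\widehat\cL_{B_{\operatorname{in}}}}$ on $B_{\operatorname{in}}$. The Duhamel formula together with contractivity, as in the proof of Proposition~\ref{prop:agnostic-learning}, gives a bound $t\,\|\widehat\cL_{B_{\operatorname{in}}}-\cL_{B_{\operatorname{in}}}\|_\diamond\,\|O\|$. It therefore suffices to learn the truncated generator to diamond-norm accuracy $\varepsilon_\diamond:=\varepsilon/(3t)$. The third piece is the statistical failure, which is absorbed into $\delta$.

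To learn $\cL_{B_{\operatorname{in}}}$, I rerun the pipeline of Theorem~\ref{thm:full-lindblad-learning}, with the qubit set $[n]$ replaced by $B_{\operatorname{in}}$. I only estimate PTM entries $L_{\mathbf P,\mathbf Q}$ for Pauli strings supported in $B_{\operatorname{in}}$, using Algorithm~\ref{alg:recover-chi} restricted to regions $R\subseteq B_{\operatorname{in}}$. The subtlety is that the descending inversion for $\chi_{\mathbf P,\mathbf Q}$ with support near the boundary of $B_{\operatorname{in}}$ needs the diagonal extensions $\mathbf A_{\overline R}$ of size up to $k-|R|$. These may reach up to $k$ sites outside $B_{\operatorname{in}}$. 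This is precisely what the buffer $B_{\operatorname{out}}$ is for. I would read the buffered-data condition \eqref{eq:buffered-ptm-condition} as guaranteeing that all coefficients touching $B_{\operatorname{in}}$ are determined by PTM data for regions inside $B_{\operatorname{out}}$: either every local term intersecting $B_{\operatorname{in}}$ lies inside $B_{\operatorname{out}}$, or the residual contamination is below the target accuracy. So I run the inversion on all regions $R\subseteq B_{\operatorname{out}}$, $|R|\le k$, and keep the coefficients supported in $B_{\operatorname{in}}$. The PTM entries themselves are global derivatives at $t=0$. They are obtained from process shadows on the full system exactly as in Section~\ref{sec:learning-ptm-elements}, since Lemma~\ref{lem:poly-approx} needs only bounded weighted interaction strength. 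Only $\mathcal O_k(m^k)$ PTM pairs are queried, so the union bound costs $\log(m^k/\delta)$ rather than $\log(n^k/\delta)$. Lemma~\ref{lem:chi-recovery-stability} then amplifies errors by $\mathcal O(m^k)$ instead of $\mathcal O(n^k)$.

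For the final bookkeeping, I would take $\varepsilon_L=\Theta(\varepsilon_\diamond/m^{2k})$. Then Lemma~\ref{lem:chi-recovery-stability} gives an entrywise error of $\mathcal O(\varepsilon_\diamond/m^k)$. I next project with the SDP of Proposition~\ref{prop:sdp-projection-runtime} onto local PSD Kossakowski blocks inside $B_{\operatorname{in}}$, which yields a valid generator $\widehat\cL_{B_{\operatorname{in}}}$. Lemma~\ref{lem:coeff-to-diamond}, with $n$ replaced by $|B_{\operatorname{in}}|\le m$, then gives diamond error at most $\varepsilon_\diamond$. The SDP target is $G$ restricted to $B_{\operatorname{in}}$, which is feasible because the truncated true blocks are PSD. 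With the process-shadow bound \eqref{eq:processshadows}, the sample count is $\widetilde{\mathcal O}_k(\varepsilon_L^{-2}\log(m/\delta))=\widetilde{\mathcal O}_k(t^2m^{4k}\varepsilon^{-2}\log(1/\delta))$, as stated. Under polynomial volume growth, $R_{\operatorname{in}}=\mathcal O(vt+\log(1/\varepsilon))$, so $m=\operatorname{poly}(t,\log(1/\varepsilon))$.

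The main obstacle is the boundary step. I must make precise that the local inversion on $B_{\operatorname{out}}$ correctly reconstructs exactly the terms of $\cL_{B_{\operatorname{in}}}$, meaning terms contained in $B_{\operatorname{in}}$ and not straddling its boundary. In particular, straddling terms must be peeled off by the diagonal-extension subtraction rather than misattributed. My first attempt will be to note that Algorithm~\ref{alg:recover-chi} recovers every $\chi_{\mathbf P,\mathbf Q}$ with support union inside $B_{\operatorname{out}}$ exactly (up to $\varepsilon_L$), provided no nonzero coefficient has support union that meets $B_{\operatorname{out}}$'s interior region and also exits it. The buffered condition with $R_{\operatorname{out}}\ge R_{\operatorname{in}}+k\cdot(\text{interaction range})$ ensures this. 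I would then simply discard the straddling coefficients when forming $\widehat\cL_{B_{\operatorname{in}}}$.
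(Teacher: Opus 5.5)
Your outer structure is the paper's. You truncate by Lieb--Robinson to $B_{\operatorname{in}}$ at cost $\varepsilon\|O\|/3$, use Duhamel to reduce to diamond accuracy $\varepsilon/(3t)$ for the interior generator, take $\varepsilon_L=\Theta(\varepsilon/(tm^{2k}))$, and apply the inversion and Lemma~\ref{lem:coeff-to-diamond} with $n$ replaced by $m$.

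The gap is the buffer step. You replace the buffered-data condition \eqref{eq:buffered-ptm-condition}, which is a Lieb--Robinson tail bound on $\beta_{\operatorname{buf}}$, by a support-containment condition: every nonzero coefficient meeting $B_{\operatorname{in}}$ lies in $B_{\operatorname{out}}$. For the other case you only assert that ``the residual contamination is below the target accuracy''. The first is an extra finite-range hypothesis the theorem does not make, since Assumption~\ref{ass:local-observable-truncation} explicitly allows decaying long-range interactions. The second has no mechanism behind it. You feed global PTM entries into an inversion restricted to regions in $B_{\operatorname{out}}$. By \eqref{eq:local-vs-global-PTM} and \eqref{eq:chiRtochi}, this recovers exactly the coefficients of the reduced generator $\cL^{(B_{\operatorname{out}})}$, namely $\chi_{\mathbf P,\mathbf Q}+\sum_{\mathbf A\neq I}\chi_{\mathbf P\otimes\mathbf A,\mathbf Q\otimes\mathbf A}$, with $\mathbf A$ ranging over Pauli strings on $[n]\setminus B_{\operatorname{out}}$. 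For long-range $\cL$ this is a nonzero $\rho$-type tail, as in Section~\ref{subsec:stable-sparse-diagonal-extensions}. It propagates through the recursion, Lemma~\ref{lem:coeff-to-diamond} and the Duhamel step with factors polynomial in $m$ and $t$, and nothing in your argument bounds it.

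The missing idea is to apply the Lieb--Robinson assumption to the data rather than to supports. Let $\cL_{\operatorname{out}}:=\sum_{e\subseteq B_{\operatorname{out}}}\cL_e$. For every queried pair and every $s\in[0,T]$, the paper shows
\begin{equation*}
\left|2^{-n}\Tr\!\left[\mathbf P e^{s\cL}(\mathbf Q)\right]-2^{-m}\Tr_{B_{\operatorname{out}}}\!\left[\mathbf P e^{s\cL_{\operatorname{out}}}(\mathbf Q)\right]\right|\le\beta_{\operatorname{buf}}.
\end{equation*}
Robust regression and the Markov brothers' inequality \eqref{eq:markov-brothers} turn this into a derivative bias of at most $2C_{\operatorname{reg}}d^2\beta_{\operatorname{buf}}/T\le\varepsilon_L/2$. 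The estimates are then $\varepsilon_L$-accurate PTM entries of the isolated $m$-qubit generator $\cL_{\operatorname{out}}$, for which the marginal identity is exact inside $B_{\operatorname{out}}$. So the isolated pipeline applies verbatim.

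In the exactly finite-range case your support argument is correct, and somewhat simpler. Every region in the recursion for a coefficient supported in $B_{\operatorname{in}}$ contains that support, and hence meets $B_{\operatorname{in}}$. For a $(k,r_0)$-local generator, $R_{\operatorname{out}}\ge R_{\operatorname{in}}+r_0$ therefore already makes the restricted recursion exact, with no comparison of time traces needed. But that proves a variant of the theorem with a different buffer hypothesis.

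Finally, your worry about straddling terms being ``misattributed'' is not the real issue. The $\chi$-coefficients are unique, and keeping those supported in $B_{\operatorname{in}}$ is also what the paper does.
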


    \begin{proof}
        By \Cref{ass:local-observable-truncation}, replacing the full evolution by that generated by $\cL_{B_{\operatorname{in}}}$ contributes at most $\varepsilon\|O\|/3$. It remains to learn this interior generator from full-system data.

        Let $\mathcal Q_{\operatorname{in}}$ be the PTM-query list, closed under the diagonal extensions required to recover the local blocks supported in $B_{\operatorname{in}}$. Choose $R_{\operatorname{out}}$ so that every queried Pauli string is supported in $B_{\operatorname{out}}$, and set
        \begin{equation*}
            \cL_{\operatorname{out}}:=\sum_{e\subseteq B_{\operatorname{out}}}\cL_e,
            \qquad
            \beta_{\operatorname{buf}}:=\sup_{\substack{(\mathbf P,\mathbf Q)\in\mathcal Q_{\operatorname{in}}\\0\leq s\leq T}}
            \Delta_{\rm LR}\!\left(\dist(\supp(\mathbf P),B_{\operatorname{out}}^c),s,\supp(\mathbf P)\right).
        \end{equation*}
        For every $(\mathbf P,\mathbf Q)\in\mathcal Q_{\operatorname{in}}$ and $s\in[0,T]$, \Cref{ass:local-observable-truncation} and $\|\mathbf Q\|_1=2^n$ give
        \begin{equation*}
            \begin{aligned}
                &\left|2^{-n}\Tr\!\left[\mathbf P e^{s\cL}(\mathbf Q)\right]
                -2^{-m}\Tr_{B_{\operatorname{out}}}\!\left[\mathbf P e^{s\cL_{\operatorname{out}}}(\mathbf Q)\right]\right|\\
                &\qquad\leq
                \left\|e^{s\cL^\dagger}(\mathbf P)-e^{s\cL_{\operatorname{out}}^\dagger}(\mathbf P)\right\|
                \leq\beta_{\operatorname{buf}}.
            \end{aligned}
        \end{equation*}
        Hence the use of full-system data contributes at most $2C_{\operatorname{reg}}d^2\beta_{\operatorname{buf}}/T$ to each PTM derivative estimate, where $T=(4\alpha k)^{-1}$ and $d$ is the maximal interpolation degree. We choose the buffer so that
        \begin{equation}\label{eq:buffered-ptm-condition}
            \frac{2C_{\operatorname{reg}}d^2}{T}\,\beta_{\operatorname{buf}}
            \leq\frac{\varepsilon_L}{2},
            \qquad
            \varepsilon_L=\Theta_k\!\left(\frac{\varepsilon}{t m^{2k}}\right),
        \end{equation}
        and allocate the remaining half of the PTM budget to statistical and interpolation error. The local inversion and SDP may then be run on $\mathcal Q_{\operatorname{in}}$ as in the isolated problem. Retaining the Hamiltonian coefficients and local PSD blocks supported in $B_{\operatorname{in}}$ gives a valid generator satisfying
        \begin{equation*}
            \|\widehat\cL_{B_{\operatorname{in}}}-\cL_{B_{\operatorname{in}}}\|_\diamond
            \leq\frac{\varepsilon}{3t}
        \end{equation*}
        with the stated sample complexity.

        Since the Heisenberg semigroups generated by Lindbladians are unital completely positive maps, they are contractions in operator norm. Duhamel's formula therefore gives
        \begin{equation*}
            \begin{aligned}
                &\left\|
                e^{t\widehat\cL_{B_{\operatorname{in}}}^\dagger}(O)
                -
                e^{t\cL_{B_{\operatorname{in}}}^\dagger}(O)
                \right\| \\
                &\qquad\le
                \int_0^t
                \left\|
                e^{s\widehat\cL_{B_{\operatorname{in}}}^\dagger}
                (\widehat\cL_{B_{\operatorname{in}}}^\dagger-\cL_{B_{\operatorname{in}}}^\dagger)
                e^{(t-s)\cL_{B_{\operatorname{in}}}^\dagger}(O)
                \right\|\,ds \\
                &\qquad\le
                t\,
                \|\widehat\cL_{B_{\operatorname{in}}}-\cL_{B_{\operatorname{in}}}\|_\diamond
                \|O\|
                \le
                \frac{\varepsilon}{3}\|O\|.
            \end{aligned}
        \end{equation*}
        Combining this model-learning error with the LR truncation error by the triangle inequality yields the claim, after harmlessly allocating the remaining third of the error budget to numerical precision in solving the SDP and evaluating the $m$-qubit learned model.
    \end{proof}
    This immediately gives the following corollary:
    \begin{corollary}[Simultaneous local marginals]
        \label{cor:simultaneous-local-marginals}
        Fix $q=O(1)$, a product input state $\rho_0$, an evolution time $t>0$, and an accuracy $\varepsilon\in(0,1)$. Let
        \begin{equation*}
            \mathcal A_q:=\{A\subseteq[n]: |A|\le q\},
            \qquad
            N_q:=|\mathcal A_q|\le \sum_{a=0}^q \binom{n}{a}.
        \end{equation*}
        For each $A\in\mathcal A_q$, choose nested regions $B_A^{\operatorname{in}}\subset B_A^{\operatorname{out}}$ satisfying the truncation and buffered-data conditions of \Cref{thm:local-observable-verification}, and set $m_*:=\max_{A\in\mathcal A_q}|B_A^{\operatorname{out}}|$. Assume uniformly that
        \begin{equation*}
            \Delta_{\rm LR}(\dist(A,(B_A^{\operatorname{in}})^c),t,A)\le c_q\,\varepsilon
            \qquad\text{for all }A\in\mathcal A_q,
        \end{equation*}
        where $c_q>0$ is a sufficiently small constant depending only on $q$. Then, with probability at least $1-\delta$, a single shared data set suffices to construct local learned generators
        \begin{equation*}
            \{\widehat\cL_{B_A^{\operatorname{in}}}: A\in\mathcal A_q\}
        \end{equation*}
        such that, writing
        \begin{equation*}
            \rho_{t,A}:=\Tr_{[n]\setminus A}\!\left[e^{t\cL}(\rho_0)\right],
        \end{equation*}
        the corresponding local predicted marginals
        \begin{equation*}
            \widehat\rho_{t,A}^{\rm loc}
            :=
            \Tr_{B_A^{\operatorname{in}}\setminus A}\!\left[
            e^{t\widehat\cL_{B_A^{\operatorname{in}}}}(\rho_{0,B_A^{\operatorname{in}}})
            \right]
        \end{equation*}
        satisfy
        \begin{equation*}
            \max_{A\in\mathcal A_q}
            \left\|
            \rho_{t,A}-\widehat\rho_{t,A}^{\rm loc}
            \right\|_1
            \le \varepsilon .
        \end{equation*}
        The required number of samples is
        \begin{equation*}
            \widetilde{\mathcal O}_{k,q}\!\left(
            \frac{t^2\,m_*^{4k}}{\varepsilon^2}
            \log\frac{N_q}{\delta}
            \right)
            =
            \widetilde{\mathcal O}_{k,q}\!\left(
            \frac{t^2\,m_*^{4k}}{\varepsilon^2}
            \left(q\log n+\log\frac{1}{\delta}\right)
            \right).
        \end{equation*}
    \end{corollary}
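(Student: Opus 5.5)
The plan is to apply Theorem~\ref{thm:local-observable-verification} separately to every region $A\in\mathcal A_q$, all from one pooled process-shadow data set, and then convert observable-wise bounds into trace-norm bounds on marginals.

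\textbf{Step 1: shared data.} The experiments are identical for all $A$: random product Pauli inputs, evolution at the common interpolation times in $[0,T]$, and random product Pauli measurements. For each $A$, the estimator only needs the PTM query list $\mathcal Q_{\rm in}(A)$, which consists of Pauli pairs of weight at most $2k$ supported in $B_A^{\rm out}$.
\begin{itemize}
\item Take the union of these lists over $A\in\mathcal A_q$. It has at most $N_q\cdot 16^{m_*}$ elements, but this count only enters through a logarithm.
\item By the process-shadow bound \eqref{eq:processshadows}, all overlaps are estimated simultaneously to precision $\varepsilon_{\rm Shadow}$. The cost is a $\log(N_q m/\delta)$ factor in place of $\log(m/\delta)$.
\item Choose $\varepsilon_L=\Theta_k(\varepsilon/(t m_*^{2k}))$ uniformly for all $A$, as in \eqref{eq:buffered-ptm-condition}.
\end{itemize}
The buffered-data condition then holds for every $A$ by hypothesis. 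On a single good event of probability at least $1-\delta$, every local inversion, SDP projection and diamond bound in the proof of Theorem~\ref{thm:local-observable-verification} holds simultaneously. This gives the stated sample count.

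\textbf{Step 2: from observables to trace norm.} Fix $A$ and write $\|\rho_{t,A}-\widehat\rho^{\rm loc}_{t,A}\|_1=\sup_{\|O\|\le 1,\,\supp O\subseteq A}\Tr[O(\rho_{t,A}-\widehat\rho^{\rm loc}_{t,A})]$. Naively, this would require a union bound over infinitely many observables $O$. We avoid this because the proof of Theorem~\ref{thm:local-observable-verification} gives a bound that is uniform in $O$. Its two error terms are:
\begin{itemize}
\item the Lieb--Robinson truncation term $\Delta_{\rm LR}\|O\|$ from Assumption~\ref{ass:local-observable-truncation}, which is uniform over $O$ supported on $A$;
\item the Duhamel term $t\|\widehat\cL_{B^{\rm in}_A}-\cL_{B^{\rm in}_A}\|_\diamond\|O\|$.
\end{itemize}
Both multiply $\|O\|$ and do not otherwise depend on $O$. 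Write $\widehat\rho^{\rm loc}_{t,A}$ for the partial trace of the learned evolution of $\rho_{0,B^{\rm in}_A}$. Using $\Tr[O\,\Tr_{B\setminus A}X]=\Tr[(O\otimes I)X]$, we get
\[
\|\rho_{t,A}-\widehat\rho^{\rm loc}_{t,A}\|_1\le \Delta_{\rm LR}+t\|\widehat\cL_{B_A^{\rm in}}-\cL_{B_A^{\rm in}}\|_\diamond+\text{(numerical slack)}.
\]

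\textbf{Step 3: the main obstacle.} The delicate point is constant bookkeeping. The trace-norm dual has Hermitian $O$ with $\|O\|\le1$, so the bounds transfer with only a factor depending on $q$. If we instead pass through a Pauli basis expansion on $A$, this costs $4^q$. That is why $c_q$ appears in the hypothesis, and why the accuracy is set to $\varepsilon/(3\cdot 4^q)$ internally. Since $q=O(1)$, these factors are absorbed into $\widetilde{\mathcal O}_{k,q}$.

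Finally, use $\log N_q\le q\log n+O_q(1)$ to obtain the second form of the sample bound.
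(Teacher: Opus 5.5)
Your proposal is correct. Step 1 (one shared process-shadow data set, with the union over regions costing only $\log(N_q/\delta)$) matches the paper. Where you differ is in passing from observables to the trace norm.

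\textbf{The paper's route.} It applies Theorem~\ref{thm:local-observable-verification} as a black box to each of the $4^{|A|}$ Pauli strings on $A$. This gives $|\operatorname{Tr}(\mathbf P X_A)|\le c_q'\varepsilon$ for $X_A:=\rho_{t,A}-\widehat\rho^{\rm loc}_{t,A}$. Parseval and $\|X_A\|_1\le 2^{|A|/2}\|X_A\|_2$ then give $\|X_A\|_1\le 2^{|A|}c_q'\varepsilon$. This $2^{q}$ loss is exactly why a $q$-dependent constant $c_q$ appears in the hypothesis.

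\textbf{Your route.} You reopen the proof of the theorem and note that both error terms, $\Delta_{\rm LR}\|O\|$ and $t\|\widehat\cL_{B^{\rm in}_A}-\cL_{B^{\rm in}_A}\|_\diamond\|O\|$, are uniform over all observables supported on $A$. This holds because the learned generator depends on $O$ only through $A$. Trace-norm duality then gives $\|X_A\|_1\le\Delta_{\rm LR}+t\|\widehat\cL_{B^{\rm in}_A}-\cL_{B^{\rm in}_A}\|_\diamond$ (plus the numerical budget), with no $q$-dependent loss at all.

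\textbf{What each buys.} Your route is sharper: $c_q$ could be an absolute constant such as $1/3$. The price is that it relies on the internal structure of the theorem's proof rather than on its statement for finitely many fixed observables. The paper's route also implicitly uses that one learned generator serves all Paulis on $A$, so neither approach avoids that fact.

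\textbf{Step 3 is overcautious.} On the duality route no factor depending on $q$ arises. On the Pauli route, the paper's Hilbert--Schmidt argument costs $2^q$, not $4^q$; the $4^q$ is the crude triangle-inequality bound. Choosing internal accuracy $\varepsilon/(3\cdot 4^q)$ is harmless but unnecessary.

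\textbf{One small fix in Step 1.} Each query pair consists of Pauli strings of weight at most $k$ inside $B^{\rm out}_A$. So the union of query lists has at most $N_q\cdot\mathcal O_k(m_*^{2k})$ elements, contributing only $\mathcal O(k\log m_*)$ to the logarithm. Your count $N_q\cdot 16^{m_*}$, taken literally, would put an additive $\Theta(m_*)$ inside the logarithm and hence an extra factor $m_*$ in the sample bound.
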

    Thus, whenever the buffered neighborhoods $B_A^{\operatorname{out}}$ have polylogarithmic size, all constant-local output marginals can be verified with polylogarithmic sample complexity.

    \begin{proof}
        For each $A\in\mathcal A_q$, apply \Cref{thm:local-observable-verification} to $B_A^{\operatorname{in}}\subset B_A^{\operatorname{out}}$ with failure probability $\delta/N_q$ and observable-level accuracy $c_q\varepsilon$ for every Pauli observable on $A$. The same full-system product-measurement data estimate the buffered local PTM entries for all regions in parallel; a union bound only changes the logarithmic failure factor from $\log(1/\delta)$ to $\log(N_q/\delta)$. Hence all local learned generators satisfy their required error bounds simultaneously with probability at least $1-\delta$.

        On this event, the argument of \Cref{thm:local-observable-verification} gives, for every $A\in\mathcal A_q$ and every Pauli string $\mathbf P\in\mathcal P_A$,
        \begin{equation*}
            \left|
            \Tr[\mathbf P\rho_{t,A}]
            -
            \Tr[\mathbf P\widehat\rho_{t,A}^{\rm loc}]
            \right|
            \le c_q'\varepsilon ,
        \end{equation*}
        where $c_q'$ can be made arbitrarily small by choosing the constants in the truncation, learning, and numerical error budgets. For $X_A:=\rho_{t,A}-\widehat\rho_{t,A}^{\rm loc}$, Pauli orthogonality and the Schatten-norm inequality give
        \begin{equation*}
            \|X_A\|_1\leq2^{|A|/2}\|X_A\|_2,
            \qquad
            \|X_A\|_2^2
            =2^{-|A|}\sum_{\mathbf P\in\mathcal P_A}|\Tr(\mathbf P X_A)|^2
            \leq2^{|A|}(c_q'\varepsilon)^2,
        \end{equation*}
        and therefore
        \begin{equation*}
            \|\rho_{t,A}-\widehat\rho_{t,A}^{\rm loc}\|_1
            \le
            2^{|A|}c_q'\varepsilon
            \le
            \varepsilon
        \end{equation*}
        after taking $c_q'\le 2^{-q}$. This proves the simultaneous marginal guarantee.
    \end{proof}

    The parameters learned in this application are those of an interior generator $\widehat\cL_{B_{\operatorname{in}}}$, while $B_{\operatorname{out}}$ serves only as a buffer against boundary contamination of the PTM data. Thus the precision requirement scales with $m=|B_{\operatorname{out}}|$, rather than with the number $n$ of qubits. The corollary shows that, after a union bound over all constant-size regions, the same principle yields local models whose reduced states approximate all constant-local marginals.

\section*{Acknowledgments}

    While preparing this manuscript for posting, we became aware of the two independent contemporaneous preprints~\cite{Yelin.2026,Itai.2026}. CR would like to thank Peter Brown for helpful discussions. D.S.F. acknowledges financial support from the Novo Nordisk Foundation (Grant No. NNF20OC0059939 Quantum for Life) and by the ERC grant GIFNEQ 101163938. T.M. acknowledges support from the Deutsche Forschungsgemeinschaft (DFG, German Research Foundation), Project-ID 470903074, TRR 352, and Project-ID 575156903. This project was funded within the QuantERA II program, which has received funding from the EU's H2020 research and innovation program under GA No. 101017733. DSF and CR are supported by France 2030 under the French National Research Agency award number ``ANR-22-PNCQ-0002''.

\addtocontents{toc}{\protect\setcounter{tocdepth}{0}}
\appendix

\section{Technical details for the SDP projection}\label{app:sdp-projection-details}
    This appendix gives the SDP formulation and proofs for Section \ref{sec:sdp-projection}.

    The previous sections reconstruct coefficient arrays $\widehat h$ and $\widehat G$ in the Pauli--GKSL parameterization. Because of statistical and interpolation errors, the reconstructed matrix $\widehat G$ need not be positive semidefinite on each local block and therefore need not define a valid Lindblad generator. We now describe a convex projection step that outputs the closest $k$-local Lindblad form to the recovered coefficients. A $k$-local dissipative coefficient matrix is represented by a collection
    \begin{equation*}
        X=(X^e)_{e\in\mathcal R_{n,\le k}}, \qquad X^e\in\mathbb C^{\cP_e^\circ\times \cP_e^\circ}, \qquad X^e\succeq0 .
    \end{equation*}
    The corresponding global Kossakowski matrix is
    \begin{equation*}
        \cA(X):=\sum_{e\in\mathcal R_{n,\le k}}\iota_e(X^e),
    \end{equation*}
    where $\iota_e$ embeds the $e$-local block into the global Pauli-indexed matrix by setting
    \begin{equation*}
        \iota_e(X^e)_{\mathbf P,\mathbf Q}:=
        \begin{cases}
            (X^e)_{\mathbf P_e,\mathbf Q_e},&\text{if }\operatorname{supp}(\mathbf P)\cup\operatorname{supp}(\mathbf Q)
            \subseteq e,\\
            0,
            &
            \text{otherwise}.
        \end{cases}
    \end{equation*}
    Here $\mathbf P_e,\mathbf Q_e$ denote the restrictions of $\mathbf P,\mathbf Q$ to $e$. Let
    \begin{equation*}
        \mathcal I
        :=
        \left\{
        (\mathbf P,\mathbf Q)\in
        \cP_{n,\le k}^\circ\times\cP_{n,\le k}^\circ
        :
        \left|
        \operatorname{supp}(\mathbf P)
        \cup
        \operatorname{supp}(\mathbf Q)
        \right|
        \le k
        \right\}
    \end{equation*}
    be the set of possibly nonzero dissipative coefficient indices. For notational simplicity, we write the SDP below in real coordinates. Equivalently, in the complex Hermitian case, the same construction is applied to any fixed real coordinate representation of Hermitian matrices. Bounds on the real coordinates imply complex-modulus bounds up to a universal constant. Given $\widehat G$, we define the projected dissipative coefficients by the semidefinite program below. The scale parameters $R_*,\eta_*,y_*$ are chosen in the parameter-choice paragraph below to make the true decomposition feasible and keep the SDP polynomially bounded.
    \begin{equation}\label{eq:sdp-projection}
        \begin{aligned}
            \eta_{\operatorname{opt}}
            =
            \min_{X,\eta,y^\pm}\quad
            & \eta \\
            \textnormal{subject to}\quad
            & X^e\succeq 0,
            \qquad
            \forall e\in\mathcal R_{n,\le k},\\
            & 0\le y_i^\pm\le y_*,
            \qquad
            \forall i\in\mathcal I,\\
            & 0\le \eta\le \eta_*,\\
            & y_i^+
            =
            \cA_i(X)-\widehat g_i+\eta,
            \qquad
            \forall i\in\mathcal I,\\
            & y_i^-
            =
            \widehat g_i-\cA_i(X)+\eta,
            \qquad
            \forall i\in\mathcal I,\\
            &
            \sum_{e\in\mathcal R_{n,\le k}}\Tr(X^e)\le R_* .
        \end{aligned}
    \end{equation}
    Here, for $i=(\mathbf P,\mathbf Q)\in\mathcal I$, we use the notation
    \begin{equation*}
        \cA_i(X):=\cA(X)_{\mathbf P,\mathbf Q},
        \qquad
        \widehat g_i:=\widehat G_{\mathbf P,\mathbf Q}.
    \end{equation*}
    The variables may equivalently be collected into the block-diagonal matrix
    \begin{equation*}
        Y
        :=
        \bigoplus_{e\in\mathcal R_{n,\le k}} X^e
        \oplus \eta
        \oplus \operatorname{diag}(y^+)
        \oplus \operatorname{diag}(y^-).
    \end{equation*}
    The constraints involving $y_i^\pm$ enforce
    \begin{equation*}
        |\cA_i(X)-\widehat g_i|\le \eta,
        \qquad i\in\mathcal I,
    \end{equation*}
    in real coordinates. We choose the constants in the SDP as follows. Assume that the true dissipative matrix admits a local positive decomposition
    \begin{equation*}
        G=\cA(X_{\operatorname{true}}),
        \qquad
        X_{\operatorname{true}}^e\succeq0,
    \end{equation*}
    and choose the known trace budget
    \begin{equation*}
        \gamma:=(4^k-1)\max\{\alpha,1\}.
    \end{equation*}
    The maximum with $1$ only prevents the strictly feasible point below from degenerating when the dissipative part is zero. We now verify that the true blocks obey this trace budget. Let $e\subseteq[n]$, with $|e|=r\le k$, and write $d=2^r$. Consider a local contribution $\cL_e$ and let $L^e_{\mathbf A,\mathbf B}$ be the local Pauli transfer matrix of $\cL_e$:
    \begin{equation*}
        L^e_{\mathbf A,\mathbf B}
        :=
        \frac{1}{d}
        \Tr\!\left[
        \mathbf A\,\cL_e(\mathbf B)
        \right],
        \qquad
        \mathbf A,\mathbf B\in\cP_e .
    \end{equation*}
    Now, since $\|\mathbf A\|_2=\|\mathbf B\|_2=\sqrt d$, we have
    \begin{equation*}
        |L^e_{\mathbf A,\mathbf B}|
        =
        \frac{1}{d}
        \left|
        \Tr[
        \mathbf A\,\cL_e(\mathbf B)
        ]
        \right|                                                        \le
        \frac{1}{d}
        \|\mathbf A\|_2
        \|\cL_e(\mathbf B)\|_2                                          \le
        \|\cL_e\|_{2\to2}
        =
        \|\cL_e^\dagger\|_{2\to2}.
    \end{equation*}
    By the local inversion formula \eqref{eq:localinversionform}, for $\mathbf P,\mathbf Q\in\cP_e^\circ$ and denoting by $X^e_{\mathbf{P},\mathbf{Q}}$ the $(\mathbf{P},\mathbf{Q})$-entry of $X^e_{\operatorname{true}}$,
    \begin{equation*}
        X^e_{\mathbf P,\mathbf Q}
        =
        \frac{1}{d^3}
        \sum_{\mathbf A,\mathbf B\in\cP_e}
        L^e_{\mathbf A,\mathbf B}
        \Tr(\mathbf A\mathbf Q\mathbf B\mathbf P).
    \end{equation*}
    For fixed $\mathbf P,\mathbf Q$, the trace is nonzero for exactly $4^r=d^2$ choices of $(\mathbf A,\mathbf B)$, and each nonzero trace has modulus $d$. Hence
    \begin{equation*}
        |X^e_{\mathbf P,\mathbf Q}|
        \le
        \|\cL_e^\dagger\|_{2\to2}.
    \end{equation*}
    In particular, since $X_{\operatorname{true}}^e\succeq0$, $0\le X^e_{\mathbf P,\mathbf P} \le \|\cL_e^\dagger\|_{2\to2}$ for $ \mathbf P\in\cP_e^\circ$. Therefore
    \begin{equation*}
        \Tr(X_{\operatorname{true}}^e)
        =
        \sum_{\mathbf P\in\cP_e^\circ}
        X^e_{\mathbf P,\mathbf P}
        \le
        (4^r-1)\|\cL_e^\dagger\|_{2\to2}
        \le
        (4^k-1)\|\cL_e^\dagger\|_{2\to2}
        \le
        \gamma.
    \end{equation*}
    Next, we set
    \begin{equation*}
        R_{n,k}:=|\mathcal R_{n,\le k}|
        =
        \sum_{\ell=1}^{k}\binom{n}{\ell}=\mathcal{O}(n^k)
    \end{equation*}
    We take $R_*:=\gamma R_{n,k}=\mathcal O_{k,\alpha}(n^k)$. Assume moreover that the PTM inversion satisfies
    \begin{equation*}
        \|\widehat G-G\|_\infty
        =
        \|\widehat G-\cA(X_{\operatorname{true}})\|_\infty
        \le
        \varepsilon_{\chi} .
    \end{equation*}
    Define
    \begin{equation*}
        \eta_0
        :=
        \frac{3\gamma}{2}R_{n,k}
        +
        \varepsilon_{\chi}
        +
        1,
    \end{equation*}
    and choose, for instance,
    \begin{equation*}
        \eta_*:=\eta_0+1,
        \qquad
        y_*:=2\eta_0+2.
    \end{equation*}
    This choice of parameters ensures that the true matrix $X_{\operatorname{true}}$ is a feasible point, and the optimal value satisfies $\eta_{\operatorname{opt}}\le \varepsilon_{\chi}$. Moreover, any feasible point $\widehat X$ with objective value at most $\eta_{\operatorname{opt}}+\varepsilon_{\operatorname{SDP}}$ satisfies
    \begin{equation*}
        \|\cA(\widehat X)-\widehat G\|_\infty
        \le
        \eta_{\operatorname{opt}}+\varepsilon_{\operatorname{SDP}}
        \le
        \varepsilon_{\chi}
        +
        \varepsilon_{\operatorname{SDP}}.
    \end{equation*}
    Combining this estimate with $\|\widehat G-G\|_\infty\leq\varepsilon_\chi$ and applying the triangle inequality gives
    \begin{equation*}
        \|\cA(\widehat X)-G\|_\infty
        \leq
        \|\cA(\widehat X)-\widehat G\|_\infty
        +
        \|\widehat G-G\|_\infty
        \leq
        2\varepsilon_\chi+\varepsilon_{\operatorname{SDP}}.
    \end{equation*}
    Thus one may take $C_{\operatorname{SDP}}=2$ in Proposition \ref{prop:sdp-projection-runtime}. In Proposition \ref{prop:sdp-projection-runtime}, we will make use of the notion of barrier parameter $\nu$ of a cone $K$. We recall that a barrier for a closed convex cone $K$ is a convex function $F:\operatorname{int}(K)\to\mathbb R$ such that $    F(x_j)\to+\infty$ for every sequence of elements $x_j\in\operatorname{int}(K)$ converging to a boundary point of $K$. In interior-point methods, one typically uses a self-concordant logarithmically homogeneous barrier. Its barrier parameter $\nu$ is defined by the homogeneity relation
    \begin{equation*}
        F(tx)=F(x)-\nu\log t,
        \qquad
        \forall x\in\operatorname{int}(K),\quad t>0 .
    \end{equation*}
    For instance, the standard barrier for the positive semidefinite cone $\mathbb{M}_d(\mathbb{R})_+$ of positive real-symmetric $d\times d$ matrices is
    \begin{equation*}
        F(X)=-\log\det X,
        \qquad X\succ0,
    \end{equation*}
    and it has parameter $d$, since
    \begin{equation*}
        -\log\det(tX)
        =
        -\log(t^d\det X)
        =
        -d\log t-\log\det X.
    \end{equation*}
    Similarly, the standard barrier for $\mathbb R_+^M$ is
    \begin{equation*}
        F(z)=-\sum_{j=1}^M \log z_j,
    \end{equation*}
    and it has parameter $M$. Now, the product cone associated to the SDP \eqref{eq:sdp-projection} is of the form
    \begin{equation*}
        K
        =
        \prod_{e\in\mathcal R_{n,\le k}}
        \mathbb{M}_{d_e}(\mathbb{R})_+
        \times
        \mathbb R_+^{M},
    \end{equation*}
    where $d_e:=|\cP_e^\circ|\le 4^k-1$, and where $M$ is the number of affine constraints, which is bounded by the two equality constraints defining $y_i^+$ and $y_i^-$ for each $i\in\mathcal I$, together with the scalar trace and box constraints. Thus we may take $M\le4|\mathcal I|+2=\mathcal{O}(n^k)$. The standard product barrier on $K$ is
    \begin{equation*}
        F\bigl((X^e)_{e\in\mathcal R_{n,\le k}},z\bigr)
        =
        -\sum_{e\in\mathcal R_{n,\le k}}\log\det X^e
        -
        \sum_{j=1}^{M}\log z_j,
    \end{equation*}
    for
    \begin{equation*}
        X^e\in\operatorname{int}(\mathbb{M}_{d_e}(\mathbb{R})_+),
        \qquad
        z=(z_1,\ldots,z_{M})\in\mathbb R_{++}^{M}.
    \end{equation*}
    The standard barrier parameter $\nu$ is the sum of the PSD block sizes plus the number of scalar nonnegative variables, so in our case
    \begin{align}\label{eq:nubound}
        \nu
        =\mathcal{O}(n^k).
    \end{align}
    Finally, we will make use of the initial complementarity parameter
    \begin{equation*}
        \mu_0
        :=
        \frac{\langle X_0,S_0\rangle}{\nu},
    \end{equation*}
    where $X_0$ is the initial primal slack variable, $S_0$ is the initial dual slack variable, $\langle X_0,S_0\rangle$ denotes the cone inner product, and $\nu$ is the barrier parameter of the underlying cone. Equivalently, for a product cone with PSD blocks and scalar nonnegative variables,
    \begin{equation*}
        \langle X_0,S_0\rangle
        =
        \sum_e \Tr(X_0^e S_0^e)
        +
        \sum_j z_{0,j}s_{0,j}.
    \end{equation*}

    \begin{proof}[Proof of Proposition \ref{prop:sdp-projection-runtime}]

        We first prove compactness of the feasible set and a quantitative strict feasibility. First, the trace
        constraint $R_*$ as well as the box constraints $0\le\eta\le\eta_*$,
        $0\le y_i^\pm\le y_*$ imply boundedness. Closedness is immediate from the
        closedness of the PSD cones and affine constraints. Hence the feasible set is
        compact. To construct a strictly feasible primal point, let
        \begin{equation*}
            d_{\max}:=\max_{e\in\mathcal R_{n,\le k}}d_e
            \le
            4^k-1,
            \qquad
            \alpha_0:=\frac{\gamma}{2d_{\max}},
        \end{equation*}
        and set
        \begin{equation*}
            X_{\operatorname{feas}}^e:=\alpha_0 I_e,
            \qquad
            e\in\mathcal R_{n,\le k}.
        \end{equation*}
        Then $X_{\operatorname{feas}}^e\succ0$, and
        \begin{align*}
            \sum_{e\in\mathcal R_{n,\le k}}
            \Tr(X_{\operatorname{feas}}^e)=
            \alpha_0\sum_{e\in\mathcal R_{n,\le k}}d_e   \le
            \alpha_0 d_{\max}R_{n,k}                         =
            \frac{\gamma}{2}R_{n,k}
            <
            R_*.
        \end{align*}
        Next, we claim that the choice $\eta=\eta_0$ gives strict positivity for the corresponding $y^\pm$. Indeed, for any
        $(\mathbf P,\mathbf Q)$, the number of regions
        $e\in\mathcal R_{n,\le k}$ containing
        $\operatorname{supp}(\mathbf P)\cup\operatorname{supp}(\mathbf Q)$ is at most
        $R_{n,k}$. Since $X_{\operatorname{feas}}^e=\alpha_0 I_e$,
        \begin{equation*}
            |\cA(X_{\operatorname{feas}})_{\mathbf P,\mathbf Q}|
            \le
            \frac{\gamma}{2}R_{n,k}.
        \end{equation*}
        Similarly, using the trace bound for $X_{\operatorname{true}}$, we have, uniformly in $(\mathbf{P},\mathbf{Q})$,

        \begin{equation*}
            \begin{aligned}
                \left|G_{\mathbf P,\mathbf Q}\right|
                &=
                \left|\cA(X_{\operatorname{true}})_{\mathbf P,\mathbf Q}\right| \\
                &=
                \left|
                \sum_{\substack{
                e\in\mathcal R_{n,\le k}:\\
                \operatorname{supp}(\mathbf P)\cup\operatorname{supp}(\mathbf Q)\subseteq e
                }}
                (X_{\operatorname{true}}^e)_{\mathbf P_e,\mathbf Q_e}
                \right| \\
                &\le
                \sum_{\substack{
                e\in\mathcal R_{n,\le k}:\\
                \operatorname{supp}(\mathbf P)\cup\operatorname{supp}(\mathbf Q)\subseteq e
                }}
                \left|
                (X_{\operatorname{true}}^e)_{\mathbf P_e,\mathbf Q_e}
                \right| \\
                &\le
                \sum_{\substack{
                e\in\mathcal R_{n,\le k}:\\
                \operatorname{supp}(\mathbf P)\cup\operatorname{supp}(\mathbf Q)\subseteq e
                }}
                \sqrt{
                (X_{\operatorname{true}}^e)_{\mathbf P_e,\mathbf P_e}
                (X_{\operatorname{true}}^e)_{\mathbf Q_e,\mathbf Q_e}
                } \\
                &\le
                \sum_{\substack{
                e\in\mathcal R_{n,\le k}:\\
                \operatorname{supp}(\mathbf P)\cup\operatorname{supp}(\mathbf Q)\subseteq e
                }}
                \Tr(X_{\operatorname{true}}^e) \\
                &\le
                \gamma\,
                \#\left\{
                e\in\mathcal R_{n,\le k}:
                \operatorname{supp}(\mathbf P)\cup\operatorname{supp}(\mathbf Q)\subseteq e
                \right\} \\
                &\le
                \gamma\,|\mathcal R_{n,\le k}| \\
                &=
                \gamma R_{n,k}.
            \end{aligned}
        \end{equation*}
        Thus $\|\widehat G\|_\infty \le \gamma R_{n,k} + \varepsilon_{\chi}$, and therefore
        \begin{equation*}
            \|\cA(X_{\operatorname{feas}})-\widehat G\|_\infty
            \le\|\cA(X_{\operatorname{feas}})\|_\infty+\|\widehat G\|_\infty\le
            \frac{3\gamma}{2}R_{n,k}
            +
            \varepsilon_{\chi}
            =
            \eta_0-1.
        \end{equation*}
        Next, we define $y_i^+ = \cA_i(X_{\operatorname{feas}})-\widehat g_i+\eta_0$, and $    y_i^- = \widehat g_i-\cA_i(X_{\operatorname{feas}})+\eta_0$, so that $y_i^\pm\ge1$ and $ y_i^\pm<2\eta_0<y_*$. Also $0<\eta_0<\eta_*$. These bounds yield a strictly feasible primal point. A valid primal Slater margin is, for example,
        \begin{equation*}
            r
            :=
            \min\left\{
            \frac{\gamma}{2(4^k-1)},
            1
            \right\}.
        \end{equation*}
        For fixed $k$ and constant $\gamma$, this lower margin is independent of $n$, while the upper bounds $\eta_*,y_*,R_*$ grow at most polynomially in $n$.

        Next, we derive a strictly feasible dual point. The Lagrange dual of \eqref{eq:sdp-projection} is

        \begin{equation*}
            \begin{aligned}
                \max_{\lambda^\pm,\rho,\sigma,\tau,u^\pm,v^\pm}
                \quad&
                \sum_{i\in\mathcal I}\lambda_i^+\widehat g_i
                -
                \sum_{i\in\mathcal I}\lambda_i^-\widehat g_i
                -
                \rho R_*
                -
                \sigma \eta_*
                -
                y_*\sum_{i\in\mathcal I}(u_i^+ + u_i^-)
                \\
                \textnormal{subject to}\quad&
                \rho I_e
                +
                (\cA_e)^*(\lambda^- - \lambda^+)
                \succeq 0,
                \qquad
                \forall e\in\mathcal R_{n,\le k},
                \\
                &
                1-\sum_{i\in\mathcal I}(\lambda_i^++\lambda_i^-)
                +\sigma-\tau=0,
                \\
                &
                \lambda_i^+ + u_i^+ - v_i^+ =0,
                \qquad
                \forall i\in\mathcal I,
                \\
                &
                \lambda_i^- + u_i^- - v_i^- =0,
                \qquad
                \forall i\in\mathcal I,
                \\
                &
                \rho,\sigma,\tau,u_i^\pm,v_i^\pm\ge0.
            \end{aligned}
        \end{equation*}
        Taking
        \begin{equation*}
            \lambda_i^+=\lambda_i^-=0,
            \qquad
            \rho=1,
            \qquad
            \sigma=1,
            \qquad
            \tau=2,
            \qquad
            u_i^\pm=v_i^\pm=1
        \end{equation*}
        gives a strictly feasible dual point, since the PSD constraints become $I_e\succ0$. Hence both primal and dual Slater conditions hold, so strong duality holds and the optimum is attained.

        It remains to bound the runtime. By the standard short-step interior-point complexity bound for semidefinite programming \cite[Theorem 5.6.1]{deKlerk2002SDP} (see also \cite{NesterovNemirovski1994}), the number of iterations needed to reach duality gap at most $\varepsilon_{\operatorname{SDP}}$ is
        \begin{equation*}
            \mathcal O\left(
            \sqrt{\nu}
            \log\frac{\nu\mu_0}{\varepsilon_{\operatorname{SDP}}}
            \right),
        \end{equation*}
        where $\mu_0$ is the initial complementarity parameter. In Equation~\eqref{eq:nubound} we already argued that $\nu=\mathcal{O}(n^k)$. Moreover, the explicit primal and dual strictly feasible points constructed above have inverse-polynomial Slater margin and polynomial norm. More explicitly, with our choices for $\eta_*$ and $y_*$ one has $\mu_0 = \mathcal O_k\!\left( \gamma n^{2k} + \varepsilon_{\chi} n^k \right)$, for fixed $k$. Hence the logarithmic factor is
        \begin{equation*}
            \log\frac{\nu\mu_0}{\varepsilon_{\operatorname{SDP}}}
            =
            \mathcal O_k\!\left(
            \log\frac{\operatorname{poly}(n,\gamma,\varepsilon_{\chi})}
            {\varepsilon_{\operatorname{SDP}}}
            \right).
        \end{equation*}
        Finally, using a standard dense SDP implementation, each interior-point iteration is dominated by solving the Newton system. With $M$ affine constraints and total PSD block size $D:=\sum_{e\in\mathcal{R}_{n,\le k}}d_e$, the dense arithmetic cost per iteration is \cite[Section~5]{VandenbergheBoyd1996SDP}
        \begin{equation*}
            \mathcal O\left(
            MD^3+M^2D^2+M^3
            \right).
        \end{equation*}
        Since $M,D=\mathcal O_k(n^k)$, this cost is $    \mathcal O_k(n^{4k})$. Multiplying by the iteration count gives
        \begin{equation*}
            \mathcal O_k\left(
            n^{k/2}n^{4k}
            \log\frac{\operatorname{poly}(n,\gamma,\varepsilon_{\chi})}
            {\varepsilon_{\operatorname{SDP}}}
            \right)
            =
            \widetilde{\mathcal O}_k\left(
            n^{\frac{9k}{2}}
            \log\frac{1}{\varepsilon_{\operatorname{SDP}}}
            \right),
        \end{equation*}
        as claimed.
    \end{proof}

    \noindent Finally, we turn our attention to the task of estimating the unknown generator $\cL$ in diamond norm. After solving \eqref{eq:sdp-projection} and obtaining the PSD block matrix $\widehat{X}$, by spectral decomposition, for each $e$ we can write
    \begin{equation*}
        \widehat X^e
        =
        \sum_{a} \widehat \ell_{e,a}\widehat \ell_{e,a}^{\dagger},
    \end{equation*}
    where $\widehat \ell_{e,a}\in\mathbb C^{\cP_e^\circ}$. This gives local jump operators
    \begin{equation*}
        \widehat L_{e,a}
        :=
        \sum_{\mathbf P\in\cP_e^\circ}
        (\widehat \ell_{e,a})_{\mathbf P}\mathbf P.
    \end{equation*}
    Together with $\widehat H := \sum_{\mathbf P\in\mathcal P_{n,\le k}} \widehat h_{\mathbf P}\mathbf P$, we obtain the valid $k$-local Lindblad generator
    \begin{equation}\label{eq:projected-lindbladian}
        \widehat\cL
        :=
        -i[\widehat H,\bullet]
        +
        \sum_{e\in \mathcal{R}_{n,\le k}}\sum_a
        \left(
        \widehat L_{e,a}\bullet \widehat L_{e,a}^\dagger
        -
        \frac12
        \{\widehat L_{e,a}^\dagger \widehat L_{e,a},\bullet\}
        \right).
    \end{equation}
    \noindent We denote by $\widehat G:=\cA(\widehat X)$ the projected dissipative matrix and by $\widehat{\chi}$ the associated $\chi$-matrix. The construction of $\widehat{\cL}$ from the SDP solution has only polynomial overhead for fixed $k$. Indeed, for each $e\in\mathcal R_{n,\le k}$, the block $\widehat X^e$ has size
    \begin{equation*}
        d_e:=|\cP_e^\circ|\le 4^k-1 .
    \end{equation*}
    Computing its spectral decomposition to accuracy $\varepsilon_{\operatorname{diag}}$ costs $\mathcal O(d_e^3 \log\frac{1}{\varepsilon_{\operatorname{diag}}})$ arithmetic operations. Hence, setting the accuracy to $\varepsilon_{\operatorname{SDP}}$, the total cost of diagonalizing all local blocks is bounded by
    \begin{equation*}
        \sum_{e\in\mathcal R_{n,\le k}}\mathcal O(d_e^3\log(\varepsilon_{\operatorname{SDP}}^{-1}))
        =
        \mathcal O\!\left((4^k-1)^3|\mathcal R_{n,\le k}|\log(\varepsilon_{\operatorname{SDP}}^{-1})\right)
        =
        \mathcal O_k(n^k\log(\varepsilon_{\operatorname{SDP}}^{-1})).
    \end{equation*}
    The subsequent construction of the local jump operators $\widehat L_{e,a}$ and of the Hamiltonian $\widehat H$ also requires $\mathcal O_k(n^k)$ arithmetic operations. Therefore this final postprocessing step is negligible compared with the SDP solve: the total runtime remains
    \begin{equation*}
        \widetilde{\mathcal O}_k\!\left(
        n^{\frac{9k}{2}}
        \log\frac{1}{\varepsilon_{\operatorname{SDP}}}
        \right).
    \end{equation*}
    We next translate coefficient error into a diamond-norm error for the generator. Let
    \begin{equation*}
        N_G:=|\{(\mathbf P,\mathbf Q):|\supp(\mathbf P)\cup\supp(\mathbf Q)|\le k,\ \mathbf P,\mathbf Q\neq I\}|.
    \end{equation*}

    \begin{proof}[Proof of Lemma \ref{lem:coeff-to-diamond}]
        For a Pauli string $\mathbf P$,
        \begin{equation*}
            \|[\mathbf P,\bullet]\|_\diamond
            \le
            \|\mathbf P\bullet\|_\diamond+\|\bullet\mathbf P\|_\diamond
            =
            2.
        \end{equation*}
        For the dissipative basis element
        \begin{equation*}
            \Phi_{\mathbf P,\mathbf Q}
            :=
            \mathbf P\bullet\mathbf Q
            -
            \frac12\{\mathbf Q\mathbf P,\bullet\},
        \end{equation*}
        we have
        \begin{equation*}
            \|\mathbf P\bullet\mathbf Q\|_\diamond=1,
            \qquad
            \|\mathbf Q\mathbf P\bullet\|_\diamond=1,
            \qquad
            \|\bullet\mathbf Q\mathbf P\|_\diamond=1,
        \end{equation*}
        and hence
        \begin{equation*}
            \|\Phi_{\mathbf P,\mathbf Q}\|_\diamond
            \le
            1+\frac12+\frac12
            =
            2.
        \end{equation*}
        The claim follows by the triangle inequality.
    \end{proof}

    \begin{proof}[Proof of Theorem \ref{thm:full-lindblad-learning}]
        By Theorem \ref{thm:overall-coefficient-learning}, given a target accuracy $\varepsilon_\chi$, the learned coefficients satisfy the stated entrywise bounds with probability at least $1-\delta$.  By Proposition~\ref{prop:sdp-projection-runtime}, assuming that we run the SDP and diagonalization both with target precision $\varepsilon_{\operatorname{SDP}}$
        \begin{equation*}
            \|\widehat{\chi}-\chi\|_\infty
            \le
            C_{\operatorname{SDP}}\bigl(\varepsilon_{\operatorname{SDP}}+\varepsilon_{\chi}\bigr).
        \end{equation*}
        Applying Lemma \ref{lem:coeff-to-diamond} with $h'=\widehat h$ and $G'=\widehat G$ gives
        \begin{equation*}
            \|\widehat\cL-\cL\|_\diamond
            =\mathcal{O}(n^k (\varepsilon_{\chi}+\varepsilon_{\operatorname{SDP}})).
        \end{equation*}
        The target diamond-norm error is then obtained by rescaling $\varepsilon_{\operatorname{SDP}}=\varepsilon_{\chi}=\mathcal{O}(\varepsilon_\diamond/n^{k})$. Finally, the sample and computational complexities directly follow from Theorem \ref{thm:overall-coefficient-learning} and Proposition \ref{prop:sdp-projection-runtime}.

    \end{proof}

\bibliographystyle{unsrturl}
\bibliography{refs}

\end{document}